\newcommand{\ha}[1]{
		\textcolor{blue}{\textbf{HA:} {#1}}
}
\newcommand{\dg}[1]{
		\textcolor{red}{\textbf{DG:} {#1}}
}
\title{Private Stochastic Convex Optimization with Heavy Tails: \\ Near-Optimality from Simple Reductions}
\author{
Hilal Asi\thanks{Apple Inc. \texttt{hilal.asi94@gmail.com}.}  \quad \quad 
Daogao Liu\thanks{University of Washington. \texttt{liudaogao@gmail.com}. Part of this work was done while interning at Apple.}
\quad \quad 
Kevin Tian\thanks{
University of Texas at Austin.
\texttt{kjtian@cs.utexas.edu}.
}
}
\date{}
\begin{document}

\maketitle

\begin{abstract}
We study the problem of differentially private stochastic convex optimization (DP-SCO) with heavy-tailed gradients, where we assume a $k^{\text{th}}$-moment bound on the Lipschitz constants of sample functions rather than a uniform bound. We propose a new reduction-based approach that enables us to obtain the first optimal rates (up to logarithmic factors) in the heavy-tailed setting, achieving error $G_2 \cdot \frac 1 {\sqrt n} + G_k \cdot (\frac{\sqrt d}{n\eps})^{1 - \frac 1 k}$ under $(\eps, \delta)$-approximate differential privacy, up to a mild $\textup{polylog}(\frac{1}{\delta})$ factor, where $G_2^2$ and $G_k^k$ are the $2^{\text{nd}}$ and $k^{\text{th}}$ moment bounds on sample Lipschitz constants, nearly-matching a lower bound of \cite{LowyR23}.

We further give a suite of private algorithms in the heavy-tailed setting which improve upon our basic result under additional assumptions, including an optimal algorithm under a known-Lipschitz constant assumption, a near-linear time algorithm for smooth functions, and an optimal linear time algorithm for smooth generalized linear models.

%via optimal algorithms for the standard uniformly Lipschitz setting. Our algorithms for the general non-smooth setting obtain the rate $ \frac{ G_2 }{\sqrt{n}} + G_k \cdot \left(\frac{\sqrt{d \log(1/\delta)}}{n\varepsilon}\right)^{1 - \frac 1 k}$ up to logarithmic factors where $G_2$ and $G_k$ are upper bounds on the second and $k$'th moment of the gradients, respectively. Moreover, when the Lipschitz constant of each function is known, we devise an alternative reduction that eliminates the logarithmic factors and attains optimal rates. Finally, we develop near-linear time algorithms for the smooth setting via localization-based reductions with clipped-DP-SGD. 
\end{abstract}

% !TEX root = main.tex

\section{Introduction}\label{sec:intro}

Differentially private stochastic convex optimization (DP-SCO), where an algorithm aims to minimize a population loss given samples from a distribution, is a fundamental problem in statistics and machine learning. In this problem, given $n$ samples from a distribution $ \calP$ over a sample space $\calS$, our goal is to privately find an approximate minimizer $\hat x \in \xset \subset \R^d$ for the population loss
\begin{equation*}
\Fpop(x) \defeq \E_{s \sim \calP}\Brack{f(x; s)},
\end{equation*}
where $f(\cdot;s)$ is a convex function for all $s \in \calS$.
The quality of an algorithm is measured by the excess population loss of its output $\hat x$, that is $\Fpop(\hat x) - \min_{x\opt \in \xset} \Fpop(x\opt)$.

Extensive research efforts have been devoted to DP-SCO, resulting in important progress over the past few years~\cite{BassilyFeTaTh19,FeldmanKoTa20,AsiFeKoTa21,BassilyGuNa21,AsiLeDu21,kll21}. In an important milestone,
\cite{BassilyFeTaTh19} developed optimal algorithms (in terms of the excess population loss) for DP-SCO under a uniform Lipschitz assumption (i.e., where every $f(\cdot; s)$ is assumed to have the same Lipschitz bound), and~\cite{FeldmanKoTa20} followed this result with efficient and optimal algorithms that run in linear time for smooth functions. DP-SCO has also been explored in other notable settings, including developing faster algorithms for non-smooth settings~\cite{AsiFeKoTa21,kll21,CarmonJJLLST23}, different geometries imposed on the solution space~\cite{AsiFeKoTa21,BassilyGuNa21, gopi2023private}, and different notions of privacy~\cite{AsiLeDu21}.

Most existing results in DP-SCO are based on the assumption that the function $f(\cdot;s)$ is uniformly $G$-Lipschitz for all $s \in \calS$. This assumption is convenient for private algorithm design because it allows us to straightforwardly bound the \emph{sensitivity} of iterates of private algorithms, i.e., how far a pair of iterates defined via algorithms induced by neighboring datasets drift apart. 
Under the uniform Lipschitz assumption, the DP-SCO problem is relatively well-understood, as optimal and efficient algorithms exist (sometimes requiring additional regularity assumptions)~\cite{BassilyFeTaTh19,FeldmanKoTa20}.\footnote{One notable exception is the lack of linear-time algorithms in the non-smooth setting.} State-of-the-art SCO algorithms satisfying $(\eps,\delta)$-differential privacy (Definition~\ref{def:diffp}) in the uniform Lipschitz setting result in excess population loss
\begin{equation}\label{eq:uniform_lip_loss}GD \cdot \Par{\frac{1}{\sqrt{n}} + \frac{\sqrt{d\log(\frac 1 \delta)}}{\eps n}},\end{equation}
where $D$ is the diameter of $\xset$. However, the assumption of uniformly $G$-Lipschitz gradients is strong, and may be violated in real-life applications where the distribution in question has heavy tails (see e.g.\ discussion in \cite{AbadiChGoMcMiTaZh16}). As a simple motivating example, consider mean estimation, where each $f(\cdot; s) = \half\norm{\cdot - s}^2$, so the minimizer of $\Fpop$ is the population mean. The uniform Lipschitz requirement amounts to $\calP$ having a bounded support, whereas an algorithm that can handle heavy tails only posits the weaker assumuption that $\calP$ has $k$ bounded moments. 
As a result, existing algorithms for DP-SCO may have overly-pessimistic performance bounds when $G$ is large or even unbounded, necessitating the search for new private algorithms handling heavy-tailed gradients.

Motivated by this weakness of existing DP-SCO analyses, several papers studied the problem of DP-SCO with heavy-tailed gradients~\cite{WangXiDeXu20,AsiDuFaJaTa21,KamathLiZh22,LowyR23}, formally defined in Definition~\ref{def:psco}. Rather than assuming uniformly-Lipschitz gradients, this line of work builds on the more realistic assumption that the norm of the gradients has bounded $k^{\text{th}}$-moments. In particular, \cite{AsiDuFaJaTa21} studied heavy-tailed private optimization for the related empirical loss, while~\cite{WangXiDeXu20} initiated an analogous study for the population loss. More recently, \cite{KamathLiZh22,LowyR23} also proposed algorithms to solve the heavy-tailed DP-SCO problem based on clipped stochastic gradient methods.

Despite significant progress in addressing heavy-tailed DP-SCO, it remains notably less understood than the uniformly Lipschitz setting. As a benchmark, under a notion called $\rho$-concentrated differential privacy (CDP, see Definition~\ref{def:tcdp}), which translates to $(\eps, \delta)$-DP for $\rho \approx \eps^2\log^{-1}(\frac 1 \delta)$, \cite{LowyR23} established that the best excess population loss achievable scales as
\begin{equation}\label{eq:lower_bound_ht}\Omega\left(G_2 D \cdot \frac 1 {\sqrt n} + G_k D \cdot \Par{\frac{\sqrt d}{n\sqrt \rho}}^{1 - \frac 1 k}\right),\end{equation}
where $G_j^j$ is the $j^{\text{th}}$ moment bound on the Lipschitz constant of sampled functions, see Definition~\ref{def:psco}. Note that as $k \to \infty$, the rate in \eqref{eq:lower_bound_ht} recovers the uniform Lipschitz rate in \eqref{eq:uniform_lip_loss}.

Unfortunately, existing works on heavy-tailed DP-SCO assume stringent conditions on problem parameters and are suboptimal in the general case. For example, \cite{KamathLiZh22} requires the loss functions to be uniformly smooth with various parameter bounds in order to guarantee optimal rates, while the recent work~\cite{LowyR23} obtains a suboptimal rate scaling as\footnote{The rate in \cite{LowyR23} is stated slightly differently (see Theorem 6 in that work), as they parameterize their error bound via $G_{2k}$ despite assuming only $k$ bounded moments. However, under the assumption that $G_{2k}$ is finite (so the \cite{LowyR23} result is usable), the optimal rate scales as in \eqref{eq:lower_bound_ht} where $k$ is replaced with $2k$, leaving a polynomial gap.} $G_2 D \cdot \frac 1 {\sqrt n} + G_k D \cdot (\frac{\sqrt d}{n \sqrt \rho})^{1 - \frac 2 k}$, which is worse than \eqref{eq:lower_bound_ht} by polynomial factors in the dimension for any constant $k$.

%requires a strong condition on the growth of the moments. \kjtian{When we talk about \cite{LowyR23}, can we just directly cite their result as depending on $2k$ rather than $k$ in the dimension stuff? We can mention this $G_{2k}$ assumption discrepancy in a comment, but I think for fair comparison it's clearer to just present it as being suboptimal by polynomial factors.}

\subsection{Our contributions}

Motivated by the suboptimality of existing results for heavy-tailed DP-SCO, we develop the first algorithm for this problem, which achieves the optimal rate \eqref{eq:lower_bound_ht} up to logarithmic factors with no additional assumptions. Along the way, we give several simple reduction-based tools for overcoming technical barriers encountered by prior works. To state our results (deferring a formal problem statement to Definition~\ref{def:diffp}), we assume that for some $k \ge 2$ and all $j \in [k]$, we have
\[\E_{s \sim \calP}\Brack{\max_{x \in \xset}\norm{\nabla f(x; s)}^j} \le G_j^j.\]
Our results hold in several settings and are based on different reductions, allowing us to apply DP-SCO strategies from the uniform Lipschitz setting.

\paragraph{Near-optimal rates for heavy-tailed DP-SCO (\Cref{sec:lose_log}).}
We design an algorithm for the $k$-heavy-tailed DP-SCO problem, which satisfies $\rho$-CDP\footnote{We state the privacy guarantee of most of our results, save our algorithm in Section~\ref{sec:smooth} which employs the sparse vector technique of \cite{DworkNRRV09, DworkRo14}, in terms of CDP, for simpler comparison to the lower bound \eqref{eq:lower_bound_ht}.} and attains near-optimal excess loss
\begin{equation}\label{eq:our_rate} G_2 D \cdot \sqrt{\frac{\log\Par{\frac{1}{\delta}}}{n}} + G_k D\cdot\Par{\frac{\sqrt{d}\log\Par{\frac{1}{\delta}}}{n\sqrt \rho}}^{1 - \frac 1 k} ,\end{equation}
with probability $\ge 1 - \delta$.
This matches the lower bounds recently proved by~\cite{KamathLiZh22,LowyR23} for $\rho$-concentrated DP algorithms up to $\textup{polylog}(\frac 1 \delta)$ factors, stated in \eqref{eq:lower_bound_ht}. Standard conversions from CDP to $(\eps,\delta)$-DP imply that our algorithm also obtains loss $\approx G_2 D \cdot \sqrt{\frac {\log(1/\delta)} n} + G_k D \cdot (\frac{\sqrt{d\log^3(1/\delta)}}{n\eps})^{1 - \frac 1 k}$ under this parameterization. We note that our bound \eqref{eq:our_rate} holds with high probability $\ge 1 - \delta$, whereas the lower bound \eqref{eq:lower_bound_ht} is for an error which holds only in expectation (see Theorem 13, \cite{LowyR23}). Our lossiness in \eqref{eq:our_rate} is due to a natural sample-splitting strategy used to boost our failure probability, and we conjecture that \eqref{eq:our_rate} may be optimal in the high-probability error bound regime.

As in \cite{LowyR23}, to establish our result we begin by deriving utility guarantees for a clipped stochastic gradient descent subroutine on an empirical loss, where clipping ensures privacy but induces bias, parameterized by a dataset-dependent quantity $b_{\calD}^2$ defined in \eqref{eq:bd_def}. We give a standard analysis of this subroutine in Proposition~\ref{prop:dp-erm}, a variant of which (with slightly different parameterizations) also appeared as Lemma 27, \cite{LowyR23}. However, the key technical barrier encountered by the \cite{LowyR23} analysis, when converting to population risk, was bounding $\E b_{\calD}^2$ over the sampled dataset, which na\"ively depends on the $2k^{\text{th}}$ moment of gradients. This either incurs an overhead depending on $G_{2k}$, or in the absence of such a bound (which is not given under the problem statement), leads to the aforementioned suboptimal rate in \cite{LowyR23} losing a factor of $(\frac{\sqrt d}{n\sqrt{\rho}})^{\frac 1 k}$ in the utility. We give a further discussion of natural strategies and barriers towards directly bounding $\E b_\data^2$ in Appendix~\ref{sec:variancedecay}.

Where we depart from the strategy of \cite{LowyR23} is in the use of a new \emph{population-level localization} framework we design (see Algorithm~\ref{alg:sco}), inspired by similar localization techniques in prior work \cite{FeldmanKoTa20} (discussed in more detail in Section~\ref{ssec:related}). This strategy allows us to use constant-success probability bounds on the quantity $b_\data$ (which also bound $b_\data^2$), which are easy to achieve depending only on $G_k$ rather than $G_{2k}$ via Markov's inequality. This bypasses the need in \cite{LowyR23} for bounding $\E b_\data^2$.  We then apply a simple geometric aggregation technique, showing that it suffices for a constant fraction of datasets to have this desirable property for us to carry out our population-level localization argument. We formally state our main result achieving the rate \eqref{eq:our_rate} as Theorem~\ref{thm:ht_psco}.

Interestingly, as a straightforward corollary of our new localization framework, we achieve a tight rate for high-probability stochastic convex optimization under a bounded-variance gradient estimator parameterization, perhaps the most well-studied formulation of SCO. To our knowledge, this result was only first achieved very recently by \cite{CarmonH24}.\footnote{We mention that an alternative route to obtaining a near-optimal high-probability rate was given slightly earlier in \cite{SidfordZ23}, but lost a polylogarithmic factor in the failure probability. We also wish to acknowledge that in an independent and concurrent work \cite{JambulapatiST24} involving the third author, the authors slightly sharpened and generalized the result of \cite{SidfordZ23}, which inspired us to consider this application of our population-level localization framework.} However, we find it a promising proof-of-concept that our new framework directly yields the same result. For completeness, we include a derivation in Appendix~\ref{sec:nonprivate} (see Theorem~\ref{thm:sco_nonprivate}) as a demonstration of the utility of our framework.

%combined with a geometric aggregation strategy for boosting weak subproblem solvers to succeed with high probability. The main ingredient in our algorithm is clipped DP-SGD which basically applies the DP-SGD update over the clipped gradients. We then show that we can boost its performance using a geometric aggregation strategy. Finally, we use the new algorithm in a new  {population-level localization} framework that extends the localization framework of~\cite{FeldmanKoTa20} to directly argue about contraction for population-level minimizers rather than empirical minimizers.

\paragraph{Optimal rates with known Lipschitz constants (\Cref{sec:lip_known}).} We next consider the \emph{known Lipschitz} setting, where each sample function $f(\cdot; s)$ arrives with a value $\bL_s$ which is an overestimate of its Lipschitz constant, such that $\E \bL_s^j$ is bounded for all $j \in [k]$ (see Assumption~\ref{assume:kl_k_ht}). As motivation, consider the problem of learning a generalized linear model (GLM), where $f(\cdot; s) = \sigma(\inprod{\cdot}{s})$ for a known convex activation function $\sigma$. Typically, the Lipschitz constant for $f(\cdot; s)$ is simply the Lipschitz constant of $\sigma$ times $\norm{s}$, which can be straightforwardly calculated. Thus, for GLMs, our known Lipschitz heavy-tailed assumption amounts to moment bounds on the distribution $\calP$.

Our second result, Theorem~\ref{thm:known_lip_reduce_optimal}, shows a natural strategy obtains optimal rates in this known Lipschitz setting, eliminating logarithmic factors from Theorem~\ref{thm:ht_psco}. As mentioned previously, this result applies to the important family of GLMs.
Our algorithm is based on a straightforward reduction to the uniformly Lipschitz setting: after simply iterating over the input samples, and replacing samples whose Lipschitz constant exceeds a given threshold with a new dummy sample, we show existing Lipschitz DP-SCO algorithms then obtain the optimal heavy-tailed excess population loss \eqref{eq:lower_bound_ht}. Despite the simplicity of this result, to the best of our knowledge, it was not previously known.

\paragraph{Efficient algorithms for smooth functions (Sections~\ref{sec:smooth} and~\ref{sec:smooth_glm}).}
Finally, we propose algorithms with improved query efficiency for general smooth functions or smooth GLMs, with moderate smoothness bounds. Our strategy is to analyze the stability of clipped-DP-SGD in the smooth heavy-tailed setting, and use localization-based reductions to transform a stable algorithm into a private one~\cite{FeldmanKoTa20}. This results in linear-time algorithms for the smooth case with near-optimal rates. In order to prove the privacy of our smooth, heavy-tailed algorithm, we analyze a careful interplay of our clipped stochastic gradient method with the sparse vector technique (SVT) \cite{DworkNRRV09, DworkRo14}. At a high level, our use of SVT comes from the fact that under clipping, smooth gradient steps no longer enjoy the type of contraction guarantees applicable in the uniform Lipschitz setting (see Fact~\ref{fact:contract_smooth}), so we must take care not to clip too often. The SVT is then used to ensure privacy of our count of how many clipping operations were used. In Appendix~\ref{sec:noncontract}, we provide a simple counterexample showing that the noncontractiveness of contractive steps, after applying clipping, is inherent. Our general smooth heavy-tailed DP-SCO result is stated as Theorem~\ref{thm:smooth}. 

We believe the use of SVT within an optimization algorithm to ensure privacy may be of independent interest, as it is one of few such instances that have appeared in the private optimization literature to our knowledge; it is inspired by a simpler application of this technique carried out in \cite{LiuA24}.

On the other hand, we make the simple observation that for GLMs, clipping cannot make a contractive gradient step noncontractive, by taking advantage of the fact that the derivative of $f(x; s) = \sigma(\inprod{x}{s})$ is a multiple of $s$ for any $x \in \xset$ (see Lemma~\ref{lem:contract_glm}). We use this observation to give a straightforward adaptation of the smooth algorithm in \cite{FeldmanKoTa20} to the heavy-tailed setting, proving Theorem~\ref{thm:smooth_glm}, which attains both a linear gradient query complexity and the optimal rate \eqref{eq:lower_bound_ht}.

\subsection{Prior work}\label{ssec:related}

The best-known rates for heavy-tailed DP-SCO were recently achieved by~\cite{KamathLiZh22,LowyR23}. As discussed previously, their results do not provide the same optimality guarantees as our Theorem~\ref{thm:ht_psco}. The rate achieved by \cite{LowyR23} is polynomially worse than the optimal loss \eqref{eq:lower_bound_ht} for any constant $k$. On the other hand, the work of~\cite{KamathLiZh22} uses a different assumption on the gradients than Assumption~\ref{assume:k_ht}, which is arguably more nonstandard: in particular, they require that the $k^{\text{th}}$-order central moments of each coordinate $\nabla_j f(x;s)$ is bounded. Moreover, their algorithms require each sample function $f(\cdot;s)$ to be $\beta$-smooth, and the final rates have a strong dependence on the condition number $\kappa = \frac \beta \lambda$ where $\lambda$ is the strong convexity parameter (see Appendix C in \cite{LowyR23} for additional discussion).

Our result in the heavy-tailed setting assuming $\beta$-smoothness of sample functions, Theorem~\ref{thm:smooth}, is most directly related to Theorem 15 of \cite{LowyR23}. Our results and results in \cite{LowyR23}, respectively require
\[\beta = O\Par{\frac{G_k}{D} \cdot \eps^{1.5} \sqrt{\frac n d}} \text{ and } \beta = O\Par{\frac{G_k}{D} \cdot \Par{\frac{d^5}{\eps n}}^{\frac 1 {18}}},\]
omitting logarithmic factors in our bound for simplicity to obtain near-optimal rates. These regimes are different and not generally comparable. However, we find it potentially useful that our upper bound on $\beta$ grows as more samples are taken, whereas the \cite{LowyR23} bound degrades with larger $n$. It is worth mentioning that \cite{LowyR23}'s Theorem 15 shaves roughly one logarithmic factor in the error bound from our Theorem~\ref{thm:smooth}. On the other hand, Theorem~\ref{thm:smooth} actually requires a looser condition than mentioned above (see \eqref{eq:beta_bound_ours}), which can improve its guarantees in a wider range of parameters.

Finally, we briefly contextualize our population-level localization framework regarding previous localization schemes proposed by \cite{FeldmanKoTa20}. The two localization schemes in \cite{FeldmanKoTa20} (see Sections 4.1 and 5.1 of that work) both follow the same strategy of gradually improving distance bounds to a minimizer in phases. However, their implementation is qualitatively different than our Algorithm~\ref{alg:sco}, preventing their direct application in our algorithm. For instance, Section 4.1 of \cite{FeldmanKoTa20} does not use strong convexity, and therefore cannot take advantage of generalization bounds afforded to strongly convex losses (see discussion in \cite{Shalev-ShwartzSSS09}). On the other hand, the scheme in Section 5.1 of \cite{FeldmanKoTa20} serves a different purpose than Algorithm~\ref{alg:sco}, aiming to reduce strongly convex optimization to non-strongly convex optimization; our Algorithm~\ref{alg:sco}, on the other hand, directly targets non-strongly convex optimization. We view our approach as complementary to these prior frameworks and are optimistic it will find further utility in applications.

\section{Preliminaries}

\paragraph{General notation.} We use $[d]$ to denote the set $\{i \in \N \mid i \le d\}$. We use $\sign(x) \in \{\pm 1\}$ to denote the sign for $x \in \R$, with $\sign(0) = 1$. We use $\normal(\mu, \msig)$ to denote the multivariate normal distribution of specified mean and covariance. We denote the all-ones and all-zeroes vectors of dimension $d$ by $\1_d$ and $\0_d$. We use $\norm{\cdot}$ to denote the Euclidean ($\ell_2$) norm. We use $\id_d$ to denote the identity matrix on $\R^d$. We use $\ball(C)$ to denote the $\ell_2$ ball of radius $C$, and for $x \in \R^d$, $\ball(x, C)$ is used to denote $\{x' \in \R^d \mid \norm{x' - x} \le C\}$. For a set $\calX \subseteq \R^d$, we let $\diam(\calX) \defeq \sup_{x, x' \in \calX} \norm{x - x'}$, and we let $\Pi_{\calX}(x)$ denote the Euclidean projection of $x$ to $\calX$, i.e.\ $\argmin_{x' \in \calX} \norm{x' - x}$, which exists and is unique when $\calX$ is compact. We use $f_\xset$ to denote the restriction of a function $f$ to $\xset$, i.e.\
\begin{equation}\label{eq:truncate_function}f_\xset(x) = \begin{cases}
  f(x) & x \in \xset \\
  \infty & x \not\in \xset
\end{cases}.\end{equation}
For $x \in \R^d$, we use $\Pi_C(x)$ as shorthand for $\Pi_{\ball(C)}(x)$, i.e.\ $\Pi_C(X)$ denotes the clipped vector $x \cdot \min(\frac{C}{\norm{x}}, 1)$.  We say two datasets $\calD$, $\calD'$ are \emph{neighboring} if they differ in one entry, and $|\calD| = |\calD'|$. We say $x \in \calX$ is an $\eps$-approximate minimizer to $f: \calX \to \R$ if $f(x) - \inf_{x^\star \in \calX} f(x^\star) \le \eps$. For two densities $\mu, \nu$ on the same probability space, and $\alpha > 1$, we define the $\alpha$-R\'enyi divergence
\[D_\alpha(\mu\|\nu) \defeq \frac 1 {\alpha - 1}\log\Par{\int \Par{\frac{\mu(\omega)}{\nu(\omega)}}^\alpha\dd \nu(\omega)}.\]
For an event $\event$ on a probability space clear from context, we let $\ind_\event$ denote the $0$-$1$ indicator of $\event$. We say $f: \xset \to \R$ is $L$-Lipschitz if $|f(x) - f(x')| \le L\norm{x - x'}$ for all $x, x' \in \xset$; if $f$ is differentiable and convex, an equivalent characterization is $\norm{\nabla f(x)} \le L$ for all $x \in \xset$. We say $f: \xset \to \R$ is $\mu$-strongly convex if $f(\lam x' + (1 - \lam)x) \le \lam f(x') + (1 - \lam)f(x) -\frac{\mu\lam(1 - \lam)}{2}\norm{x - x'}^2 $ for all $x, x' \in \xset$. We say differentiable $f: \xset \to \R$ is $\beta$-smooth if for all $x, x' \in \xset$, $\norm{\nabla f(x) - \nabla f(x')} \le \beta\norm{x - x'}$. The subgradient set of a convex function $f: \xset \to \R$ at $x \in \xset$ is denoted $\partial f(x)$.

\paragraph{Differential privacy.} We begin with a definition of standard differential privacy.

\begin{definition}[Differential privacy]\label{def:diffp}
Let $\eps \ge 0$, $\delta \in [0, 1]$. We say a mechanism (randomized algorithm) $\mech: \Ds^n \to \Omega$ satisfies $(\eps, \delta)$-differential privacy (alternatively, $\mech$ is $(\eps, \delta)$-DP) if for any neighboring $\calD, \calD' \in \Ds^n$, and any $S \subseteq \Omega$, $\Pr\Brack{\mech(\calD) \in S} \le \exp(\eps)\Pr\Brack{\mech(\calD') \in S} + \delta$. 

More generally, for random variables $X, Y \in \Omega$ satisfying $\Pr[X \in S] \le \exp(\eps) \Pr[Y \in S] + \delta$ for all $S \subseteq \Omega$, we say that $X, Y$ are $(\eps, \delta)$-indistinguishable.
\end{definition}

Throughout the paper, other notions of differential privacy will frequently be useful for our accounting of privacy loss in our algorithms. For example, we define the following variants of DP.

\begin{definition}[R\'enyi DP]\label{def:rdp}
Let $\alpha > 1$, $\eps \ge 0$. We say a mechanism $\mech: \Ds^n \to \Omega$ satisfies $(\alpha,\eps)$-R\'enyi differential privacy (RDP) if for any neighboring $\calD, \calD' \in \Ds^n$, $D_\alpha(\mech(\calD) \| \mech(\calD')) \le \eps$.
\end{definition}

\begin{definition}[CDP]\label{def:tcdp}
Let $\rho \ge 0$. We say a mechanism 
$\mech: \Ds^n \to \Omega$ satisfies $\rho$-concentrated differential privacy (alternatively, $\mech$ satisfies $\rho$-CDP) if for any neighboring $\calD, \calD' \in \Ds^n$, and any $\alpha \ge 1$, $D_\alpha(\mech(\calD) \| \mech(\calD')) \le \alpha\rho$.
\end{definition}

For an extended discussion of RDP and CDP and their properties, we refer the reader to \cite{BunS16, Mironov17, BunDRS18}. We summarize the main facts about these notions we use here.

\begin{lemma}[\cite{Mironov17}]\label{lem:cdp_facts}
RDP has the following properties.
\begin{enumerate}
    \item (Composition): Let $\mech_1: \Ds^n \to \Omega$ satisfy $(\alpha, \eps_1)$-RDP and $\mech_2: \Ds^n \times \Omega \to \Omega'$ satisfy $(\alpha, \eps_2)$-RDP for any input in $\Omega$. Then the composition of $\mech_2$ and $\mech_1$, i.e.\ the randomized algorithm which takes $\calD$ to $\mech_2(\calD, \mech_1(\calD))$, satisfies $(\alpha, \eps_1+\eps_2)$-RDP.
    \item (RDP to DP): If $\mech$ satisfies $(\alpha, \eps)$-RDP, it satisfies $(\eps + \frac 1 {\alpha - 1}\log \frac 1 \delta,\delta)$-DP for all $\delta \in (0, 1)$.
    \item (Gaussian mechanism): Let $f:\Ds^n \to \R^d$ be an $L$-sensitive randomized function for $L \ge 0$, i.e.\ for any neighboring $\calD$, $\calD'$, we have $\norm{f(\calD) - f(\calD')} \le L$. Then for any $\sigma > 0$, the mechanism which outputs $f(\calD) + \xi$ for $\xi \sim \normal(\0_d, \sigma^2\id_d)$ satisfies $\frac{L^2}{2\sigma^2}$-CDP.
\end{enumerate}
\end{lemma}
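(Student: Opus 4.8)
The plan is to establish the three properties in turn, in each case working from the integral form of the R\'enyi divergence: unwinding the definition in the excerpt, $\exp\Par{(\alpha - 1)D_\alpha(\mu\|\nu)} = \int \Par{\frac{\mu(\omega)}{\nu(\omega)}}^{\alpha}\dd\nu(\omega) = \int \Par{\frac{\mu(\omega)}{\nu(\omega)}}^{\alpha - 1}\dd\mu(\omega)$, which converts the desired divergence bounds into elementary moment computations. Throughout I fix neighboring $\calD, \calD'$ and write $p,q$ for the relevant pair of output densities. \textbf{Composition.} Since $\mech_2(\calD, \mech_1(\calD))$ is a deterministic post-processing of the transcript $T \defeq \Par{\mech_1(\calD), \mech_2(\calD, \mech_1(\calD))}$, and R\'enyi divergence obeys the data-processing inequality, it suffices to bound $D_\alpha(T\|T')$ for the analogous transcript $T'$ on $\calD'$. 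I would factor the transcript density as (first-coordinate marginal)$\,\times\,$(second-coordinate conditional), assuming $\mech_1,\mech_2$ use independent randomness, and evaluate the defining integral by integrating over the second coordinate first: for each fixed first-coordinate value $\omega_1$ the inner integral equals $\exp\Par{(\alpha - 1)D_\alpha\Par{\mech_2(\calD,\omega_1)\|\mech_2(\calD',\omega_1)}} \le \exp\Par{(\alpha - 1)\eps_2}$ by the hypothesis on $\mech_2$, which is assumed to hold for \emph{every} input $\omega_1$; this factor pulls out of the outer integral, which is then $\exp\Par{(\alpha - 1)\eps_1}$ by the hypothesis on $\mech_1$. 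Taking logarithms and dividing by $\alpha - 1$ gives $D_\alpha(T\|T') \le \eps_1 + \eps_2$.

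For \textbf{RDP to DP}, set $\eps' \defeq \eps + \frac{1}{\alpha - 1}\log\frac 1 \delta$ and, for any measurable $S$, split it via the ``bad likelihood-ratio'' set $E \defeq \{\omega : p(\omega) > e^{\eps'}q(\omega)\}$: on $S \setminus E$ we have $p \le e^{\eps'}q$, so $\Pr\Brack{\mech(\calD) \in S} \le \Pr\Brack{\mech(\calD)\in E} + e^{\eps'}\Pr\Brack{\mech(\calD')\in S}$, and it remains to show $\Pr\Brack{\mech(\calD)\in E}\le\delta$. Restricting the R\'enyi integral to $E$, where $\frac p q > e^{\eps'}$ and $\alpha > 1$, yields $\exp\Par{(\alpha - 1)\eps} \ge \int_E p\Par{\frac p q}^{\alpha - 1} \ge e^{(\alpha - 1)\eps'}\Pr\Brack{\mech(\calD)\in E}$, hence $\Pr\Brack{\mech(\calD)\in E} \le \exp\Par{(\alpha - 1)(\eps - \eps')} = \delta$ by the choice of $\eps'$.

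For the \textbf{Gaussian mechanism}, first take $f$ deterministic, reducing the claim to showing $D_\alpha\Par{\normal(\mu,\sigma^2\id_d)\,\|\,\normal(\mu',\sigma^2\id_d)} \le \frac{\alpha L^2}{2\sigma^2}$ whenever $\norm{\mu - \mu'}\le L$. Plugging the two densities into the integral, the normalizing constants cancel since the covariances agree; completing the square in the exponent, $\alpha\norm{x-\mu}^2 + (1-\alpha)\norm{x-\mu'}^2 = \norm{x - (\mu' + \alpha(\mu - \mu'))}^2 - \alpha(\alpha - 1)\norm{\mu - \mu'}^2$, factors the integrand as $\exp\Par{\frac{\alpha(\alpha - 1)\norm{\mu - \mu'}^2}{2\sigma^2}}$ times an unnormalized Gaussian that integrates to $1$, so $D_\alpha = \frac{\alpha\norm{\mu - \mu'}^2}{2\sigma^2} \le \frac{\alpha L^2}{2\sigma^2}$, i.e. the mechanism is $\frac{L^2}{2\sigma^2}$-CDP. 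For randomized $f$, I would couple its internal randomness $R$ across $\calD$ and $\calD'$; conditioned on $R$ the sensitivity bound $\norm{f(\calD) - f(\calD')}\le L$ holds, so the bound survives after marginalizing out $R$ via the conditioning argument of the composition step (with a trivial, divergence-zero ``first coordinate'' $R$) and data processing.

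\textbf{Main obstacle.} None of the three parts is genuinely difficult; the points that require care are the measure-theoretic justification of the density factorization and the interchange of integrals in the composition argument (together with the observation that the hypothesis on $\mech_2$ must hold for \emph{all} inputs, not merely almost all), and, in the Gaussian mechanism, handling randomized $f$ through the conditioning device above rather than appealing to joint convexity of $D_\alpha$, which fails for $\alpha > 1$.
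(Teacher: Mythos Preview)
The paper does not give its own proof of this lemma; it simply cites \cite{Mironov17} and states the three properties as known facts. Your proof is correct and follows the standard arguments from that reference: Fubini on the factored transcript density for composition, the Markov-type bound on the bad likelihood-ratio set for RDP-to-DP, and the closed-form Gaussian R\'enyi divergence via completing the square (with the conditioning-on-internal-randomness device to handle randomized $f$). There is nothing to compare against in the paper itself.
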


\paragraph{Private SCO.} Throughout the paper, we study the problem of private stochastic convex optimization (SCO) with heavy-tailed gradients. We first define the assumptions used in our algorithms.

\begin{assumption}[$k$-heavy-tailed distributions]\label{assume:k_ht}
Let $\xset \subset \R^d$ be a compact, convex set.
Let $\calP$ be a distribution over a sample space $\calS$, such that each $s \in \calS$ induces a continuously-differentiable, convex, $L_s$-Lipschitz loss function $f(\cdot; s): \xset \to \R$,\footnote{The assumed moment bounds shows that $f(\cdot; s)$ has a finite Lipschitz constant, except for a probability-zero set of $s$. Moreover, convex functions are differentiable almost everywhere. Therefore, if $f(\cdot; s)$ is Lipschitz, perturbing its first argument by an infinitesimal Gaussian makes it differentiable at the resulting point with probability $1$, and negligibly affects the function value. For this reason, we assume for simplicity that $f(\cdot; s)$ is differentiable everywhere.} where $L_s \defeq \max_{x \in \xset} \norm{\nabla f(x; s)}$ is unknown. For $k \in \N$ satisfying $k \ge 2$, we say $\calP$ satisfies the \emph{$k$-heavy tailed assumption} if, for a sequence of monotonically nondecreasing $\{G_j\}_{j \in [k]}$,\footnote{This assumption is without loss of generality by Jensen's inequality.} we have $\E_{s \sim \calP}[L_s^j] \le G_j^j < \infty$ for all $j \in [k]$.
\end{assumption}

In Section~\ref{sec:lip_known}, we consider a variant of Assumption~\ref{assume:k_ht} where we have explicit access to upper bounds on the Lipschitz constants $L_s$, formalized in the following definition.

\begin{assumption}[Known Lipschitz $k$-heavy-tailed distributions]\label{assume:kl_k_ht}
In the setting of Assumption~\ref{assume:k_ht}, suppose that for each $s \in \calS$ we know a value $\bL_s \ge L_s$. For $k \in \N$ satisfying $k \ge 2$, we say $\calP$ satisfies the \emph{known Lipschitz $k$-heavy tailed assumption} if, for a sequence of monotonically nondecreasing $\{G_j\}_{j \in [k]}$, we have $\E_{s \sim \calP}[\bL_s^j] \le G_j^j < \infty$ for all $j \in [k]$.
\end{assumption}

Note that Assumption~\ref{assume:kl_k_ht} clearly implies Assumption~\ref{assume:k_ht}, but gives us additional access to Lipschitz overestimates with bounded moments. Our goal is to approximately optimize a population loss over sample functions satisfying Assumptions~\ref{assume:k_ht} or~\ref{assume:kl_k_ht}, formalized in the following.

\begin{definition}[$k$-heavy-tailed private SCO]\label{def:psco}
In the \emph{$k$-heavy-tailed private SCO} problem,
$\calX \subset \R^d$ is a compact, convex set with $\diam(\calX) = D$.
Further, $\calP$ is a distribution over a sample space $\Ds$ satisfying Assumption~\ref{assume:k_ht}.
Our goal is to design an algorithm which provides an approximate minimizer in expectation to the population loss, $\Fpop(x) \defeq \E_{s \sim \calP}\Brack{f(x; s)}$, subject to satisfying differential privacy. We say such an algorithm queries $N$ \emph{sample gradients} if it queries $\nabla f(x; s)$ for $N$ different pairs $(x, s) \in \xset \times \Ds$. If $\calP$ further satisfies Assumption~\ref{assume:kl_k_ht}, we call the corresponding problem the \emph{known Lipschitz $k$-heavy-tailed private SCO} problem.
\end{definition}

We first observe the following consequence of Assumption~\ref{assume:k_ht}.

\begin{lemma}\label{lem:fpop_lip}
Let $\calP$ be a distribution over $\Ds$ satisfying Assumption~\ref{assume:k_ht}. Then $\Fpop$ is $G_1$-Lipschitz.
\end{lemma}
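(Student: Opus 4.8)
The plan is to show that $\Fpop(x) = \E_{s\sim\calP}[f(x;s)]$ inherits Lipschitzness from the sample functions, with constant given by the first-moment bound $G_1$. The natural route is to bound the gradient norm of $\Fpop$, or more robustly, to argue directly from the definition of Lipschitzness, since $\Fpop$ need not be differentiable even though each $f(\cdot;s)$ is.

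First I would fix arbitrary $x, x' \in \xset$ and write
\begin{equation*}
|\Fpop(x) - \Fpop(x')| = \Abs{\E_{s\sim\calP}\Brack{f(x;s) - f(x';s)}} \le \E_{s\sim\calP}\Brack{\Abs{f(x;s) - f(x';s)}},
\end{equation*}
using Jensen's inequality (or just the triangle inequality for the integral). Then, since each $f(\cdot;s)$ is $L_s$-Lipschitz by Assumption~\ref{assume:k_ht}, with $L_s = \max_{y\in\xset}\norm{\nabla f(y;s)}$, we have $|f(x;s) - f(x';s)| \le L_s\norm{x - x'}$ pointwise in $s$. Substituting this in gives
\begin{equation*}
|\Fpop(x) - \Fpop(x')| \le \E_{s\sim\calP}\Brack{L_s}\norm{x - x'} \le G_1\norm{x-x'},
\end{equation*}
where the last step uses the $j=1$ case of the moment bound $\E_{s\sim\calP}[L_s^j]\le G_j^j$, i.e.\ $\E_{s\sim\calP}[L_s] \le G_1$. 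Since $x,x'$ were arbitrary, this establishes that $\Fpop$ is $G_1$-Lipschitz on $\xset$.

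There is essentially no hard step here; the only things to be careful about are (i) that we are allowed to pass the absolute value inside the expectation, which is just Jensen/triangle inequality and is valid because the integrand is integrable (guaranteed by the finite moment assumption and compactness of $\xset$), and (ii) that we use the Lipschitz characterization of the sample functions rather than differentiating through the expectation, which avoids any measurability or dominated-convergence subtlety associated with $\nabla\Fpop$. If one instead wanted the gradient characterization, one could additionally invoke that $\Fpop$ is convex (as an average of convex functions) and hence the $G_1$-Lipschitz bound is equivalent to $\norm{g}\le G_1$ for all $g \in \partial\Fpop(x)$, but this is not needed for the statement as written.
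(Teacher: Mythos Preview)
Your proof is correct and essentially the same as the paper's. The only cosmetic difference is that the paper works with the gradient characterization, bounding $\max_{x \in \xset}\norm{\E_{s \sim \calP}[\nabla f(x;s)]} \le \E_{s \sim \calP}[\max_{x \in \xset}\norm{\nabla f(x;s)}] \le G_1$ via Jensen and swapping the max inside the expectation, whereas you argue directly on function values; both routes invoke the same first-moment bound $\E_{s\sim\calP}[L_s]\le G_1$ and are equally elementary.
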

\begin{proof}
This follows from the derivation
\begin{align*}
\max_{x \in \xset} \norm{\E_{s \sim \calP}\Brack{\nabla f(x; s)}} \le \max_{x \in \xset} \E_{s \sim \calP} \norm{\nabla f(x; s)} \le \E_{s \sim \calP} \max_{x \in \xset} \norm{\nabla f(x; s)} \le G_1.
\end{align*}
\end{proof}

We require the following claim which bounds the bias of clipped heavy-tailed distributions.

\begin{fact}[\cite{BarberD14}, Lemma 3]\label{fact:moment_bias_bound}
Let $k > 1$ and $X \in \R^d$ be a random vector with $\E[\norm{X}^k] \le G^k$. Then,
\[\E \norm{\Pi_C(X) - X} \le \E [\norm{X} \ind_{\norm{X} \ge C}] \le \frac{G^k}{(k - 1)C^{k - 1}}.\]
\end{fact}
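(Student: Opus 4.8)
The plan is to treat the two inequalities in the display separately and to obtain the final bound from a tail‑integral (layer‑cake) computation rather than from the middle quantity. The workhorse is a pointwise identity. Since $\Pi_C(x) = x \cdot \min\Par{\frac{C}{\norm{x}}, 1}$, we have $\Pi_C(X) = X$ on the event $\{\norm{X} \le C\}$, while on $\{\norm{X} > C\}$ we have $\Pi_C(X) - X = \Par{\frac{C}{\norm{X}} - 1}X$, so $\norm{\Pi_C(X) - X} = \Par{1 - \frac{C}{\norm{X}}}\norm{X} = \norm{X} - C$. Hence $\norm{\Pi_C(X) - X} = \max(\norm{X} - C, 0)$ pointwise. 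Writing $(t)_+ \defeq \max(t, 0)$ and using $C \ge 0$, we get $(\norm{X} - C)_+ = (\norm{X} - C)\ind_{\norm{X} \ge C} \le \norm{X}\ind_{\norm{X} \ge C}$ pointwise; taking expectations gives the first inequality, $\E\norm{\Pi_C(X) - X} \le \E[\norm{X}\ind_{\norm{X} \ge C}]$.

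For the stated upper bound, I would bound the leftmost quantity directly, which is legitimate since $\E\norm{\Pi_C(X) - X} = \E(\norm{X} - C)_+$ by the identity above, and this is precisely the quantity the fact is invoked to control (e.g.\ the clipping bias in Proposition~\ref{prop:dp-erm}). Apply the layer‑cake formula $\E Z = \int_0^\infty \Pr\Brack{Z > t}\dd t$ to the nonnegative variable $Z = (\norm{X} - C)_+$: since $\{Z > t\} = \{\norm{X} > C + t\}$ for $t \ge 0$, the substitution $u = C + t$ gives $\E(\norm{X} - C)_+ = \int_C^\infty \Pr\Brack{\norm{X} > u}\dd u$. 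Markov's inequality on the $k^{\text{th}}$ moment gives $\Pr\Brack{\norm{X} > u} = \Pr\Brack{\norm{X}^k > u^k} \le u^{-k}\E\norm{X}^k \le G^k u^{-k}$, and $\int_C^\infty G^k u^{-k}\dd u = \frac{G^k}{(k - 1)C^{k - 1}}$ (convergent since $k > 1$). This establishes $\E\norm{\Pi_C(X) - X} \le \frac{G^k}{(k-1)C^{k-1}}$.

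The one delicate point — and the reason I would not attempt the middle inequality verbatim — is that the $\frac{1}{k-1}$ factor is the correct constant for the clipping bias $\E\norm{\Pi_C(X) - X}$, not for the intermediate $\E[\norm{X}\ind_{\norm{X} \ge C}]$: taking $X$ a point mass with $\norm{X} = a$ for any $a \in \Par{C,\; (k-1)^{1/(k-1)}C}$ (a nonempty range when $k > 2$) gives $\E[\norm{X}\ind_{\norm{X}\ge C}] = a > \frac{a^k}{(k-1)C^{k-1}} = \frac{G^k}{(k-1)C^{k-1}}$ with $G = a$. What the intermediate quantity does satisfy is the slightly weaker $\E[\norm{X}\ind_{\norm{X}\ge C}] \le \frac{G^k}{C^{k-1}}$, proved in one line from $\norm{X} \le \norm{X}^k / C^{k-1}$ on $\{\norm{X} \ge C\}$ (or, via the same layer‑cake split, $C\Pr\Brack{\norm{X}\ge C} + \int_C^\infty\Pr\Brack{\norm{X}>u}\dd u$); this matches the displayed bound exactly when $k = 2$ and is looser by at most a factor $\frac{k}{k-1}\le 2$ otherwise. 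Since every downstream use of the fact passes through the bound on $\E\norm{\Pi_C(X) - X}$ — which the middle term upper‑bounds and for which the displayed constant is correct — this discrepancy is inconsequential. Accordingly, the proof I would write is: (i) the first inequality via the pointwise identity and $C \ge 0$; and (ii) $\E\norm{\Pi_C(X) - X} = \E(\norm{X} - C)_+ \le \frac{G^k}{(k-1)C^{k-1}}$ via layer‑cake plus Markov, reading the final displayed inequality as a bound on the leftmost (and only relevant) term.
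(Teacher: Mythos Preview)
The paper does not prove this statement; it is cited from \cite{BarberD14}. Your analysis is correct, and you have caught a genuine overstatement in the fact as written: the middle-to-right inequality $\E[\norm{X}\ind_{\norm{X}\ge C}] \le \frac{G^k}{(k-1)C^{k-1}}$ fails for $k > 2$ (your point-mass counterexample is valid), while both $\E\norm{\Pi_C(X) - X} \le \frac{G^k}{(k-1)C^{k-1}}$ and $\E[\norm{X}\ind_{\norm{X}\ge C}] \le \frac{G^k}{C^{k-1}}$ are correct, exactly as you prove.

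Two small corrections to your commentary. First, the paper does invoke the middle quantity directly, not only the clipping bias: the proof of Lemma~\ref{lem:c_diff_lip} bounds $\E_{s\sim\calP}[\bL_s\ind_{\bL_s > C}]$, and the proof of Corollary~\ref{cor:diff_threshold} bounds $\E[\norm{g_t}\ind_{\norm{g_t}\ge C/2}]$. In both places the corrected bound $\frac{G^k}{C^{k-1}}$ suffices, since the affected results are stated up to universal constants. Second, the looseness of $\frac{G^k}{C^{k-1}}$ relative to the displayed $\frac{G^k}{(k-1)C^{k-1}}$ is a factor of $k-1$, not $\frac{k}{k-1}$; this is not bounded by $2$, but is still harmless here for the same reason.
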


We also use the following standard claim on geometric aggregation.

\begin{fact}[\cite{KelnerLLST23}, Claim 1]\label{fact:geom_agg}
Let $S \defeq \{x_i\}_{i \in [k]} \subset \R^d$ have the property that for (unknown) $z \in \R^d$, $|\{i \in [k] \mid \norm{x_i - z} \le R\}| \ge 0.51k$ for some $R \ge 0$. There is an algorithm $\agg$ which runs in time $O(dk^2)$ and outputs $x \in S$ such that $\norm{x - z} \le 3R$.
\end{fact}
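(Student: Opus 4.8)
The plan is to reduce the guarantee to a pigeonhole argument combined with two applications of the triangle inequality, following the standard recipe for geometric median-type aggregation. First I would have $\agg$ compute all pairwise distances $d_{ij} \defeq \norm{x_i - x_j}$ for $i, j \in [k]$; this is the only step that touches the ambient dimension, and it costs $O(dk^2)$ time. For each index $i \in [k]$, I would then let $r_i$ denote the smallest radius such that the closed ball of radius $r_i$ centered at $x_i$ contains at least $0.51k$ of the points of $S$ --- equivalently, the $\lceil 0.51k\rceil$-th smallest entry of $\{d_{ij}\}_{j \in [k]}$ (which includes $d_{ii} = 0$) --- extractable for each $i$ by a linear-time selection in $O(k)$ time, hence $O(k^2)$ overall. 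The algorithm outputs $x_{i^\star}$ where $i^\star \in \argmin_{i \in [k]} r_i$; note this formulation requires $\agg$ to know neither $z$ nor $R$.

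For correctness, let $\mathcal{G} \defeq \{i \in [k] \mid \norm{x_i - z} \le R\}$ be the set of ``good'' indices, which by hypothesis satisfies $|\mathcal{G}| \ge 0.51k$. For any $g, g' \in \mathcal{G}$, the triangle inequality gives $\norm{x_g - x_{g'}} \le 2R$, so the ball of radius $2R$ around $x_g$ contains all of $\mathcal{G}$ and hence at least $0.51k$ points; thus $r_g \le 2R$ for every $g \in \mathcal{G}$, and in particular $r_{i^\star} \le 2R$ by minimality. Now the ball of radius $r_{i^\star}$ around $x_{i^\star}$ contains a set $\mathcal{N}$ of at least $0.51k$ indices; since $|\mathcal{N}| + |\mathcal{G}| \ge 2 \cdot 0.51k = 1.02k > k$, the sets $\mathcal{N}$ and $\mathcal{G}$ must intersect, so there exists $j \in \mathcal{N} \cap \mathcal{G}$. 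For this $j$ we have $\norm{x_{i^\star} - x_j} \le r_{i^\star} \le 2R$ and $\norm{x_j - z} \le R$, whence $\norm{x_{i^\star} - z} \le 3R$ by the triangle inequality, as claimed.

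I do not anticipate any genuine obstacle here: the argument is elementary, and the only points requiring minor care are (i) the choice of the count threshold $0.51k$, which must strictly exceed $k/2$ so that two index sets of that size are forced to intersect by inclusion--exclusion, together with the radius $2R$, which must be large enough that every good point sees all other good points; and (ii) the runtime bookkeeping, where one checks that the pairwise-distance computation at $O(dk^2)$ dominates the subsequent selections and comparisons at $O(k^2)$. If instead one wishes to assume $R$ is known to $\agg$, an equally valid variant thresholds directly at radius $2R$ and outputs any $x_i$ whose $2R$-ball contains at least $0.51k$ points (such an $x_i$ exists since every $g \in \mathcal{G}$ qualifies), and the same two triangle-inequality steps yield the bound.
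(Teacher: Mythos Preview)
The paper does not include its own proof of this fact; it is stated with a citation to \cite{KelnerLLST23}, Claim 1, and used as a black box. Your proof is correct and is exactly the standard argument underlying that claim: compute pairwise distances, select the point whose $\lceil 0.51k\rceil$-th nearest neighbor is closest, and use the two triangle-inequality steps plus the pigeonhole intersection $|\mathcal{N}| + |\mathcal{G}| > k$ to conclude the $3R$ bound.
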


Finally, given a dataset $\calD \in \Ds^*$ of arbitrary size, and $\lam \ge 0$, we use the following shorthand to denote the regularized empirical risk minimization (ERM) objective corresponding to the dataset:
\begin{equation}\label{eq:fdl_def}\FDL(x) \defeq \frac 1 {|\calD|} \sum_{s \in \calD} f(x; s) + \frac \lam 2 \norm{x}^2.\end{equation}
When $\lam = 0$, we simply denote the function above by $\FD(x) \defeq \frac 1 {|\calD|} \sum_{s \in \calD} f(x; s)$.
% !TEX root = main.tex

\section{Heavy-Tailed Private SCO}\label{sec:lose_log}

In this section, we obtain near-optimal algorithms (up to a polylogarithmic factor) using a new \emph{population-level localization} framework, combined with a geometric aggregation strategy for boosting weak subproblem solvers to succeed with high probability (Fact~\ref{fact:geom_agg}). Our algorithm's main ingredient, given in Section~\ref{ssec:subproblem}, is a clipped DP-SGD subroutine for privately minimizing a regularized ERM subproblem, under a condition on a randomly sampled dataset which holds with constant probability. Next, in Section~\ref{ssec:pop_gen} we show that our algorithm from Section~\ref{ssec:subproblem} returns points near the minimizer of a regularized loss function over the population, using generalization arguments. 
Finally, we develop our population-level localization scheme in Section~\ref{ssec:pop_localize}, and combine it with our subproblem solver to give our overall method for heavy-tailed private SCO.

\subsection{Strongly convex DP-ERM solver}\label{ssec:subproblem}

We give a parameterized subroutine for minimizing a DP-ERM objective $\FDL(x)$ associated with a dataset $\calD$ and a regularization parameter $\lam \ge 0$ (recalling the definition \eqref{eq:fdl_def}). In this section only, for notational convenience we identify elements of $\calD$ with $[n]$ where $n \defeq |\calD|$, so we will also write
\[\FDL(x) \defeq \frac 1 n \sum_{i \in [n]} f_i(x) + \frac \lam 2 \norm{x}^2,\]
i.e.\ we let $f_i(\cdot) \defeq f(\cdot; s)$ where $s \in \calD$ is the element identified with $i \in [n]$. Our subroutine is a clipped DP-SGD algorithm (\Cref{alg:clipped-dpsgd}), which only clips the heavy-tailed portion of $\nabla \FDL$ (i.e.\ the sample gradients), and leaves both the regularization and additive noise unchanged. The utility of \Cref{alg:clipped-dpsgd} is parameterized by the following function of the dataset:
\begin{equation}\label{eq:bd_def}
b_{\cal{D}} \defeq \max_{ x \in \mc{X}} \norm{\frac{1}{n} \sum_{i \in [n]} \nabla  f_i(x)  - \frac{1}{n} \sum_{i \in [n]} \Pi_C(\nabla f_i(x)) }.
\end{equation}
In other words, $\bD$ denotes the maximum bias incurred by the clipped gradient of $\FD$ when compared to the true gradient, over points in $\calX$; note the maximum is achieved as $\calX$ is compact.

We are now ready to state our algorithm, $\CDPSGD$, as Algorithm~\ref{alg:clipped-dpsgd}.

\begin{algorithm2e}
\caption{$\CDPSGD(\calD, C, \lam, \{\eta_t\}_{t \in [T]}, \sigma^2, T, r, \xset)$}
\label{alg:clipped-dpsgd}
\DontPrintSemicolon
\codeInput Dataset $\calD \in \Ds^n$, clip threshold $C \in \R_{\ge 0}$, regularization $\lambda \in \R_{\ge 0}$, step sizes $\{\eta_t\}_{t \in [T]} \subset \R_{\ge 0}$, noise $\sigma^2 \in \R_{\ge 0}$, iteration count $T \in \N$, radius $r \in \R_{\ge 0}$, domain $\xset \subset \ball(r)$ with $ \xset \ni \0_d$\;
$x_0 \gets \0_d$\;
\For{$0 \le t < T$}{
%\State Sample $b$ datapoints $Z_t^1,\dots,Z_t^b$ from $\calD$
$\xi_t \sim \normal(\0_d, \sigma^2 \id_d)$\;
$\hg_t \gets \frac 1 n \sum_{i \in [n]} \Pi_C(\nabla f_i(x_t))$\;
$x_{t+1} \gets \argmin_{x \in \calX_r} \{\eta_t\inprod{\hg_t + \xi_t}{x} + \frac{\eta_t \lam}{2}\norm{x}^2 + \half\norm{x - x_t}^2\}$\;\label{line:clip_sgd}
}
\codeReturn $\hx \gets  \frac{\sum_{0 \le t < T} (t + 4) x_t }{\sum_{0 \le t < T} (t + 4)}$
\end{algorithm2e}

We provide the following guarantee on $\CDPSGD$, by modifying an analysis of \cite{LacosteJSB12}.
\begin{proposition}
\label{prop:dp-erm}
Let $\rho \ge 0$, and $\hx$ be the output of $\CDPSGD$ with $\eta_t \gets \frac{4}{\lambda (t+1)} $ for all $0 \le t < T$, $\sigma^2 \gets \frac{2C^2 T}{n^2\rho}$, and $T \ge \max(n, \frac{n^2 \rho}{d})$. $\CDPSGD$ satisfies $\rho$-CDP, and 
	\begin{equation*}
	\E[\FDL (\hat x) - \FDL (x\opt)] \le \frac{32 C^2 d}{\lam n^2 \rho} + \frac{b_\data^2}{\lam} + \frac {7 \lam r^2}{n},\text{ where } x^\star \defeq \argmin_{x \in \xset} F_{\data, \lam}(x).
	\end{equation*}
 \iffalse
	In particular, 
	\begin{equation*}
	\E[\ltwo{\hat x - x\opt}]   \le O\left( \frac{C \sqrt{d \log(1/\delta)}}{\lambda n \eps}  + \frac{b_{\calD}}{\lambda} \right).
	\end{equation*}
\fi
\end{proposition}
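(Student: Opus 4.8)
The plan is to split the argument into a privacy part and a utility part. For \textbf{privacy}, the only dataset-dependent quantity touched in each iteration is $\hg_t = \frac1n \sum_i \Pi_C(\nabla f_i(x_t))$, and replacing one sample $s$ by $s'$ changes a single summand $\Pi_C(\nabla f_i(x_t))$, whose norm is at most $C$; hence the $\ell_2$-sensitivity of $\hg_t$ at a fixed $x_t$ is at most $\frac{2C}{n}$. Since $x_t$ is itself a deterministic function of the previously released noisy gradients, each step of the loop is an instance of the Gaussian mechanism on a $\frac{2C}{n}$-sensitive query with noise $\sigma^2$, so by Lemma~\ref{lem:cdp_facts}(3) it is $\frac{(2C/n)^2}{2\sigma^2} = \frac{2C^2}{n^2\sigma^2}$-CDP; composing over $T$ rounds via Lemma~\ref{lem:cdp_facts}(1) (translating CDP to RDP for every $\alpha$ and back) gives $\frac{2C^2 T}{n^2\sigma^2}$-CDP, and plugging in $\sigma^2 = \frac{2C^2 T}{n^2 \rho}$ yields exactly $\rho$-CDP. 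The post-processing that forms $\hx$ from the iterates does not affect this.

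For \textbf{utility}, I would run the standard last-iterate/weighted-average analysis of SGD on the $\lambda$-strongly convex objective $\FDL$ restricted to $\xset_r$, following \cite{LacosteJSB12}. Write the stochastic gradient used at step $t$ as $g_t \defeq \hg_t + \xi_t$, and decompose its deviation from $\nabla \FD(x_t) + \lambda x_t$ into the bias term $\hg_t - \nabla \FD(x_t)$, which is bounded in norm by $b_\data$ by definition \eqref{eq:bd_def}, and the zero-mean noise term $\xi_t$ with $\E\norm{\xi_t}^2 = \sigma^2 d$. The projection step in Line~\ref{line:clip_sgd} is exactly a proximal/mirror step for the strongly convex regularizer, so the one-step inequality gives, for $x^\star = \argmin_{x\in\xset} \FDL(x)$,
\begin{equation*}
\inprod{\nabla \FDL(x_t)}{x_t - x^\star} \le \frac{1}{2\eta_t}\Par{\norm{x_t - x^\star}^2 - \norm{x_{t+1} - x^\star}^2} + \frac{\eta_t}{2}\norm{g_t - \nabla\FDL(x_t)}^2 + \inprod{g_t - \nabla\FDL(x_t)}{x^\star - x_t} + (\text{str.\ cvx.\ term}).
\end{equation*}
Taking expectations kills the linear noise term (the noise at step $t$ is independent of $x_t, x^\star$), using strong convexity to absorb a $\frac{\lambda}{2}\norm{x_t - x^\star}^2$ on the left, choosing $\eta_t = \frac{4}{\lambda(t+1)}$ so the distance terms telescope against the weights $(t+4)$, and summing gives a bound on $\E[\FDL(\hx) - \FDL(x^\star)]$ in terms of $\frac{1}{\lambda T}\sum_t \E\norm{g_t - \nabla\FDL(x_t)}^2$ plus a term from the initial gap. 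The second-moment of the effective stochastic gradient error splits as $\E\norm{g_t - \nabla\FDL(x_t)}^2 \le 2 b_\data^2 + 2\sigma^2 d$ (bias plus variance), so the noise contribution is $O(\frac{\sigma^2 d}{\lambda}) = O(\frac{C^2 d}{\lambda n^2 \rho})$ after substituting $\sigma^2$, the bias contributes $O(\frac{b_\data^2}{\lambda})$, and the initial-gap term is handled by noting $\FDL(\0_d) - \FDL(x^\star) \lesssim \frac{\lambda r^2}{\cdot}$ plus the Lipschitz contribution; here is where the $T \ge \max(n, \frac{n^2\rho}{d})$ condition and the radius bound $\xset \subseteq \ball(r)$ enter, to make the residual term $O(\frac{\lambda r^2}{n})$ rather than something with a worse $T$-dependence. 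Collecting the three pieces gives the claimed bound $\frac{32 C^2 d}{\lambda n^2 \rho} + \frac{b_\data^2}{\lambda} + \frac{7\lambda r^2}{n}$, up to tracking the explicit constants.

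The main obstacle I expect is \textbf{getting the constants and the $T$-dependence exactly right} in the utility analysis: the weighted-average scheme with weights $(t+4)$ is chosen precisely so that the telescoping of $\frac{1}{\eta_t} = \frac{\lambda(t+1)}{4}$ against the strong-convexity gain produces clean coefficients, and one has to be careful that the residual after telescoping — which carries the initial distance $\norm{x_0 - x^\star}^2 \le r^2$ (since $\0_d = x_0 \in \xset \subseteq \ball(r)$, hence $\norm{x^\star}\le r$) — is divided by a quantity growing like $\sum_{t<T}(t+4) \asymp T^2$, and then that $T \ge n$ converts this to the stated $\frac{\lambda r^2}{n}$. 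Everything else (sensitivity bound, Gaussian-mechanism CDP accounting, $b_\data$ bounding the bias by definition) is routine; the delicate bookkeeping is entirely in reproducing the \cite{LacosteJSB12}-style averaging with these particular parameter choices.
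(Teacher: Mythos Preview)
Your privacy argument is correct and matches the paper exactly.

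The utility sketch has a real gap. The one-step proximal inequality for the update in Line~\ref{line:clip_sgd} produces $\tfrac{\eta_t}{2}\norm{\hg_t + \xi_t}^2$ on the right-hand side (the squared norm of the \emph{linear part} fed to the prox), not $\tfrac{\eta_t}{2}\norm{g_t - \nabla\FDL(x_t)}^2$; there is no smoothness here to trade one for the other. The paper controls this term directly via the clipping bound $\norm{\hg_t}\le C$, giving $\tfrac{\eta_t}{2}\norm{\hg_t+\xi_t}^2 \le \eta_t C^2 + \eta_t\norm{\xi_t}^2$. More importantly, your treatment of the bias is off: only the Gaussian part of the linear term $\inprod{g_t - \nabla F_\calD(x_t)}{x^\star - x_t}$ vanishes in expectation, whereas the clipping bias $\inprod{\hg_t - \nabla F_\calD(x_t)}{x^\star - x_t}$ is deterministic given $x_t$ and does not average out over iterations. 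The paper handles it by Cauchy--Schwarz plus Fenchel--Young, $b_\data \norm{x_t - x^\star} \le \tfrac{b_\data^2}{\lambda} + \tfrac{\lambda}{4}\norm{x_t - x^\star}^2$, absorbing the second piece into the strong-convexity gain; this is precisely why the final bound carries $\tfrac{b_\data^2}{\lambda}$ with \emph{no} $1/T$ factor. Your route, lumping the bias into a ``second moment of the gradient error'' and dividing by $T$, would (if it worked) give $\tfrac{b_\data^2}{\lambda T}$, which is too good to be true and signals that the systematic bias has been treated as noise. A smaller point: since the quadratic regularizer is handled exactly in the prox step, the relevant comparison is $\hg_t + \xi_t$ versus $\nabla F_\calD(x_t)$, not $\nabla\FDL(x_t)$; the $\lambda x_t$ term in your error decomposition should not be there, and correspondingly the paper must reconcile $\tfrac{\lambda}{2}\norm{x_{t+1}}^2$ (from the prox) with $\tfrac{\lambda}{2}\norm{x_t}^2$ (needed for $\FDL(x_t)$), which produces an additional $\tfrac{\lambda}{2Z}\sum_t \E\norm{x_t}^2 \le \tfrac{\lambda r^2}{T}$ contribution to the $\tfrac{7\lambda r^2}{n}$ term.
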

\begin{proof}
For the privacy claim, note that each call to Line~\ref{line:clip_sgd} is a postprocessing of a $\frac{2C}{n}$-sensitive statistic of the dataset $\calD$, since neighboring databases can only change $\frac 1 n \sum_{i \in [n]}\Pi_C(\nabla f_i(x_t))$ by $\frac{2C}{n}$ in the $\ell_2$ norm, via the triangle inequality. Therefore, applying the first and third parts of Lemma~\ref{lem:cdp_facts} shows that after $T$ iterations, the CDP of the mechanism is at most $T \cdot \frac{2C^2}{n^2\sigma^2} \le \rho$. 

We next prove the utility claim. For each $0 \le t \le T$, denote 
\begin{gather*}
\Delta_t \defeq \E \Brack{\FDL(x_t) - \FDL(x\opt)},\; \Phi_t \defeq \E\Brack{\half\norm{x_t - x\opt}^2},\; g_t \defeq \nabla \FD(x_t),\end{gather*}
where all expectations are only over randomness used by the algorithm, and not the randomness in sampling $\calD$. First-order optimality applied to the definition of $x_{t + 1}$ implies, for all $0 \le t < T$,
\begin{align*}
\inprod{\hg_t + \xi_t}{x_t - x^\star} + \inprod{\lam x_{t + 1}}{x_{t + 1} - x^\star} &\le \frac 1 {2\eta_t}\Par{\norm{x_t - x^\star}^2 - \norm{x_{t + 1} - x^\star}^2} + \frac{\eta_t}{2}\norm{\hg_t + \xi_t}^2.
\end{align*}
Adding $\inprod{g_t - \hg_t - \xi_t}{x_t - x^\star}$ to both sides and rearranging shows
\begin{align*}
\FD(x_t) + \frac \lam 2 \norm{x_{t + 1}}^2 - \FDL(x^\star) + \frac \lam 2 \norm{x_{t + 1} - x^\star}^2 &\le \inprod{g_t}{x_t - x^\star} + \inprod{\lam x_{t + 1}}{x_{t + 1} - x^\star} \\
&\le \frac 1 {2\eta_t}\Par{\norm{x_t - x^\star}^2 - \norm{x_{t + 1} - x^\star}^2}  \\
&+ \frac{\eta_t}{2}\norm{\hg_t + \xi_t}^2 + \inprod{g_t - \hg_t - \xi_t}{x_t - x^\star} \\
&\le \frac 1 {2\eta_t}\Par{\norm{x_t - x^\star}^2 - \norm{x_{t + 1} - x^\star}^2} \\
&+ \eta_t C^2 + \eta_t\norm{\xi_t}^2 + b_{\calD} \norm{x_t - x^\star} - \inprod{\xi_t}{x_t - x^\star}.
\end{align*}
In the first line, we used strong convexity of the function $\frac \lam 2 \norm{x}^2$, and in the last line, we used $\norm{a + b}^2 \le 2\norm{a}^2 + 2\norm{b}^2$ and the definitions of $C$ and $b_{\calD}$.
Next, adding $\frac \lam 2 (\norm{x_t}^2 - \norm{x_{t + 1}}^2)$ to both sides above and taking expectations over the first $t$ iterations yields
\begin{align*}
\Delta_t + \lam\Phi_{t + 1} &\le \frac 1 {\eta_t}\Par{\Phi_t - \Phi_{t + 1}} + \eta_t (C^2 + \sigma^2 d) + \frac{b_{\calD}^2}{\lam} + \frac \lam 2 \Phi_t + \frac \lam 2 \Par{\E\norm{x_t}^2 - \E\norm{x_{t + 1}}^2}, 
\end{align*}
where we used the Fenchel-Young inequality to bound $b_{\calD} \norm{x_t - x^\star} \le \frac{b_\data^2}{\lam} + \frac \lam 4 \norm{x_t - x^\star}^2$. Now, plugging in our step size schedule $\eta_t = \frac 4 {\lam (t + 1)}$, multiplying by $t + 4$, and rearranging shows
\begin{align*}
(t + 4) \Delta_t &\le  \frac{\lam(t + 3)(t + 4)} 4 \Phi_t - \frac{\lam (t + 5)(t + 4)} 4 \Phi_{t + 1} \\
&+ \frac{4(t + 4)}{\lam(t + 1)}\Par{\frac{3C^2 T d}{n^2\rho}} + \frac{(t + 4) b_\data^2}{\lam} + \frac{\lam (t + 4)} 2 \Par{\E\norm{x_t}^2 - \E\norm{x_{t + 1}}^2},
\end{align*}
where we plugged in the choice of $\sigma^2$ and $T \ge \frac{n^2 \rho}{d}$, so $C^2 \le \frac{\sigma^2 d} 2$.
Summing the above for $0 \le t < T$, using that all iterates and $x^\star$ lie in $\ball(r)$, and dividing by $Z \defeq \sum_{0 \le t < T} (t + 4) \ge \frac{T^2}{2}$, shows
\begin{align*}\frac 1 Z \sum_{0 \le t < T} (t + 4)\Delta_t &\le \frac{3\lam \Phi_0}{Z} + \frac{16C^2 T^2 d}{\lam Z n^2\rho} + \frac{b_\data^2}{\lam} + \frac \lam {2Z} \sum_{t \in [T]} \E \norm{x_t}^2 \\
&\le \frac{6\lam r^2}{T^2} + \frac{32 C^2 d}{\lam n^2 \rho} + \frac{b_\data^2}{\lam} + \frac {\lam r^2}{T} \le  \frac{32 C^2 d}{\lam n^2 \rho} + \frac{b_\data^2}{\lam} + \frac {7\lam r^2}{T}.
\end{align*}
The conclusion follows from convexity of $\FDL$, the definition of $\hx$, and $T \ge n$.
\iffalse
	Setting $\eta_t = \frac{4}{\lambda (t+4)}$, and multiplying by $t+3$ we get
	\begin{align*}
	(t+3)  \Delta_t 
	& \le \frac{(t+3) (t+4) \lambda}{8} (D_t - D_{t+1}) - \frac{(t+3)\lambda}{4}   D_t + \frac{16tb_\calD^2}{\lambda} + 4t \<\xi_t, x_t - x\opt \> + \frac{4}{\lambda} \ltwo{\hat g_t}^2 + \frac{4}{\lambda} \ltwo{\xi_t}^2 \\
	& = \frac{\lambda}{8} ((t+3)(t+2) D_t -  (t+3)(t+4)D_{t+1}) + \frac{16tb_\calD^2}{\lambda} + 4t \<\xi_t, x_t - x\opt \> + \frac{4}{\lambda} \ltwo{\hat g_t}^2 + \frac{4}{\lambda} \ltwo{\xi_t}^2 .
	\end{align*}
	Letting $\sum_{t=1}^T (t+3) = C T^2 $ for some constant $C>0$ \kjtian{double use of $C$} and taking expectations, we get that 
	\begin{align*}
	 \sum_{t=1}^T \frac{t+3}{CT^2}  \Delta_t 
	& \le O\left(   \frac{ \lambda D_1}{T^2} + \frac{b_\calD^2}{\lambda} + \frac{1}{\lambda T} \ltwo{\hat g_t}^2 + \frac{1}{\lambda T} \ltwo{\xi_t}^2 \right) \\
	& \le  O\left(   \frac{ \lambda R^2}{T^2} + \frac{b_\calD^2}{\lambda}  + \frac{C^2 + \lambda^2 R^2}{\lambda T}+ \frac{C^2 d \log(1/\delta)}{\lambda n^2 \eps^2}  \right) \\
	& \le  O\left(   \frac{C^2 d \log(1/\delta)}{\lambda n^2 \eps^2}  + \frac{b_\calD^2}{\lambda}  \right),
	\end{align*}
	where the last inequality follows from setting $T =\max(1,\lambda R^2) \cdot  \frac{n^2 \eps^2}{d \log(1/\delta)} $. As $\hat x = \frac{1}{\sum_{t=1}^T (t+3)}\sum_{t=1}^T (t+3) x_t$, the claim follows from convexity of $F_\calD(x)$.

\fi
\end{proof}

For ease of use of Proposition~\ref{prop:dp-erm}, we now provide a simple bound on $b_\data$ which holds with constant probability from a dataset drawn from a distribution satisfying Assumption~\ref{assume:k_ht}.

\begin{lemma}\label{lem:dbounds_constant_prob}
Let $\data \sim \calP^n$, where $\calP$ is a distribution over $\Ds$ satisfying Assumption~\ref{assume:k_ht}. With probability at least $\frac 4 5$, denoting $b_\data$ as in \eqref{eq:bd_def}, we have
\[b_\data \le \frac{5G_k^k}{(k - 1)C^{k - 1}}.\]
\end{lemma}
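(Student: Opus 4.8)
The plan is to bound $b_\data$ in expectation and then invoke Markov's inequality. Recall from \eqref{eq:bd_def} that $b_\data = \max_{x \in \xset}\norm{\frac 1 n \sum_{i \in [n]}(\nabla f_i(x) - \Pi_C(\nabla f_i(x)))}$. The first step is to pull the maximum inside, using that for each fixed $i$, $\norm{\nabla f_i(x) - \Pi_C(\nabla f_i(x))} \le \norm{\nabla f_i(x)}\ind_{\norm{\nabla f_i(x)} \ge C} \le L_{s_i}\ind_{L_{s_i} \ge C}$ uniformly over $x \in \xset$, where $s_i$ is the sample identified with $i$ and $L_{s_i} = \max_{x \in \xset}\norm{\nabla f(x; s_i)}$ as in Assumption~\ref{assume:k_ht}. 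Indeed, the map $v \mapsto v - \Pi_C(v)$ has norm $\max(\norm{v} - C, 0) \le \norm{v}\ind_{\norm{v} \ge C}$, and since $\norm{\nabla f_i(x)} \le L_{s_i}$ for all $x$, the clipping correction is zero whenever $L_{s_i} < C$ and is at most $L_{s_i}$ otherwise. Hence by the triangle inequality, $b_\data \le \frac 1 n \sum_{i \in [n]} L_{s_i}\ind_{L_{s_i} \ge C}$, a quantity with no remaining supremum over $\xset$.

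The second step is to take the expectation over $\data \sim \calP^n$. By linearity and the i.i.d.\ assumption, $\E_{\data}\Brack{\frac 1 n \sum_{i \in [n]} L_{s_i}\ind_{L_{s_i} \ge C}} = \E_{s \sim \calP}\Brack{L_s \ind_{L_s \ge C}}$. Now apply Fact~\ref{fact:moment_bias_bound} with the one-dimensional random variable $X = L_s$ (which satisfies $\E[|X|^k] \le G_k^k$ by Assumption~\ref{assume:k_ht}), giving $\E_{s\sim\calP}[L_s \ind_{L_s \ge C}] \le \frac{G_k^k}{(k-1)C^{k-1}}$. (Alternatively, one can invoke the middle inequality of Fact~\ref{fact:moment_bias_bound} directly.) Therefore $\E_\data[b_\data] \le \frac{G_k^k}{(k-1)C^{k-1}}$.

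The final step is Markov's inequality: since $b_\data \ge 0$, $\Pr\Brack{b_\data > 5 \cdot \E_\data[b_\data]} \le \frac 1 5$, so with probability at least $\frac 4 5$ we have $b_\data \le \frac{5 G_k^k}{(k-1)C^{k-1}}$, as claimed.

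I do not anticipate a serious obstacle here; the only point requiring a little care is the first step, making sure the uniform-over-$\xset$ bound on the clipping correction is legitimately controlled by $L_{s_i} = \max_x \norm{\nabla f(x;s_i)}$ rather than by a pointwise gradient norm, so that the triangle-inequality bound on $b_\data$ is a sum of i.i.d.\ scalars to which Fact~\ref{fact:moment_bias_bound} applies. Everything else is linearity of expectation plus Markov.
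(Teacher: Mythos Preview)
Your proposal is correct and follows essentially the same approach as the paper: bound $\E_{\data}[b_\data]$ by pulling the max over $\xset$ inside the sum via the triangle inequality, apply Fact~\ref{fact:moment_bias_bound}, and finish with Markov. The only cosmetic difference is that the paper works with the maximizing point $x^\star(s) \defeq \argmax_{x \in \xset}\norm{\nabla f(x;s) - \Pi_C(\nabla f(x;s))}$ and applies Fact~\ref{fact:moment_bias_bound} to the random vector $\nabla f(x^\star(s);s)$, whereas you bound the clipping correction directly by the scalar $L_{s_i}\ind_{L_{s_i}\ge C}$ and apply Fact~\ref{fact:moment_bias_bound} to $X = L_s$; both routes yield the same expectation bound $\frac{G_k^k}{(k-1)C^{k-1}}$.
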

\begin{proof}
For every $s \in \Ds$ let $x^\star(s) \defeq \argmax_{x \in \xset} \norm{\nabla f(x; s) - \Pi_C(\nabla f(x; s))}_2$. Then, we have
\begin{align*}
\E_{\data \sim \calP^n}[b_\data] &= \E_{\data \sim \calP^n}\Brack{ \max_{x \in \calX} \norm{\frac 1 n \sum_{i \in [n]} \nabla f_i(x) - \frac 1 n \sum_{i \in [n]} \Pi_C(\nabla f_i(x))}} \\
&\le \frac 1 n \sum_{i \in [n]} \E_{\calD \sim \calP^n}\Brack{\max_{x \in \calX} \norm{\nabla f_i(x) - \Pi_C(\nabla f_i(x))}} \\
&= \E_{s \sim \calP}\Brack{\norm{\nabla f(x^\star(s); s) - \Pi_C(\nabla f(x^\star(s); s))}} \le \frac{\E\Brack{\norm{\nabla f(x^\star(s); s)}^k}}{(k - 1)C^{k - 1}} \le \frac{G_k^k}{(k - 1)C^{k - 1}}.
\end{align*}
The last line used independence of samples, used Fact~\ref{fact:moment_bias_bound} on the random vector $\nabla f(x^\star(s); s)$, and applied Assumption~\ref{assume:k_ht} with the definition of $x^\star(s)$. The conclusion uses Markov's inequality.
\end{proof}

We therefore have the following corollary of Proposition~\ref{prop:dp-erm} and Lemma~\ref{lem:dbounds_constant_prob}.

\begin{corollary}\label{cor:erm_assume_good}
Let $\calD \sim \calP^n$, where $\calP$ is a distribution over $\Ds$ satisfying Assumption~\ref{assume:k_ht}, and let $\xsdl \defeq \argmin_{x \in \calX} \FDL(x)$, following \eqref{eq:fdl_def}. If we run $\CDPSGD$ with parameters in Proposition~\ref{prop:dp-erm} and 
\[C \gets G_k \cdot \Par{\frac{25n^2\rho}{32d}}^{\frac 1 {2k}},\]
$\CDPSGD$ is $\rho$-CDP, and there is a universal constant $\Cerm$ such that with probability $\ge \frac 3 5$ over the randomness of $\calD$ and $\CDPSGD$, $\hx$, the output of $\CDPSGD$, satisfies
\[\norm{\hx - \xsdl} \le \Cerm\Par{\frac{G_k}{\lam}\Par{\frac{\sqrt d}{n\sqrt\rho}}^{1 - \frac 1 k} + \frac{r}{\sqrt{n}}}.\]
$\CDPSGD$ queries at most $\max(n^2, \frac{n^3 \rho}{d})$ sample gradients (using samples in $\calD$).
\end{corollary}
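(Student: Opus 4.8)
The privacy claim is immediate: Proposition~\ref{prop:dp-erm} asserts that $\CDPSGD$ run with the stated step sizes, noise level $\sigma^2 = \frac{2C^2T}{n^2\rho}$, and $T \ge \max(n,\frac{n^2\rho}{d})$ is $\rho$-CDP, and this holds for the particular choice of $C$ in the corollary. The query-complexity claim is also immediate: each of the $T$ iterations evaluates $\nabla f_i(x_t)$ for all $i \in [n]$, so running with the minimal $T = \lceil\max(n,\frac{n^2\rho}{d})\rceil$ uses at most $\max(n^2,\frac{n^3\rho}{d})$ sample gradients.

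For utility, the plan is to first pass from a function-value guarantee to a distance guarantee using strong convexity, then combine two applications of Markov's inequality — one over the draw of $\data$, one over the internal randomness of $\CDPSGD$. Concretely, since $\FDL$ is $\lam$-strongly convex and $\xsdl$ is its minimizer over $\xset$, we have $\frac{\lam}{2}\norm{\hx - \xsdl}^2 \le \FDL(\hx) - \FDL(\xsdl)$; combined with Proposition~\ref{prop:dp-erm} (whose expectation is over the algorithm only, with $\data$ fixed) this gives
\[
\E_{\mathrm{alg}}\Brack{\norm{\hx - \xsdl}^2} \;\le\; \frac{64C^2 d}{\lam^2 n^2\rho} + \frac{2b_\data^2}{\lam^2} + \frac{14 r^2}{n}.
\]
By Lemma~\ref{lem:dbounds_constant_prob}, with probability $\ge \frac45$ over $\data$ we are on the event $\event$ that $b_\data \le \frac{5G_k^k}{(k-1)C^{k-1}}$. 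Condition on $\event$; then applying Markov's inequality to $\norm{\hx-\xsdl}^2$ over the algorithm's randomness, with probability $\ge \frac34$ we have $\norm{\hx-\xsdl}^2 \le 4\E_{\mathrm{alg}}[\norm{\hx-\xsdl}^2]$. Since $\frac45\cdot\frac34 = \frac35$, the overall success probability is $\ge \frac35$ as required, and on this event, using $\sqrt{a+b+c}\le\sqrt a+\sqrt b+\sqrt c$,
\[
\norm{\hx - \xsdl} \;\le\; 2\Par{\frac{8C\sqrt d}{\lam n\sqrt\rho} + \frac{\sqrt2\,b_\data}{\lam} + \frac{\sqrt{14}\,r}{\sqrt n}} \;\le\; 2\Par{\frac{8C\sqrt d}{\lam n\sqrt\rho} + \frac{5\sqrt2}{(k-1)\lam}\cdot\frac{G_k^k}{C^{k-1}} + \frac{\sqrt{14}\,r}{\sqrt n}}.
\]

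It remains to plug in $C = G_k(\frac{25n^2\rho}{32d})^{1/(2k)}$ and simplify. Writing $u \defeq \frac{n^2\rho}{d}$, one checks $\frac{C\sqrt d}{n\sqrt\rho} = G_k(\frac{25}{32})^{1/(2k)}u^{-(k-1)/(2k)} \le G_k(\frac{\sqrt d}{n\sqrt\rho})^{1-\frac1k}$ and $\frac{G_k^k}{C^{k-1}} = G_k(\frac{32}{25})^{(k-1)/(2k)}u^{-(k-1)/(2k)} \le \sqrt{32/25}\cdot G_k(\frac{\sqrt d}{n\sqrt\rho})^{1-\frac1k}$; this is exactly the balance point between the Gaussian-noise term and the clipping-bias term, which is why the exponent $\frac{1}{2k}$ appears. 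Using $\frac{1}{k-1}\le1$ for $k\ge2$, both leading terms are at most $\frac{8G_k}{\lam}(\frac{\sqrt d}{n\sqrt\rho})^{1-\frac1k}$, so the bound becomes $\norm{\hx-\xsdl} \le \frac{32G_k}{\lam}(\frac{\sqrt d}{n\sqrt\rho})^{1-\frac1k} + 2\sqrt{14}\,\frac{r}{\sqrt n}$, establishing the claim with, say, $\Cerm = 32$. The only mildly delicate point is the bookkeeping of the two independent randomness sources to land at probability exactly $\ge\frac35$ (choosing the Markov slack $4$ so that $\frac34$ survives the conditioning); the remainder is routine arithmetic with the choice of $C$.
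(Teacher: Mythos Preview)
Your proposal is correct and follows essentially the same route as the paper: combine Proposition~\ref{prop:dp-erm}, strong convexity, Lemma~\ref{lem:dbounds_constant_prob}, and Markov's inequality, with the given choice of $C$ balancing the noise and bias terms. The only cosmetic difference is that the paper applies Markov (with slack $5$) to the function gap and then invokes strong convexity, combining the two failure events via a union bound ($1-\tfrac15-\tfrac15=\tfrac35$), whereas you first pass to $\norm{\hx-\xsdl}^2$ and then condition and multiply ($\tfrac45\cdot\tfrac34=\tfrac35$); both are valid and yield the same conclusion.
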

\begin{proof}
Condition on the conclusion of Lemma~\ref{lem:dbounds_constant_prob}, which holds with probability $\frac 4 5$. Therefore, Markov's inequality shows that with probability at least $\frac 3 {5}$, after a union bound with Lemma~\ref{lem:dbounds_constant_prob},
\begin{align*}\frac \lam 2 \norm{\hx - \xsdl}^2 &\le \FDL(\hx) - \FDL(\xsdl) \\
&\le \frac{160 C^2 d}{\lam n^2\rho} + \frac{125 G_k^{2k}}{\lam C^{2(k - 1)}} + \frac{7 \lam r^2}{n} \le \frac{320G_k^2}{\lam}\Par{\frac{d}{n^2\rho}}^{1 - \frac 1 k}+ \frac{7\lam r^2}{n},\end{align*}
where we used strong convexity in the first inequality, and plugged in our choice of $C$ in the last. The conclusion follows by rearranging the above display, and using $\sqrt{a^2 + b^2} \le a + b$ for $a, b \in \R_{\ge 0}.$
\end{proof}

\subsection{Localizing regularized population loss minimizers}\label{ssec:pop_gen}

In this section, we use generalization arguments from the SCO literature to show how $\CDPSGD$ acts as an oracle which, with a constant probability of success, returns a point which is near the minimizer of a regularized population loss. We begin with a standard helper statement.

\begin{lemma}\label{lem:restrict_radius}
Let $\lam \ge 0$, let $\calP$ be a distribution over $\Ds$ satisfying Assumption~\ref{assume:k_ht}, let $\bx \in \calX$ where $\calX \subset \R^d$ is compact and convex, and let
\begin{equation}\label{eq:reg_pop_def}
\xslbx \defeq \argmin_{x \in \xset}\Brace{F_{\calP}(x) + \frac \lam 2 \norm{x - \bx}^2},\text{ where } F_{\calP}(x) \defeq \E_{s \sim \calP}\Brack{f(x; s)}.
\end{equation}
Then $\norms{\bx - \xslbx} \le \frac{2G_1}{\lam}$.
\end{lemma}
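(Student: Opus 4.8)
The plan is to use the first-order optimality condition for the strongly convex objective defining $\xslbx$, combined with the Lipschitz bound on $\Fpop$ from Lemma~\ref{lem:fpop_lip}. Set $h(x) \defeq F_{\calP}(x) + \frac{\lam}{2}\norm{x - \bx}^2$, which is $\lam$-strongly convex over $\xset$. Since $\xslbx$ minimizes $h$ over the convex set $\xset$ and $\bx \in \xset$, the standard variational inequality gives $\inprod{\nabla h(\xslbx)}{\bx - \xslbx} \ge 0$. Expanding $\nabla h(\xslbx) = \nabla F_{\calP}(\xslbx) + \lam(\xslbx - \bx)$, this rearranges to $\lam\norm{\bx - \xslbx}^2 \le \inprod{\nabla F_{\calP}(\xslbx)}{\bx - \xslbx}$.

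From here I would apply Cauchy--Schwarz to the right-hand side: $\inprod{\nabla F_{\calP}(\xslbx)}{\bx - \xslbx} \le \norm{\nabla F_{\calP}(\xslbx)}\norm{\bx - \xslbx} \le G_1\norm{\bx - \xslbx}$, where the last inequality is exactly the $G_1$-Lipschitz bound on $\Fpop$ established in Lemma~\ref{lem:fpop_lip} (equivalently, the bound $\max_{x \in \xset}\norm{\E_{s\sim\calP}[\nabla f(x;s)]} \le G_1$ derived in its proof). Combining, $\lam\norm{\bx - \xslbx}^2 \le G_1\norm{\bx - \xslbx}$, so dividing through by $\norm{\bx - \xslbx}$ (trivial if it is zero) yields $\norm{\bx - \xslbx} \le \frac{G_1}{\lam}$, which is even slightly stronger than the claimed $\frac{2G_1}{\lam}$.

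An alternative, perhaps cleaner route avoids differentiability concerns: compare function values at $\xslbx$ and $\bx$. By $\lam$-strong convexity of $h$ and optimality of $\xslbx$, we have $\frac{\lam}{2}\norm{\bx - \xslbx}^2 \le h(\bx) - h(\xslbx) = F_{\calP}(\bx) - F_{\calP}(\xslbx) - \frac{\lam}{2}\norm{\xslbx - \bx}^2$. Since $\Fpop$ is $G_1$-Lipschitz, $F_{\calP}(\bx) - F_{\calP}(\xslbx) \le G_1\norm{\bx - \xslbx}$, so $\lam\norm{\bx - \xslbx}^2 \le G_1\norm{\bx - \xslbx}$, giving the same conclusion with room to spare. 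I would likely present this version since it only invokes convexity/strong-convexity inequalities and the Lipschitz property, all already available.

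There is essentially no obstacle here — this is a routine strong-convexity-plus-Lipschitz displacement bound. The only minor point worth stating carefully is that the minimizer $\xslbx$ exists and is unique (which follows from $\calX$ being compact and $h$ being continuous and strictly convex when $\lam > 0$; when $\lam = 0$ the claimed bound $\frac{2G_1}{\lam}$ is vacuous, so we may assume $\lam > 0$), and that the constant $2$ in the statement is not tight, which is harmless.
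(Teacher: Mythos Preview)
Your proposal is correct, and your second (function-value) approach is essentially identical to the paper's proof: the paper also compares $h(\bx)$ and $h(\xslbx)$ using strong convexity and the $G_1$-Lipschitz bound from Lemma~\ref{lem:fpop_lip}, except that it discards the $-\frac{\lam}{2}\norm{\xslbx-\bx}^2$ term rather than moving it to the left, which is why the paper ends up with the looser constant $\frac{2G_1}{\lam}$ while you get $\frac{G_1}{\lam}$. Your first (variational-inequality) route is an equally valid minor variant yielding the same sharper bound.
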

\begin{proof}
Let $r \defeq \norm{\bx - x^\star}$. By strong convexity and the definition of $\xslbx$,
\[\frac{\lam r^2}{2} \le F_{\calP}(\bx) - F_{\calP}(x^\star) - \frac \lam 2 \norm{x^\star - \bx}^2 \le F_{\calP}(\bx) - F_{\calP}(x^\star) \le G_1 r.\]
Here, we used that $\Fpop$ is $G_1$-Lipschitz (Lemma~\ref{lem:fpop_lip}), and rearranging yields the conclusion.
\end{proof}

Next, we apply a result on generalization due to \cite{LowyR23} to bound the expected distance between a restricted empirical regularized minimizer and the minimizer of the population variant in \eqref{eq:reg_pop_def}.

\begin{lemma}\label{lem:empirical_close}
Let $\lam \ge 0$, let $\calD \sim \calP^n$ where $\calP$ is a distribution over $\Ds$ satisfying Assumption~\ref{assume:k_ht}, and let $\bx \in \xset$ where $\xset \subset \R^d$ is compact and convex. Following notation \eqref{eq:truncate_function}, \eqref{eq:fdl_def}, let
\[y \defeq \argmin_{x \in \xset}\Brace{\Brack{\FD}_{\ball(\bx, r)}(x) + \frac \lam 2 \norm{x - \bx}^2},\; \text{ for } r \defeq \frac{2 G_1}{\lam}\]
and let $\xslbx$ be defined as in \eqref{eq:reg_pop_def}. Then with probability $\ge 0.95$ over the randomness of $\calD \sim \calP^n$,
\[\norm{y - \xslbx}_2 \le \frac{90G_2}{\lam \sqrt{n}}.\]
\end{lemma}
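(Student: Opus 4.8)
The plan is to place both the empirical and the population regularized minimizations on a common compact convex domain, and then invoke the strongly-convex generalization bound of \cite{LowyR23}. First, apply Lemma~\ref{lem:restrict_radius} with the given $\lam$ and $\bx$ to conclude $\norm{\bx - \xslbx} \le \frac{2G_1}{\lam} = r$, i.e.\ $\xslbx \in \ball(\bx, r)$. Hence, writing $\mathcal{K} \defeq \xset \cap \ball(\bx, r)$, which is compact and convex, we have the equivalent characterizations
\[ y = \argmin_{x \in \mathcal{K}}\Brace{\FD(x) + \frac{\lam}{2}\norm{x - \bx}^2},\qquad \xslbx = \argmin_{x \in \mathcal{K}}\Brace{\Fpop(x) + \frac{\lam}{2}\norm{x - \bx}^2}, \]
since the truncation $\Brack{\FD}_{\ball(\bx,r)}$ only discards points outside $\ball(\bx,r)$, and the unrestricted population minimizer already lies inside $\ball(\bx,r)$. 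Both objectives are $\lam$-strongly convex over $\mathcal{K}$.

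Next, set $h(x) \defeq \Fpop(x) + \frac{\lam}{2}\norm{x-\bx}^2$ and $\hat h(x) \defeq \FD(x) + \frac{\lam}{2}\norm{x - \bx}^2$, and note $h - \hat h = \Fpop - \FD$. By $\lam$-strong convexity of $h$ and first-order optimality of $\xslbx$ over $\mathcal{K}$, then adding and subtracting $\hat h$ and using optimality of $y$ for $\hat h$ over $\mathcal{K}$,
\[ \frac{\lam}{2}\norm{y - \xslbx}^2 \;\le\; h(y) - h(\xslbx) \;\le\; \Par{\Fpop - \FD}(y) - \Par{\Fpop - \FD}(\xslbx). \]
Since $\xslbx$ is independent of $\calD$, we have $\E_{\calD}\Brack{\Par{\Fpop - \FD}(\xslbx)} = 0$, and what remains, $\E_{\calD}\Brack{\Par{\Fpop - \FD}(y)}$, is precisely the generalization error of the regularized ERM solution $y$. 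Over $\mathcal{K}$ each $f(\cdot;s)$ is $L_s$-Lipschitz with $\E_{s \sim \calP}[L_s^2] \le G_2^2$ by Assumption~\ref{assume:k_ht}, so the replace-one uniform-stability argument of \cite{LowyR23} applies: resampling one coordinate perturbs the minimizer by $O\Par{\frac{L_s + L_{s'}}{\lam n}}$, and integrating the resulting cross terms (controlled via Cauchy--Schwarz by $\E[L_s L_{s'}] \le G_2^2$) yields $\E_{\calD \sim \calP^n}\norm{y - \xslbx}^2 \le \frac{cG_2^2}{\lam^2 n}$ for an absolute constant $c$. Finally, Markov's inequality applied to the nonnegative random variable $\norm{y - \xslbx}^2$ upgrades this to $\norm{y - \xslbx} \le \frac{90G_2}{\lam\sqrt n}$ with probability $\ge 0.95$, the constant $90$ chosen to absorb $\sqrt{20c}$.

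The main obstacle is the generalization step: carrying the standard strongly-convex stability argument through with non-uniform, only second-moment-bounded Lipschitz constants, verifying that the $\ball(\bx,r)$-truncation preserves $\lam$-strong convexity and that the replace-one perturbation bound survives projection onto $\mathcal{K}$, and tracking constants carefully enough that they survive the Markov step. It is worth emphasizing that the final bound depends only on $G_2$, with the heavier $G_1$ entering solely through the localization radius $r$; this separation is exactly what restricting to $\ball(\bx,r)$ (rather than all of $\xset$) buys us, and is the reason the strongly convex generalization machinery can be invoked cleanly.
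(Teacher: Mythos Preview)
Your proposal is correct and follows essentially the same route as the paper: restrict to $\mathcal{K} = \xset \cap \ball(\bx,r)$ via Lemma~\ref{lem:restrict_radius}, use $\lam$-strong convexity to reduce $\norm{y-\xslbx}^2$ to the generalization gap of the regularized ERM, invoke the replace-one stability bound of \cite{LowyR23} (their Proposition~29) under a second-moment Lipschitz assumption, and finish with Markov. The only cosmetic difference is that the paper applies Proposition~29 to the full per-sample regularized losses $\tf(\cdot;s)+\frac{\lam}{2}\norm{\cdot-\bx}^2$, picking up an extra $\lam r = 2G_1$ in the Lipschitz moment bound (hence $2G_2^2+8G_1^2 \le 20G_2^2$), whereas you observe that the shared regularizer cancels in the replace-one difference and work directly with the $L_s$; your variant is slightly cleaner and yields a marginally better implicit constant, but the argument is the same.
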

\begin{proof}
For each $f(x; s)$, define a restricted variant $\tf(x; s) \defeq f_{\ball(\bx, r)}(x; s)$, and let $\tFpop \defeq \E_{s \sim \Ds} \tf(\cdot; s)$. Similarly, define $\widetilde{F}_\data$ to be the restricted variant of the empirical loss $F_\data$. Because $\tFpop$ is pointwise larger than $\Fpop$ and $\xslbx \in \ball(\bx, r)$ by Lemma~\ref{lem:restrict_radius}, it is clear that
\[\xslbx = \argmin_{x\in\xset}\Brace{\tFpop(x) + \frac \lam 2 \norm{x - \bx}^2},\]
and $y$ is the minimizer of the empirical (restricted) variant of the above display. Moreover, each of the regularized functions $\tf(x; s) + \frac \lam 2 \norm{x - \bx}^2$ has a Lipschitz constant at most $\lam r = 2G_1$ larger than its unregularized counterpart in $\xset \cap \ball(\bx, r)$, so these functions satisfy the moment bound in Assumption~\ref{assume:k_ht} for $j = 2$ with a bound of $2G_2^2 + 8G_1^2$. Now, applying Proposition 29, \cite{LowyR23} yields
\begin{align*}&\E\Brack{\Par{\tFpop(y) + \frac \lam 2 \norm{y - \bx}^2}- \Par{\tFpop(\xslbx) + \frac \lam 2 \norm{\xslbx - \bx}^2}}\\
= \;&\E\Brack{\Par{\widetilde{F}_{\calD}(y) + \frac \lam 2 \norm{y - \bx}^2}- \Par{\widetilde{F}_{\calD}(\xslbx) + \frac \lam 2 \norm{\xslbx - \bx}^2}} \\
+\; &\E\Brack{\Par{\tFpop(y) + \frac \lam 2 \norm{y - \bx}^2}- \Par{\widetilde{F}_{\calD}(y) + \frac \lam 2 \norm{y - \bx}^2}} \\
\le \;&0 + \frac{4G_2^2 + 16G_1^2}{\lam n} = \frac{4G_2^2 + 16G_1^2}{\lam n} \le \frac{20G_2^2}{\lam n}.
\end{align*}
The first equality used that $\xslbx$ is independent of sampling $\calD$, and the second used $\hx$ is the empirical risk minimizer. The conclusion follows from Markov's inequality and strong convexity.
\end{proof}

\begin{corollary}\label{cor:close_to_pop_min}
Let $\calD \sim \calP^n$, where $\calP$ is a distribution over $\Ds$ satisfying Assumption~\ref{assume:k_ht}, and let $\bx \in \xset$ where $\xset \subset \R^d$ is compact and convex. Let $\lam \ge 0$ and define $\xslbx$ as in \eqref{eq:reg_pop_def}.
There is a $\rho$-CDP algorithm $\alg$ which queries $\max(n^2, \frac{n^3 \rho}{d})$ sample gradients (using samples in $\calD$). With probability $0.55$ over the randomness of $\alg$ and $\calD$, $\alg$ returns $\hx$ satisfying, for a universal constant $\Crp$,
\[\norm{\hx - \xslbx} \le \Crp\Par{\frac{G_k}{\lam}\Par{\frac{\sqrt d}{n\sqrt{\rho}}}^{1 - \frac 1 k} + \frac{G_2}{\lam\sqrt{ n}}}.\]
\end{corollary}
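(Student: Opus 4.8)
The plan is to compose the empirical-to-population bound from Corollary~\ref{cor:erm_assume_good} with the generalization bound from Lemma~\ref{lem:empirical_close}, using the triangle inequality through an intermediate restricted empirical minimizer. First I would define the algorithm $\alg$ to be a call to $\CDPSGD$ on the dataset $\calD$ with the regularized, \emph{restricted} objective $\FDL$ where the domain is replaced by $\xset \cap \ball(\bx, r)$ for $r \defeq \frac{2G_1}{\lam}$ and the regularizer is recentered at $\bx$ (this is a legitimate input to Algorithm~\ref{alg:clipped-dpsgd} after a translation that sends $\bx \to \0_d$, noting $\bx \in \xset$ so the shifted domain contains the origin, and the shifted domain lies in $\ball(r')$ for $r' = D + r$). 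The privacy and gradient-query count then come for free from Proposition~\ref{prop:dp-erm}/Corollary~\ref{cor:erm_assume_good}: $\rho$-CDP and at most $\max(n^2, \frac{n^3\rho}{d})$ sample gradients, since recentering and restricting the domain do not change the sensitivity argument (clipping still controls the $\frac{2C}{n}$-sensitivity of the sample-gradient average).

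Next I would set up the two comparison points. Let $y \defeq \argmin_{x \in \xset}\{[\FD]_{\ball(\bx,r)}(x) + \frac\lam2\norm{x - \bx}^2\}$ as in Lemma~\ref{lem:empirical_close}; this is exactly the minimizer of the objective that $\alg$ is running $\CDPSGD$ on. Corollary~\ref{cor:erm_assume_good} (applied with the restricted, recentered objective, whose sample functions still satisfy Assumption~\ref{assume:k_ht} up to the constant-factor inflation of moment bounds from the regularizer, as already argued inside the proof of Lemma~\ref{lem:empirical_close}) gives that with probability $\ge \frac35$ over $\calD$ and $\CDPSGD$, the output $\hx$ satisfies $\norm{\hx - y} \le \Cerm(\frac{G_k}{\lam}(\frac{\sqrt d}{n\sqrt\rho})^{1 - 1/k} + \frac{r'}{\sqrt n})$ for the radius $r' = O(D + \frac{G_1}{\lam})$; and Lemma~\ref{lem:empirical_close} gives that with probability $\ge 0.95$ over $\calD$, $\norm{y - \xslbx} \le \frac{90 G_2}{\lam\sqrt n}$. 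Union bounding these two events (note the $0.95$ event depends only on $\calD$, and conditioned on it the $\frac35$ event still has probability $\ge\frac35$ over the remaining randomness, or one can just union bound directly) gives a joint success probability $\ge \frac35 - 0.05 = 0.55$, and on this event the triangle inequality yields $\norm{\hx - \xslbx} \le \norm{\hx - y} + \norm{y - \xslbx}$, which is bounded by the claimed expression after collecting constants.

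The one point requiring a little care — and the likely main obstacle — is the $\frac{r'}{\sqrt n}$ term coming out of Corollary~\ref{cor:erm_assume_good}: we need the radius of the domain fed to $\CDPSGD$ to be $O(D + \frac{G_1}{\lam})$ rather than something larger, so that this term merges cleanly into the stated $\frac{G_2}{\lam\sqrt n}$ bound (using $G_1 \le G_2$ and absorbing the $D$ dependence, or more precisely noting that the $D/\sqrt n$ contribution is dominated by the nonprivate SCO rate and will be handled at the localization layer in Section~\ref{ssec:pop_localize}). This is why restricting to $\ball(\bx, r)$ with $r = \frac{2G_1}{\lam}$ before running $\CDPSGD$ is essential: Lemma~\ref{lem:restrict_radius} guarantees $\xslbx$ (and hence the relevant minimizers) lie in this small ball, so passing a radius of order $\frac{G_1}{\lam}$ (after intersecting with $\xset$, a radius of $O(\min(D, \frac{G_1}{\lam}))$ around $\bx$) to $\CDPSGD$ loses nothing. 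Everything else — the privacy accounting, the query count, and the probabilistic union bound — is a direct quotation of the preceding results.
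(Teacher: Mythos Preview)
Your proposal is correct and follows essentially the same approach as the paper: restrict and shift the domain to $\ball(\bx, r)$ with $r = \frac{2G_1}{\lam}$, apply Corollary~\ref{cor:erm_assume_good} to bound $\norm{\hx - y}$, apply Lemma~\ref{lem:empirical_close} to bound $\norm{y - \xslbx}$, and combine via the triangle inequality and a union bound giving success probability $\ge \frac 3 5 - 0.05 = 0.55$.

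One small clean-up: after shifting $\bx \to \0_d$, the domain fed to $\CDPSGD$ is $(\xset - \bx) \cap \ball(r) \subseteq \ball(r)$, so the radius parameter in Corollary~\ref{cor:erm_assume_good} is simply $r = \frac{2G_1}{\lam}$, not $D + r$. This makes the $\frac{r}{\sqrt n}$ term equal to $\frac{2G_1}{\lam \sqrt n} \le \frac{2G_2}{\lam\sqrt n}$ directly, and no residual $\frac{D}{\sqrt n}$ term ever appears. Your worry about ``absorbing the $D$ dependence'' or deferring it to the localization layer is therefore unnecessary, and the bound matches the corollary statement exactly.
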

\begin{proof}
Condition on the conclusion of Lemma~\ref{lem:empirical_close} holding for our dataset, which loses $0.05$ in the failure probability. Next, consider the guarantee of Corollary~\ref{cor:erm_assume_good}, when applied to the truncated and shifted functions, $\tf(x; s) \gets f_{\ball(\bx, r)}(x - \bx; s)$, where $r$ is set as in Lemma~\ref{lem:empirical_close}. It shows that with probability $\frac 3 5$, $\norm{\hx + \bx - y} = O(\frac{G_k}{\lam}(\frac{\sqrt d}{n\sqrt\rho})^{1 - \frac 1 k} + \frac{\sqrt{\lam}r}{\sqrt n})$ , for the point $\hx$ returned by the algorithm, and $y$ the exact minimizer of the empirical loss restricted to $\ball(\bx, r)$. Therefore, the conclusion follows by overloading $\hx \gets \hx + \bx$, applying the triangle inequality with the conclusions of Corollary~\ref{cor:erm_assume_good} and~\ref{cor:close_to_pop_min}, and taking a union bound over their failure probabilities.
\end{proof}

\subsection{Population-level localization}\label{ssec:pop_localize}

In this section, we provide a generic population-level localization scheme for stochastic convex optimization, which may be of broader interest. Our localization scheme is largely patterned off of the analogous localization methods developed by \cite{FeldmanKoTa20}, but directly argues about contraction to population-level regularized minimizers (as opposed to empirical minimizers), which makes it compatible with our framework in Section~\ref{ssec:subproblem} and~\ref{ssec:pop_gen}, specificially the guarantees of Corollary~\ref{cor:close_to_pop_min}.

\begin{algorithm2e}
\caption{$\PLOC(x_0, \calP, \lam, I)$}
\label{alg:sco}
\DontPrintSemicolon
\codeInput Initial point $x_0 \in \xset$, distribution $\calP$ over samples in $\calS$, for $\xset, \calS$ inducing a $k$-heavy-tailed DP-SCO problem as in Definition~\ref{def:psco}, with a population loss $\Fpop \defeq \E_{s \sim \calS}[f(\cdot; s)]$, $\lam \ge 0$, $I \in \N$\;
\For{$i \in [I]$}{
$\lam_i \gets \lam \cdot 32^i$\;
$x_{i} \gets $ any point satisfying
\begin{equation}\label{eq:localize_guarantee}\norm{x_i - x^\star_i} \le \frac{\Delta 4^i}{\lam_i},\text{ where } x^\star_i \defeq \argmin_{x \in \calX}\Brace{\Fpop(x) + \frac{\lam_i}{2}\norm{x - x_{i - 1}}^2}\end{equation}\;
}\label{line:localize}
\codeReturn $x_I$
\end{algorithm2e}

\begin{proposition}\label{prop:pop_localize}
Following notation of Algorithm~\ref{alg:sco}, let $x^\star \defeq \argmin_{x \in \xset} \Fpop(x)$. Then,
\[\Fpop(x_I) - \Fpop(x^\star) \le \frac{G_1\Delta}{\lam 8^I} + \frac{\Delta^2}{4\lam} + \frac{\lam D^2}{2}.\]
In particular, choosing $\lam$ to optimize this bound, we have
\[\Fpop(x_I) - \Fpop(x^\star) \le 2D\sqrt{\frac{G_1 \Delta}{8^I}} + D\Delta.\]
\end{proposition}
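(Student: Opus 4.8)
The plan is to track how the distance to the global minimizer $x^\star$ evolves across the $I$ phases, and then convert the final distance bound into a function-value bound. Concretely, I would prove by induction on $i \in \{0,1,\dots,I\}$ that $\norm{x_i - x^\star} \le \frac{D}{4^i} + \frac{\Delta}{\lam_i}\cdot(\text{geometric tail})$, or more cleanly that $\norm{x_i - x^\star} \le \frac{D}{4^i} + \frac{c\Delta}{\lam_i}$ for a suitable absorbing constant $c$; the base case $i=0$ is just $\norm{x_0 - x^\star}\le D = \diam(\calX)$. For the inductive step, the key geometric fact is that $x^\star_i$, the minimizer of the $\lam_i$-regularized loss centered at $x_{i-1}$, contracts toward $x^\star$: since $x\mapsto \Fpop(x) + \frac{\lam_i}{2}\norm{x - x_{i-1}}^2$ is $\lam_i$-strongly convex with minimizer $x^\star_i$, comparing its value at $x^\star_i$ and at $x^\star$ gives
\begin{equation*}
\frac{\lam_i}{2}\norm{x^\star_i - x^\star}^2 \le \Fpop(x^\star) + \frac{\lam_i}{2}\norm{x^\star - x_{i-1}}^2 - \Fpop(x^\star_i) - \frac{\lam_i}{2}\norm{x^\star_i - x_{i-1}}^2 \le \frac{\lam_i}{2}\norm{x^\star - x_{i-1}}^2,
\end{equation*}
using $\Fpop(x^\star)\le\Fpop(x^\star_i)$, so $\norm{x^\star_i - x^\star}\le \norm{x_{i-1}-x^\star}$. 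Combined with the guarantee \eqref{eq:localize_guarantee} that $\norm{x_i - x^\star_i}\le \frac{\Delta 4^i}{\lam_i}$ and the triangle inequality, and plugging in $\lam_i = \lam\cdot 32^i$ so that $\frac{\Delta 4^i}{\lam_i} = \frac{\Delta}{\lam 8^i}$, I get $\norm{x_i - x^\star}\le \norm{x_{i-1}-x^\star} + \frac{\Delta}{\lam 8^i}$. I would actually want a contraction factor to make the induction close, so I'd instead extract $\norm{x^\star_i - x^\star} \le \frac14\norm{x_{i-1}-x^\star}$ by being slightly more careful (or just telescope the additive bound). Telescoping the additive recursion from $i=0$ to $I$ yields $\norm{x_I - x^\star}\le D + \frac{\Delta}{\lam}\sum_{i=1}^{I}8^{-i} \le D + \frac{\Delta}{7\lam}$; but to match the claimed $\frac{G_1\Delta}{\lam 8^I}$ term I believe the intended recursion genuinely contracts the $D$ term, i.e.\ $\norm{x_i-x^\star}\le \frac14\norm{x_{i-1}-x^\star} + \frac{\Delta}{\lam 8^i}$, giving $\norm{x_I - x^\star}\le \frac{D}{4^I} + O(\frac{\Delta}{\lam 8^I})$ after summing.

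Once the distance bound $\norm{x_I - x^\star} =: \delta_I$ is in hand, the function-value conversion is routine: since $x^\star_I$ minimizes a $\lam_I$-strongly convex regularized loss, I would write
\begin{equation*}
\Fpop(x_I) - \Fpop(x^\star) \le \Par{\Fpop(x_I) + \tfrac{\lam_I}{2}\norm{x_I - x_{I-1}}^2} - \Par{\Fpop(x^\star) + \tfrac{\lam_I}{2}\norm{x^\star - x_{I-1}}^2} + \tfrac{\lam_I}{2}\norm{x^\star - x_{I-1}}^2,
\end{equation*}
bound the bracketed difference using that $x_I$ is close to the minimizer $x^\star_I$ together with $G_1$-Lipschitzness of $\Fpop$ (Lemma~\ref{lem:fpop_lip}) and the optimality gap from \eqref{eq:localize_guarantee}, and bound $\frac{\lam_I}{2}\norm{x^\star-x_{I-1}}^2$ by $\frac{\lam_I}{2}(\norm{x^\star-x_I} + \norm{x_I - x_{I-1}})^2$ or directly by $\frac{\lam_I D^2}{2}$. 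Tracking the scalings $\lam_I = \lam 32^I$ and $\frac{\Delta 4^I}{\lam_I}=\frac{\Delta}{\lam 8^I}$, the three resulting terms should assemble into $\frac{G_1\Delta}{\lam 8^I} + \frac{\Delta^2}{4\lam} + \frac{\lam D^2}{2}$ up to the stated constants. For the "in particular" clause, I optimize over $\lam$: the bound is of the form $\frac{A}{\lam} + B\lam$ with $A = \frac{G_1\Delta}{8^I} + \frac{\Delta^2}{4}$ and $B = \frac{D^2}{2}$, so $\lam = \sqrt{A/B}$ gives $2\sqrt{AB} \le 2\sqrt{\frac{G_1\Delta}{8^I}\cdot\frac{D^2}{2}} + 2\sqrt{\frac{\Delta^2}{4}\cdot\frac{D^2}{2}} \le 2D\sqrt{\frac{G_1\Delta}{8^I}} + D\Delta$, using $\sqrt{a+b}\le\sqrt a+\sqrt b$ and absorbing the $\frac{1}{\sqrt2}$ factors into the stated (slightly loose) constants.

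The main obstacle I anticipate is getting the recursion to genuinely \emph{contract} the initial $D$ term rather than merely keep it additive: naively the triangle-inequality step only gives $\norm{x^\star_i - x^\star}\le\norm{x_{i-1}-x^\star}$, which is not enough to produce the $8^{-I}$ decay on the $G_1\Delta$ term. The fix is to notice that $x^\star_i$ is \emph{not} just within the previous distance of $x^\star$; the strong-convexity comparison above, when combined with the fact that $x^\star$ is the \emph{unconstrained-regularization} minimizer of $\Fpop$ over $\calX$, actually forces $x^\star_i$ much closer — one should get $\norm{x^\star_i - x^\star}\lesssim \frac{G_1}{\lam_i}$ from a Lipschitz argument à la Lemma~\ref{lem:restrict_radius} (with $\bar x = x_{i-1}$), and simultaneously $\norm{x^\star_i - x^\star}\le\norm{x_{i-1}-x^\star}$. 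Feeding the first bound in once at the start and the contraction thereafter, or more simply using $\min$ of the two bounds at each step, is what produces both the $\frac{D}{4^I}$-type decay of the initial-distance contribution and the clean geometric series; I'd need to be careful that the constant $c$ in the inductive hypothesis is chosen large enough to absorb the $\frac{\Delta}{\lam 8^i}$ increments across all $i$ without blowing up, which is where the $32 = 4\cdot 8$ choice of ratio for $\lam_i$ is doing its work.
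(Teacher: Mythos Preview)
Your approach has a genuine gap. The hoped-for contraction $\norm{x^\star_i - x^\star} \le \frac14\norm{x_{i-1} - x^\star}$ is false in general: if $\Fpop(x) = \langle c, x\rangle$ is linear on $\xset$, the (unconstrained) regularized minimizer is $x^\star_i = x_{i-1} - c/\lam_i$, so when $\norm{c}/\lam_i \ll \norm{x_{i-1} - x^\star}$ one has $\norm{x^\star_i - x^\star} \approx \norm{x_{i-1} - x^\star}$ with no constant-factor shrinkage. Your fallback via Lemma~\ref{lem:restrict_radius} also misfires: that lemma with $\bar x = x_{i-1}$ yields $\norm{x^\star_i - x_{i-1}} \le \frac{2G_1}{\lam_i}$, not the $\norm{x^\star_i - x^\star} \lesssim \frac{G_1}{\lam_i}$ you want; the same linear example shows the latter is simply false. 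So you are left with the purely additive recursion $\norm{x_i - x^\star} \le \norm{x_{i-1} - x^\star} + \frac{\Delta}{\lam 8^i}$, hence $\norm{x_I - x^\star}$ stays of order $D$. Your final-step conversion then contains $\frac{\lam_I}{2}\norm{x^\star - x_{I-1}}^2$ with $\lam_I = \lam\cdot 32^I$, which blows up exponentially in $I$.

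The paper sidesteps distances to $x^\star$ altogether by telescoping directly in \emph{function value}: setting $x^\star_0 \defeq x^\star$,
\[
\Fpop(x_I) - \Fpop(x^\star) = \bigl(\Fpop(x_I) - \Fpop(x^\star_I)\bigr) + \sum_{i \in [I]} \bigl(\Fpop(x^\star_i) - \Fpop(x^\star_{i-1})\bigr),
\]
and bounds each increment via optimality of $x^\star_i$ for the regularized objective, which gives $\Fpop(x^\star_i) - \Fpop(x^\star_{i-1}) \le \frac{\lam_i}{2}\norm{x^\star_{i-1} - x_{i-1}}^2$. The crucial point is that the distance appearing here is $\norm{x^\star_{i-1} - x_{i-1}}$, precisely what \eqref{eq:localize_guarantee} controls (it is at most $\frac{\Delta 4^{i-1}}{\lam_{i-1}}$ for $i \ge 2$, and at most $D$ for $i=1$); no control of $\norm{x_i - x^\star}$ is ever needed. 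Summing the resulting geometric series yields the three claimed terms, and $G_1$-Lipschitzness of $\Fpop$ handles $\Fpop(x_I) - \Fpop(x^\star_I)$. Your optimization over $\lam$ for the ``in particular'' clause is correct once the first inequality is established.
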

\begin{proof}
We denote $x^\star_0 \defeq x^\star$ throughout the proof. First, we expand
\begin{align*}
\Fpop(x_I) - \Fpop(x^\star_0) &= \Fpop(x_I) - \Fpop(x^\star_I) + \Fpop(x^\star_I) - \Fpop(x^\star_0) \\
&= \Fpop(x_I) - \Fpop(x^\star_I) + \sum_{i \in [I]} \Fpop(x^\star_i) - \Fpop(x^\star_{i - 1}).
\end{align*}
Moreover, for each $i \in [I]$, since $x^\star_i$ minimizes $\Fpop(x) + \frac{\lam_i}{2}\norm{x - x_{i - 1}}^2$, 
\[\Fpop(x^\star_i) \le \Fpop(x^\star_i) + \frac{\lam_i}{2}\norm{x^\star_i - x_{i - 1}}^2 \le \Fpop(x^\star_{i - 1}) + \frac{\lam_i}{2}\norm{x^\star_{i - 1} - x_{i - 1}}^2.\]
Combining the above two displays, and using that $\Fpop$ is $G_1$-Lipschitz (Lemma~\ref{lem:fpop_lip}), we have
\begin{align*}\Fpop(x_I) - \Fpop(x^\star) &\le G_1\norm{x_I - x^\star_I} + \sum_{i \in [I]} \frac{\lam_i}{2}\norm{x^\star_{i - 1}-x_{i - 1}}^2 \\
&\le \frac{G_1\Delta}{\lam 8^I} + \sum_{i \in [I - 1]} \frac{\Delta^2 16^i}{2\lam_i} + \frac{\lam D^2}{2} \le \frac{G_1 \Delta}{\lam 8^I} + \frac{\Delta^2}{4\lam} + \frac{\lam D^2}{2}, \end{align*}
where we used the diameter bound assumption $\diam(\xset) = D$, as in Definition~\ref{def:psco}.
\end{proof}

In particular, note that Corollary~\ref{cor:close_to_pop_min} shows that by using $n$ samples from $\calP$ and a CDP budget of $\rho$, with constant probability, we can satisfy the requirement \eqref{eq:localize_guarantee} with
\[\Delta 4^i = O\Par{G_k\Par{\frac{\sqrt d}{n\sqrt \rho}}^{1 - \frac 1 k} + \frac{G_2}{\sqrt n}}.\]
By plugging this guarantee into the aggregation subroutine in Fact~\ref{fact:geom_agg}, we have our SCO algorithm.

\begin{algorithm2e}
\caption{$\AggERM(\bx, \lam, J, \rho, \{s_\ell\}_{\ell \in [nJ]}, R)$}
\label{alg:aggerm}
\DontPrintSemicolon
\codeInput Regularization center $\bx \in \xset$, regularization $\lam \in \R_{\ge 0}$, sample split parameter $J \in \N$, privacy parameter $\rho \in \R_{\ge 0}$, samples $\{s_\ell\}_{\ell \in [nJ]} \subset \calS$, distance bound $R \in \R_{\ge 0}$\;
\For{$j \in [J]$}{
$\calD^j \gets \{s_\ell\}_{(j - 1)n< \ell \le jn}$ for all $j \in [J]$\;
$x^j \gets $ result of Corollary~\ref{cor:close_to_pop_min} using $\calD^j$, on loss defined by $\bx, \lam$ with privacy parameter $\rho$ \;
}
$x \gets \agg(\{x^j\}_{j \in [J]}, R)$ (see Fact~\ref{fact:geom_agg})\;
\codeReturn $x$
\end{algorithm2e}

\begin{theorem}\label{thm:ht_psco}
Consider an instance of $k$-heavy-tailed private SCO, following notation in Definition~\ref{def:psco}, let $x^\star \defeq \argmin_{x \in \xset} \Fpop(x)$, and let $\rho \ge 0$, $\delta \in (0, 1)$. Algorithm~\ref{alg:sco} using Algorithm~\ref{alg:aggerm} in Line~\ref{line:localize} is a $\rho$-CDP algorithm which draws $\calD \sim \calP^n$, queries $\Csco\max(n^2, \frac{n^3 \rho}{d})$ sample gradients (using samples in $\calD$) for a universal constant $\Csco$, and outputs $x \in \xset$ satisfying, with probability $\ge 1 - \delta$,
\[\Fpop(x) - \Fpop(x^\star) \le \Csco\Par{G_k D\cdot\Par{\frac{\sqrt{d}\log\Par{\frac{1}{\delta}}}{n\sqrt \rho}}^{1 - \frac 1 k} + G_2 D \cdot \sqrt{\frac{\log\Par{\frac{1}{\delta}}}{n}}}.\]
\end{theorem}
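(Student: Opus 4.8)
The plan is to realize every iteration of the population-level localization scheme $\PLOC$ (Algorithm~\ref{alg:sco}) by the aggregated subproblem solver $\AggERM$ (Algorithm~\ref{alg:aggerm}): in phase $i$ we draw a fresh batch of samples, split it into $J$ disjoint sub-batches, run an independent copy of the oracle from Corollary~\ref{cor:close_to_pop_min} on each sub-batch (with regularization center $x_{i-1}$ and strength $\lam_i=\lam\cdot 32^i$), and combine the $J$ resulting points with the geometric aggregation primitive of Fact~\ref{fact:geom_agg}. All $n$ samples are partitioned across phases and sub-batches, with phase $i$ using $J$ sub-batches of a size $m_i$ that we take to decay geometrically in $i$, so $\sum_{i\in[I]}J m_i=n$. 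Privacy is then immediate: every oracle invocation is $\rho$-CDP by Corollary~\ref{cor:close_to_pop_min} and acts on a sub-batch disjoint from every other sub-batch used anywhere in the algorithm, so changing one entry of $\calD$ perturbs the output of exactly one invocation while the aggregation steps and the formation of the next center $x_{i-1}$ are pure post-processing; parallel composition thus gives $\rho$-CDP overall despite the sequential structure. The gradient-query count is equally easy: the $\ell$-th oracle call on a sub-batch of size $m_\ell$ uses $\max(m_\ell^2,\frac{m_\ell^3\rho}{d})$ queries, and since $\sum_\ell m_\ell=n$ with every $m_\ell\le n$ we get $\sum_\ell m_\ell^2\le n^2$ and $\sum_\ell m_\ell^3\le n^3$, for a total $O(\max(n^2,\frac{n^3\rho}{d}))$.

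For utility, fix phase $i$, let $x_i^\star\defeq\argmin_{x\in\xset}\{\Fpop(x)+\frac{\lam_i}{2}\norm{x-x_{i-1}}^2\}$, and set $R_i\defeq\Crp(\frac{G_k}{\lam_i}(\frac{\sqrt d}{m_i\sqrt\rho})^{1-\frac1k}+\frac{G_2}{\lam_i\sqrt{m_i}})$. Corollary~\ref{cor:close_to_pop_min} guarantees each of the $J$ oracle calls independently lands in $\ball(x_i^\star,R_i)$ with probability $\ge 0.55$, so a Chernoff bound yields at least $0.51J$ successes with probability $\ge 1-\exp(-\Omega(J))$, whereupon Fact~\ref{fact:geom_agg} (instantiated with radius $R_i$) returns $x_i$ with $\norm{x_i-x_i^\star}\le 3R_i$. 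We choose $m_i$ as the smallest value making $3R_i\le\Delta 4^i/\lam_i$ for a common parameter $\Delta$, equivalently $m_i=\Theta((\frac{G_k}{\Delta 4^i})^{\frac{k}{k-1}}\frac{\sqrt d}{\sqrt\rho}+(\frac{G_2}{\Delta 4^i})^2)$, so that \eqref{eq:localize_guarantee} holds for phase $i$ on this event; note the $\lam_i$ cancels, so this does not constrain $\lam$. This schedule decays geometrically in $i$ (the exponents $\frac{k}{k-1}$ and $2$ both exceed $1$), hence $\sum_i m_i=O(m_1)$, and the budget $\sum_i Jm_i=n$ forces $m_1=\Theta(n/J)$; substituting back into the $i=1$ identity pins down $\Delta=\Theta(G_k(\frac{\sqrt d J}{n\sqrt\rho})^{1-\frac1k}+G_2\sqrt{\frac Jn})$. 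A union bound over the $I$ phases makes all localization constraints hold simultaneously with probability $\ge 1-I\exp(-\Omega(J))\ge 1-\delta$, provided $J=\Theta(\log\frac I\delta)$.

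On that event we invoke Proposition~\ref{prop:pop_localize} with the $\lam$ it prescribes, obtaining $\Fpop(x_I)-\Fpop(x^\star)\le 2D\sqrt{G_1\Delta/8^I}+D\Delta$. Since $G_1\le G_2$ forces $G_1/\Delta=O(\sqrt{n/J})$, taking $I=\Theta(\log(G_1/\Delta))=\Theta(\log n)$ makes the residual term at most $D\Delta$, so $\Fpop(x_I)-\Fpop(x^\star)=O(D\Delta)=O(G_kD(\frac{\sqrt d J}{n\sqrt\rho})^{1-\frac1k}+G_2D\sqrt{\frac Jn})$ with $J=\Theta(\log\frac1\delta)$ — matching the claim, where the subdominant $\log\log n$ inside $\log(I/\delta)$ is absorbed into the $\log\frac1\delta$ factor (e.g.\ in the standard regime $\delta=\mathrm{poly}(1/n)$).

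I expect the main obstacle to be this parameter balancing: one must choose the geometric schedule $\{m_i\}$ to satisfy every localization constraint simultaneously, keep $\sum_i Jm_i\le n$, take $I$ large enough that Proposition~\ref{prop:pop_localize}'s residual $2D\sqrt{G_1\Delta/8^I}$ is negligible yet small enough ($I=O(\log n)$) not to inflate the per-phase sample count, and resolve the mild circularity among $J=\Theta(\log(I/\delta))$, $I=\Theta(\log(G_1/\Delta))$, and $\Delta=\Delta(J)$ — all while verifying that the only polylogarithmic overhead that survives into the final rate is the $\log(1/\delta)$ coming from boosting the failure probability via the sample split. A secondary (routine) point is checking that CDP composes in parallel across the disjoint sub-batches notwithstanding the adaptive way later phases reuse earlier outputs as centers.
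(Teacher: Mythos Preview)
Your approach is essentially the same as the paper's: run $\AggERM$ in each phase of $\PLOC$, boost the constant-probability guarantee of Corollary~\ref{cor:close_to_pop_min} via Chernoff and Fact~\ref{fact:geom_agg} across $J$ disjoint sub-batches, then apply Proposition~\ref{prop:pop_localize}. The privacy and gradient-complexity arguments are correct. The paper fixes the concrete schedule $m_i = m/2^i$ with $m = n/J$ and verifies it satisfies the localization constraint, whereas you derive $m_i$ backwards from the constraint; both work and lead to the same $\Delta$.

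There is one genuine gap. You take a single $J = \Theta(\log(I/\delta))$ across all phases and then assert that the $\log\log n$ contribution ``is absorbed into the $\log\frac1\delta$ factor (e.g.\ in the standard regime $\delta = \mathrm{poly}(1/n)$).'' But the theorem is stated for all $\delta\in(0,1)$. When $\frac 1 \delta$ is smaller than $\textup{polylog}(n)$ (e.g.\ $\delta$ constant), your $J = \Theta(\log\log n)$ yields $\Delta = \Theta(G_k(\frac{\sqrt d \log\log n}{n\sqrt\rho})^{1-\frac 1 k} + G_2\sqrt{\frac{\log\log n}{n}})$, which is strictly worse than the claimed bound with $\log\frac 1 \delta$ in place of $\log\log n$. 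The paper handles this regime with a separate argument: in phase $i$ it uses a phase-dependent split size $J_i \in [400\log(\frac{I}{\delta_i}), 500\log(\frac{I}{\delta_i})]$ with $\delta_i \defeq \frac{\delta}{2^i}$, and allocates $\frac{n}{2^i}$ samples to phase $i$ (so $m_i = \lfloor \frac{n}{2^i J_i}\rfloor$). The point is that $J_1 = O(\log\frac 1 \delta)$ alone sets $\Delta$, while for later phases $J_i = O(i + \log\frac 1 \delta)$ grows only linearly in $i$, so the $4^i$ slack in \eqref{eq:localize_guarantee} still dominates the change in $m_i$ and the constraint remains satisfied. You identified the circularity among $I$, $J$, and $\Delta$ as the main obstacle, but the actual obstacle is this second regime, and your proposal does not contain a fix for it.
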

\begin{proof}
Throughout, we assume that $\frac 1 \delta$ is at least a large enough constant (where lossiness can be absorbed into $\Csco$), and that $n$ is at least a sufficiently large constant multiple of $\log \frac 1 \delta$ (because the entire range of $\Fpop$ is $\le G_2 D$).
We first handle the case where $\frac 1 \delta$ is larger than $\textup{polylog}(n)$, deferring the case of small $\frac 1 \delta$ to the end of the proof.
Let $I, J \in \N$ be chosen such that
\[I \defeq \left\lfloor \log_2\Par{\frac n J} \right\rfloor,\; J \in \Brack{400\log\Par{\frac I \delta}, 500\log\Par{\frac I \delta}},\]
which is achievable with $I = O(\log n)$ and $J = O(\log \frac{\log n}{\delta}) = O(\log \frac 1 \delta)$.
Let $m \defeq \frac n J$, and assume without loss that $m$ is a power of $2$, which we can guarantee by discarding $\le \half$ our samples, losing a constant factor in the claim.
For each $i \in [I]$, let $m_i \defeq \frac{m}{2^i}$. We subdivide $\calD$ into $J$ portions, each with $m$ samples, and subdivide each portion into $I$ parts each with $m_i$ samples. For $j \in [J]$ and $i \in [I]$, we denote the samples corresponding to the $i^{\text{th}}$ part of the $j^{\text{th}}$ portion by $\calD^j_i$, so 
\[\bigcup_{i \in [I]} \bigcup_{j \in [J]} \calD^j_i \subseteq \calD,\; |\calD^j_i| = m_i \text{ for all } j \in [J],\;  \calD^j_i \cap \calD^{j'}_{i'} = \emptyset \text{ for all } (i, j) \neq (i', j').\]
Next, we show how to implement Line~\ref{line:localize} in Algorithm~\ref{alg:sco}, for an iteration $i \in [I]$, by calling Algorithm~\ref{alg:aggerm} with appropriate parameters. Let $n \gets m_i$, $\rho \gets \rho$, and initialize Algorithm~\ref{alg:aggerm} with the dataset $\cup_{j \in [J]} \calD^j_i$ and $R \defeq \frac{\Delta 4^i}{\lam_i}$, where
\[\Delta \defeq 3\Crp \Par{G_k \cdot \Par{\frac{\sqrt{d}}{m \sqrt{\rho}}}^{1 - \frac 1 k} + \frac{G_2}{\sqrt{m}} }.\]
By Corollary~\ref{cor:close_to_pop_min}, each independent run outputs $x_i^j \in \xset$ satisfying, with probability $0.55$,
\begin{equation}\label{eq:calculation_mi}
\begin{gathered}\norm{x_i^j - x^\star_i} \le \frac{\Crp}{\lam_i}\Par{G_k \cdot \Par{\frac{\sqrt d}{m_i \sqrt{\rho}}}^{1 - \frac 1 k} + \frac{G_2}{\sqrt{m_i}} } \le \frac{\Delta 4^i}{3\lam_i} = \frac R 3.\end{gathered}
\end{equation}
Therefore, by a Chernoff bound, with probability $\ge 1 - \frac \delta I$, at least $0.51 J$ of the copies satisfy the above bound, so Fact~\ref{fact:geom_agg} yields $x_i$ satisfying $\norm{x_i - x^\star_i} \le R = \frac{\Delta 4^i}{\lam_i}$ with the same probability.
Union bounding over all $I$ iterations of Algorithm~\ref{alg:sco} yields the failure probability, and so we obtain the claim from Proposition~\ref{prop:pop_localize}, after plugging in $n = O(m\log(\frac{1}{\delta}))$, since the dominant term is $D\Delta$. The privacy proof follows from the first part of Lemma~\ref{lem:cdp_facts} since for each pair of neighboring databases, exactly one of the datasets $\calD_i^j$ are neighboring, and Corollary~\ref{cor:close_to_pop_min} guarantees privacy of the empirical risk minimization algorithm using that dataset; privacy for all other datasets used is immediate from postprocessing properties of privacy. The gradient complexity comes from aggregating all of the $IJ$ calls to Corollary~\ref{cor:close_to_pop_min}, where we recall the sample sizes decay geometrically.

Finally, if $\frac 1 \delta$ is smaller than $\textup{polylog}(n)$, for the $i^{\text{th}}$ iteration of Algorithm~\ref{alg:sco} we instead set $J_i \in [400 \log(\frac I {\delta_i}), 500 \log(\frac I {\delta_i})]$ where $\delta_i \defeq \frac \delta {2^i}$. Then we subdivide a consecutive batch of $\frac n {2^i}$ samples into $J_i$ portions, and follow the above proof. It is straightforward to check that \eqref{eq:calculation_mi} still holds with the new value of $m_i = \lfloor \frac{n}{2^i J_i} \rfloor$ because the $4^i$ factor growth on the right-hand side continues to outweigh the change in $m_i$. The error bound follows from Proposition~\ref{prop:pop_localize}, and the privacy proof is identical.
\end{proof}

\subsection{Strongly convex heavy-tailed private SCO via localization}

Finally, by following the template of standard localization reductions in the literature (see e.g.\ Theorem 5.1, \cite{FeldmanKoTa20} or Lemma 5.5, \cite{kll21}), Theorem~\ref{thm:ht_psco} obtains an improved rate when all sample functions are strongly convex. For completeness, we state this result below. \iffalse However, these existing reductions do not extend directly to our setting; for example, Theorem 5.1 of \cite{FeldmanKoTa20} breaks down when the utility guarantee of the subroutine depends on the domain size (as in Theorem~\ref{thm:ht_psco}), rather an expected distance between the initial point and the optimizer. We present a simple fix using our aggregation strategies.

\begin{algorithm2e}
\caption{$\SCL(x_0, \calD, \mu, \rho, \delta, \xset, \alg)$}
\label{alg:scl}
\DontPrintSemicolon
\codeInput Initial point $x_0 \in \xset$, dataset $\calD \in \calS^n$, strong convexity parameter $\mu\in \R_{\ge 0}$, privacy parameter $\rho \in \R_{>0}$, failure probability $\delta \in (0, 1)$, domain $\xset \in \R^d$, Algorithm $\alg$ from Theorem~\ref{thm:ht_psco} taking parameters $x_0, \calD, \rho, \delta, \xset$\;
$J \gets 50\log(\frac {2k} \delta)$\;
$k \gets \lceil \log \log n \rfloor$\;
Split $\calD$ into $J$ equally-sized disjoint datasets $\{\calD^j\}_{j \in [J]}$\;
\For{$i \in [k]$}{
$D_{i - 1} \gets $, $n_{i - 1} \gets 2^{i - 1 - k}n$\;
$\xset_{i - 1} \gets \xset \cap \ball(x_{i - 1}, )$\;
\For{$j \in [J]$}{
$\calD^j_{i - 1} \gets $ first $n_{i - 1}$ elements of $\calD^j$, $\calD^j \gets \calD^j \setminus \calD^j_{i - 1}$ \;
$x_i^j \gets \alg(x_{i - 1}, \calD^j_{i - 1}, \rho, \frac{\delta}{2kJ}, \xset_{i - 1})$\;
}
$x_i \gets \agg(\{x_i^j\}_{j \in [J]}, R)$ (see Fact~\ref{fact:geom_agg})\;
}
\codeReturn $x_k$\;
\end{algorithm2e}
\fi
\begin{corollary}
In the setting of Theorem~\ref{thm:ht_psco}, suppose $f(x; s)$ is $\mu$-strongly convex for all $s \in \calS$. There is an algorithm which draws $\calD \sim \calP^n$, queries $\Csco\max(n^2, \frac{n^3 \rho}{d})$ sample gradients (using samples in $\calD$) for a universal constant $\Csco$, and outputs $x \in \xset$ satisfying, with probability $\ge 1 - \delta$,
\[\Fpop(x) - \Fpop(x^\star) \le \Csco\Par{\frac{G_k^2}{\mu}\cdot\Par{\frac{d\log^{3}\Par{\frac{1}{\delta}}}{n^2\rho}}^{1 - \frac 1 k} + \frac{G_2^2}{\mu} \cdot \frac{\log\Par{\frac{1}{\delta}}}{n}}.\]
\end{corollary}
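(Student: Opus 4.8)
The plan is to run the standard localization reduction that converts a (non-strongly-convex) SCO algorithm into one for the strongly convex case (as in Theorem 5.1 of \cite{FeldmanKoTa20} or Lemma 5.5 of \cite{kll21}), using the algorithm of Theorem~\ref{thm:ht_psco} as the inner solver. The one place where care is needed --- and the reason the cited reductions do not literally apply --- is that the guarantee of Theorem~\ref{thm:ht_psco} scales with the \emph{diameter} of the feasible set rather than with an expected initial distance; I would handle this exactly as in our population-level localization, by running the inner solver on a ball $\xset_{i-1} \defeq \xset \cap \ball(x_{i-1}, D_{i-1})$ around the current iterate, whose radius $D_{i-1}$ shrinks geometrically. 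Concretely: split $\calD \sim \calP^n$ into $T = O(\log\log(nD))$ consecutive disjoint batches of size $n_i \defeq \lfloor n/T \rfloor$; set $x_0 \in \xset$ arbitrary and $D_0 \defeq D$; and in phase $i \in [T]$ invoke Theorem~\ref{thm:ht_psco} with the $i$-th batch, privacy parameter $\rho/T$, failure probability $\delta/T$, and feasible set $\xset_{i-1}$, obtaining the next iterate $x_i$.

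The first step is to note that $\Fpop$ is $\mu$-strongly convex, being an average of $\mu$-strongly convex functions. Next I would maintain by induction the invariant $\norm{x_{i-1} - x^\star} \le D_{i-1}$. Given this at step $i-1$, we have $x^\star \in \xset_{i-1}$, so $\min_{x \in \xset_{i-1}} \Fpop(x) = \Fpop(x^\star)$, and since $\diam(\xset_{i-1}) \le 2D_{i-1}$, Theorem~\ref{thm:ht_psco} gives, with probability $\ge 1 - \delta/T$,
\[ \Fpop(x_i) - \Fpop(x^\star) \le \alpha D_{i-1}, \qquad \alpha \defeq O\Par{G_k \Par{\tfrac{\sqrt{d}\,T^{3/2}\log(T/\delta)}{n\sqrt{\rho}}}^{1 - \frac 1 k} + G_2 \sqrt{\tfrac{T\log(T/\delta)}{n}}}. \]
By $\mu$-strong convexity this yields $\norm{x_i - x^\star} \le \sqrt{2\alpha D_{i-1}/\mu}$, so setting $D_i \defeq \min\{D_{i-1}, \sqrt{2\alpha D_{i-1}/\mu}\}$ preserves the invariant. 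Unrolling the recursion $D_i^2 = \tfrac{2\alpha}{\mu} D_{i-1}$ shows $\log D_i$ contracts geometrically towards the fixed point $\log\tfrac{2\alpha}{\mu}$, so that $D_{T-1} = O(\alpha/\mu)$ for the stated choice of $T$; in the degenerate case $D \lesssim \alpha/\mu$ one already has $\Fpop(x_1) - \Fpop(x^\star) \le \alpha D_0 = O(\alpha^2/\mu)$ after a single phase. Union bounding the $T$ invariants, with probability $\ge 1 - \delta$ they all hold, so
\[ \Fpop(x_T) - \Fpop(x^\star) \le \alpha D_{T-1} = O\Par{\tfrac{\alpha^2}{\mu}}, \]
and substituting $\alpha$, using $(u+v)^2 \le 2u^2 + 2v^2$ and the fact that $T$ and $\log(T/\delta)$ are $\textup{polylog}(\tfrac1\delta, n, D)$, gives the claimed bound after absorbing $\textup{polylog}$ overheads into $\Csco$.

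For privacy, each phase is $(\rho/T)$-CDP by Theorem~\ref{thm:ht_psco}, and the $T$ phases depend on earlier phases only through the choice of ball center $x_{i-1}$, so adaptive composition of CDP (via Lemma~\ref{lem:cdp_facts}, together with $\rho$-CDP $\Leftrightarrow$ $(\alpha,\alpha\rho)$-RDP for all $\alpha$) yields $\rho$-CDP overall. For the gradient complexity, summing the bound $\Csco\max(n_i^2, n_i^3(\rho/T)/d)$ of Theorem~\ref{thm:ht_psco} over $T$ phases with $n_i = \lfloor n/T \rfloor$ is at most $\Csco\max(n^2, n^3\rho/d)$. I expect the main obstacle to be purely the bookkeeping in the penultimate paragraph: verifying that the geometric recursion for $D_i$ collapses the diameter dependence in $O(\log\log)$ phases, and that the resulting polylogarithmic factors collapse to exactly the $\log^3(\tfrac1\delta)$ and $\log(\tfrac1\delta)$ powers in the statement; the structure of the argument is otherwise a direct instantiation of the known localization template.
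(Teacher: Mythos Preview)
Your proposal is correct and takes essentially the same approach as the paper: the paper's proof simply invokes the localization reduction of Theorem 5.1 in \cite{FeldmanKoTa20}, noting (exactly as you do) that because Theorem~\ref{thm:ht_psco} holds with high probability one can explicitly truncate the domain to a shrinking ball around the current iterate so that the diameter-based error scales down while $x^\star$ remains feasible. One small sharpening available to you: since your phases use \emph{disjoint} sub-datasets, parallel composition lets each phase run with the full budget $\rho$ rather than $\rho/T$, which removes one source of the extra polylog factors you flag at the end.
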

\begin{proof}
This is immediate from the development in Section 5.1 (and the proof of Theorem 5.1) of \cite{FeldmanKoTa20}, but we mention one slight difference here. Our guarantees in Theorem~\ref{thm:ht_psco} do not scale with the initial distance bound to the function minimizer, and instead scale with the domain size, which makes it less directly compatible with the standard localization framework in \cite{FeldmanKoTa20}. However, because Theorem~\ref{thm:ht_psco} holds with high probability, we also have explicit bounds on the domain size via function error, as seen in the proof of Theorem 5.1 in \cite{FeldmanKoTa20}, so we can explicitly truncate our domain to have smaller domain without removing the minimizer. With this modification, the claim follows directly from Theorem 5.1 in \cite{FeldmanKoTa20}.
\end{proof}
% !TEX root = main.tex

\section{Optimal Algorithms in the Known Lipschitz Setting}\label{sec:lip_known}

Compared to the standard Lipschitz setting (i.e.\ the $\infty$-heavy-tailed private SCO problem), our algorithm in Section~\ref{sec:lose_log} has two downsides: it pays a polylogarithmic overhead in the utility, and it requires an extra aggregation step. In this section, assuming we are in the \emph{known Lipschitz $k$-heavy-tailed} setting (see Assumption~\ref{assume:kl_k_ht}, Definition~\ref{def:psco}), we provide a simple reduction to the standard Lipschitz setting, resulting in optimal rates. To this end, we require some additional definitions used throughout the section. First, we augment $\calS$ with a designated element $s_0 \not\in \calS$, and define 
\begin{equation}\label{eq:fzerodef}f(x; s_0) = 0 \text{ for all } x \in \xset.\end{equation}
We also define a truncated distribution parameterized by $C \ge 0$, where we use $f(\cdot; s_0)$ in place of sample functions with large Lipschitz overestimates, following notation of Assumption~\ref{assume:kl_k_ht}:
\begin{equation}\label{eq:c_func_def}
f^C(x;s) \defeq \begin{cases} f(x; s) & \bL_s \le C \\
f(x; s_0) & \bL_s > C
\end{cases}, \; f^C(x; s_0) \defeq f(x; s_0),\;
\Fpop^C(x; s)  \defeq \E_{s \sim \calP}\Brack{f^C(x; s)}.
\end{equation}

We denote $\calS_0 \defeq \calS \cup \{s_0\}$, and for $\calD \in \calS^n$, the dataset $\calD^C \in \calS_0^n$ replaces all $s \in \calD$ satisfying $\bL_s > C$ with $s_0$. We additionally provide a second reduction in the known Lipschitz heavy-tailed setting, when all sample functions are assumed to be $\mu$-strongly convex. Because our treatments of these cases are slightly different, we use different notation when $\mu = 0$ and $\mu > 0$, for convenience of exposition.
Fixing an arbitrary point $\bx \in\calX$, for $\mu > 0$, instead of using the constant $0$ function as in \eqref{eq:fzerodef}, we define a strongly convex alternative $f(\cdot;s_{\mu})$, for a designated element $s_\mu$:
\begin{align}
    \label{eq:mu_fzerodef}
    f(x;s_{\mu})=\frac{\mu}{2}\|x-\bx\|^2, \text{ for all } x \in \xset.
\end{align}
The truncated distribution parameterized by $C\ge \mu D$, is defined in a similar way:
\begin{equation}\label{eq:sc_c_func_def}
f^C_\mu(x;s) \defeq \begin{cases} f(x; s) & \bL_s \le C \\
f(x; s_\mu) & \bL_s > C
\end{cases}, \; f^C_\mu(x; s_\mu) \defeq f(x; s_\mu),\;
\Fpop^{C,\mu}(x; s)  \defeq \E_{s \sim \calP}\Brack{f^C_\mu(x; s)}.
\end{equation}
We denote $\calS_\mu:=\calS\cup\{s_\mu\}$, and for $\calD\in\calS^n$, the dataset $\calD^C_\mu\in \calS_\mu^n$ replaces every $s\in \calD$ such that  $\bL_s > C$ with $s_\mu$. Our focus on the regime $C \ge \frac{\mu D} 4$ is motivated by the following well-known claim.

\begin{lemma}\label{lem:diam_bound_mu}
Let $\xset \subseteq \R^d$ be compact and convex satisfying $\diam(\xset) = D$, and suppose $f: \xset \to \R$ is $L$-Lipschitz and $\mu$-strongly convex. Then, $L \ge \frac{\mu D}{4}$.
\end{lemma}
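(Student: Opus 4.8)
The plan is to exploit $\mu$-strong convexity at the midpoint of a diameter-realizing pair, and then use Lipschitzness to bound the resulting function-value gap. Since $\xset$ is compact, there exist $x, y \in \xset$ with $\norm{x - y} = D$, and since $\xset$ is convex, the midpoint $z \defeq \frac{x + y}{2}$ lies in $\xset$ as well.

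Applying the definition of $\mu$-strong convexity (as stated in the preliminaries) with $\lambda = \frac 12$ gives
\[
f(z) \le \frac 12 f(x) + \frac 12 f(y) - \frac{\mu}{8}\norm{x - y}^2 = \frac 12 f(x) + \frac 12 f(y) - \frac{\mu D^2}{8}.
\]
Rearranging, $\frac 12\Par{f(x) - f(z)} + \frac 12\Par{f(y) - f(z)} \ge \frac{\mu D^2}{8}$, so at least one of the two terms $f(x) - f(z)$, $f(y) - f(z)$ is at least $\frac{\mu D^2}{8}$; without loss of generality, say it is $f(x) - f(z)$.

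On the other hand, $L$-Lipschitzness yields $f(x) - f(z) \le L\norm{x - z} = \frac{LD}{2}$, since $\norm{x - z} = \frac 12\norm{x - y} = \frac D2$. Combining the two bounds gives $\frac{LD}{2} \ge \frac{\mu D^2}{8}$, i.e.\ $L \ge \frac{\mu D}{4}$, which is the claim. There is no real obstacle here; the only points to be careful about are that the diameter is attained (compactness) and the midpoint is feasible (convexity), both of which hold by assumption.
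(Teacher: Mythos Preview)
Your proof is correct. The paper takes a different route: it lets $x^\star = \argmin_{x \in \xset} f(x)$, uses the quadratic-growth consequence of strong convexity, $\frac{\mu}{2}\norm{x - x^\star}^2 \le f(x) - f(x^\star) \le L\norm{x - x^\star}$, and then picks $x$ with $\norm{x - x^\star} \ge \frac D 2$ (possible by the triangle inequality applied to a diameter-realizing pair). Your argument avoids the minimizer entirely by applying the midpoint form of strong convexity directly to a diameter-realizing pair; this is slightly more self-contained, since it uses only the definition of strong convexity given in the preliminaries rather than the derived quadratic-growth property at the minimizer. Both arguments are equally elementary and yield the same constant $\frac{\mu D}{4}$.
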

\begin{proof}
Let $x^\star \defeq \argmin_{x \in \xset} f(x)$. By strong convexity, for all $x \in \xset$,
\[\frac \mu 2 \norm{x - x^\star}^2 \le f(x) - f(x^\star) \le L\norm{x - x^\star}.\]
Now, choose $x$ such that $\norm{x - x^\star} \ge \frac D 2$. To see this is always possible, let $x, x' \in \xset$ realize $\norm{x - x'} = D$; then at least one of $x, x'$ must have distance $\ge \frac D 2$ from $x^\star$ by the triangle inequality. The conclusion follows by rearranging after using our choice of $x$.
\end{proof}

In other words, if $C < \frac{\mu D}{4}$ then no sample function will survive the truncation in \eqref{eq:sc_c_func_def}. 
Finally, we parameterize the performance of algorithms in the standard Lipschitz setting.

\begin{definition}[Lipschitz private SCO algorithm]\label{def:lip_algo}
We say $\alg$ is an $L$-Lipschitz private SCO algorithm if it takes input $(\calD, \rho, \xset)$, where $\calD \in \calS^n$ is drawn i.i.d.\ from $\calP$, a distribution over $\calS$ where every $s \in \calS$ induces $L$-Lipschitz $f(\cdot; s)$ over $\xset \subset \R^d$, $\alg(\calD, \rho, \xset) \in \xset$, and $\alg$ satisfies $\rho$-CDP. We denote
\[\Err(\alg) \defeq \E_{\alg}\Brack{\Fpop\Par{\alg(\calD, \rho, \xset)}} - \min_{x \in \xset} \Fpop(x),\]
where $\Fpop(x) \defeq \E_{s \sim \calP} f(x; s)$,
and denote the number of sample gradients queried by $\alg$ by $\Num(\alg)$. 
Moreover, if each Lipschitz function $f(;s)$ is $\mu$-strongly convex over the convex domain $\calX$, we say $\alg$ is an $L$-Lipschitz, $\mu$-strongly convex private SCO algorithm, and define $\Err(\alg)$, $\Num(\alg)$ as before.
% If $\alg$ instead applies to the sum of $\psi: \xset \to \R$, a $\mu$-strongly convex function, and the average of $L$-Lipschitz sample loss functions, we say it is an $L$-Lipschitz, $\mu$-strongly convex private SCO algorithm, and define $\Err(\alg)$, $\Num(\alg)$ as before, with $\Fpop + \psi$ in place of $\Fpop$ in the definition of $\Err(\alg)$.
\end{definition}

With this notation in place, we state our reduction.

\iffalse
To this end, we require the following Lipschitz oracle for our reduction.
\begin{definition}[Lipschitz Oracle]
\label{def:Lip_oracle}
The exists a Lipschitz Oracle $\calO: \Ds \to \R$ such that for any convex function $f(;s): \calX \to \R $ for $s \in \Ds$, $\calO(s)$ returns the Lipschitz constant of $f(\cdot;s)$ up to a constant.
\end{definition}

Given a Lipschitz oracle $\calO$ and a DP-SCO algorithm $\alg$ for Lipschitz functions, our reduction (\Cref{alg:reduction}) will go over all the datapoints $s_i \in \calD$ and remove functions $f(\cdot;s_i)$ that have a large Lipschitz constant (for some pre-specified threshold $C$). For such points, the algorithm will replace them with the constant function $f(x;\bf{0}) \defeq 0$ for all $x \in \calX$.
Finally, the algorithm runs the $C$-Lipschitz DP-SCO algorithm $\alg$ over the remaining datapoints. 

\fi
\begin{algorithm2e}
\caption{$\KLReduce(\calD, C, \mu, \rho, \xset, \alg)$}
\label{alg:klreduce}
\DontPrintSemicolon
\codeInput Dataset $\calD \in \calS^n$, clip threshold $C \in \R_{\ge 0}$, strong convexity parameter $\mu\in \R_{\ge 0}$, privacy parameter $\rho \in \R_{>0}$, domain $\xset \in \R^d$, $C$-Lipschitz private SCO algorithm $\alg$ (if $\mu = 0$), or $C$-Lipschitz $\mu$-strongly convex private SCO algorithm $\alg$ (if $\mu > 0$)\;
\If{$\mu = 0$}{
\codeReturn $\alg(\calD^C, \rho, \xset)$\;
}
\Else{
\codeReturn $\alg(\calD^C_\mu, \rho, \xset)$\;
}
\end{algorithm2e}\iffalse
\begin{algorithm}
\caption{$\HtL(\calD, C, \alg, \calO)$}
\label{alg:reduction}
\begin{algorithmic}[1]
\State {\bf Input:} Dataset $\calD \in \Ds^n$, clipping threshold $C \in \R_{\ge 0}$, DP-SCO algorithm $\alg: \Ds^n \to \mc{X}$, Lipschitz Oracle $\calO: \Ds \to \R$, privacy parameter $\rho \in \R_{\ge 0}$
\State $x_0 \gets \0_d$
\State Set $\calD'=\emptyset$
\For{$i=1,\cdots,n$}
\If{$\calO(s_i)\le C$}
\State $\calD'\leftarrow \calD'\cup\{s_i\}$
\Else 
\State $\calD'\leftarrow \calD'\cup\{\bf{0}\}$
\EndIf
\EndFor
\State Let $\hat x$ be the output of $\alg$ over input $\calD'$ with parameters $\{C,D,d,\rho\}$ as inputs
\State {\bf Output:} $\hat x$
\end{algorithmic}
\end{algorithm}

\fi

We begin with a simple bound relating $\Fpop^C,\Fpop^{C,\mu}$ and $\Fpop$.

\begin{lemma}\label{lem:c_diff_lip}
Let $\Fpop$ be defined as in Definition~\ref{def:psco}, where $\calP$ satisfies Assumption~\ref{assume:kl_k_ht}, and define $\Fpop^C$ as in \eqref{eq:c_func_def}. Then, $\Fpop - \Fpop^C$ is $\frac{G^k_k}{(k - 1)C^{k - 1}}$-Lipschitz, and
$\Fpop-\Fpop^{C,\mu}$ is $\frac{G^k_k}{(k - 1)C^{k - 1}}+\frac{4G_k^{k+1}}{C^k}$-Lipschitz.
\end{lemma}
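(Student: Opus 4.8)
The key object is the difference $\Fpop - \Fpop^C$, which by linearity of expectation equals $\E_{s \sim \calP}[f(x;s) - f^C(x;s)]$. By the definition \eqref{eq:c_func_def}, the integrand vanishes whenever $\bL_s \le C$ and equals $f(x;s) - f(x;s_0) = f(x;s)$ when $\bL_s > C$. So $\Fpop(x) - \Fpop^C(x) = \E_{s \sim \calP}[f(x;s)\ind_{\bL_s > C}]$. To bound its Lipschitz constant, I would differentiate (or pass to subgradients) inside the expectation: $\nabla(\Fpop - \Fpop^C)(x) = \E_{s \sim \calP}[\nabla f(x;s)\ind_{\bL_s > C}]$, so it suffices to bound $\E_{s \sim \calP}[\norm{\nabla f(x;s)}\ind_{\bL_s > C}] \le \E_{s \sim \calP}[L_s \ind_{\bL_s > C}] \le \E_{s \sim \calP}[\bL_s \ind_{\bL_s > C}]$. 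The latter is exactly the tail expectation controlled by Fact~\ref{fact:moment_bias_bound} (applied in one dimension, or via a direct Markov-type truncation bound): since $\E \bL_s^k \le G_k^k$, we get $\E[\bL_s \ind_{\bL_s > C}] \le \frac{G_k^k}{(k-1)C^{k-1}}$, which is uniform in $x$, giving the first claim.

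\textbf{The strongly convex case.} For $\Fpop - \Fpop^{C,\mu}$, the same computation gives $\Fpop(x) - \Fpop^{C,\mu}(x) = \E_{s \sim \calP}[(f(x;s) - \frac{\mu}{2}\norm{x - \bx}^2)\ind_{\bL_s > C}]$, so its gradient has norm at most $\E_{s \sim \calP}[(\norm{\nabla f(x;s)} + \mu\norm{x - \bx})\ind_{\bL_s > C}] \le \E_{s \sim \calP}[(\bL_s + \mu D)\ind_{\bL_s > C}]$, using $\norm{x - \bx} \le D = \diam(\xset)$. The first term is handled as before by $\frac{G_k^k}{(k-1)C^{k-1}}$. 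For the second term, $\mu D \cdot \Pr[\bL_s > C] \le \mu D \cdot \frac{G_k^k}{C^k}$ by Markov. Now I would use the fact that we are in the regime $C \ge \frac{\mu D}{4}$ (indeed the truncation \eqref{eq:sc_c_func_def} is only meaningful there, and by Lemma~\ref{lem:diam_bound_mu} any surviving sample has $\bL_s \le C$ forced to satisfy $C \ge \frac{\mu D}{4}$ — more precisely, $\mu D \le 4C$ always holds in the relevant regime since otherwise no function survives): this gives $\mu D \cdot \frac{G_k^k}{C^k} \le \frac{4C \cdot G_k^k}{C^k} = \frac{4G_k^k}{C^{k-1}}$, but to match the stated bound $\frac{4G_k^{k+1}}{C^k}$ I instead bound $\mu D \le 4C$ differently — actually, noting $\mu D \le 4L_s \le 4\bL_s$ for any surviving function and more to the point using $\Pr[\bL_s > C] \le G_k^k / C^k$ together with a bound $\mu D \le 4 G_k$ (which follows from Lemma~\ref{lem:diam_bound_mu} applied with $L = \bL_s$ in expectation: $\frac{\mu D}{4} \le \E \bL_s \le G_1 \le G_k$), yielding $\mu D \cdot \Pr[\bL_s > C] \le 4G_k \cdot \frac{G_k^k}{C^k} = \frac{4G_k^{k+1}}{C^k}$. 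Adding the two contributions gives the claimed $\frac{G_k^k}{(k-1)C^{k-1}} + \frac{4G_k^{k+1}}{C^k}$.

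\textbf{Anticipated obstacle.} The only genuinely delicate point is the justification of differentiating under the expectation / working with subgradients of the difference — i.e.\ arguing that the pointwise gradient bound $\norm{\nabla(\Fpop - \Fpop^C)(x)} \le \E[\norm{\nabla f(x;s)}\ind_{\bL_s > C}]$ is legitimate and that this pointwise bound on the gradient norm of a convex difference (note $\Fpop - \Fpop^C$ need not be convex) actually certifies the Lipschitz constant. This is handled by observing $\Fpop - \Fpop^C$ is a difference of convex functions hence locally Lipschitz with a.e.\ gradient, and that a function with $\norm{\nabla g(x)} \le M$ almost everywhere on a convex domain is $M$-Lipschitz by integrating along segments; alternatively one bounds $|(\Fpop - \Fpop^C)(x) - (\Fpop - \Fpop^C)(x')| = |\E_s[(f(x;s)-f(x';s))\ind_{\bL_s>C}]| \le \E_s[\bL_s \norm{x - x'} \ind_{\bL_s > C}]$ directly, using $L_s$-Lipschitzness of $f(\cdot;s)$ and $L_s \le \bL_s$, which sidesteps differentiation entirely. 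I would present the direct argument as it is cleanest. Everything else is a routine application of Fact~\ref{fact:moment_bias_bound}, Markov's inequality, and Lemma~\ref{lem:diam_bound_mu}.
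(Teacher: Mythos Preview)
Your proposal is correct and, once you settle on the direct difference argument you outline in the ``anticipated obstacle'' paragraph, it is essentially identical to the paper's proof: both bound $|(\Fpop-\Fpop^C)(x)-(\Fpop-\Fpop^C)(x')| = |\E_s[(f(x;s)-f(x';s))\ind_{\bL_s>C}]| \le \E_s[\bL_s\ind_{\bL_s>C}]\norm{x-x'}$ and invoke Fact~\ref{fact:moment_bias_bound}, and for the strongly convex case both use $\mu D \le 4G_k$ (the paper derives this by applying Lemma~\ref{lem:diam_bound_mu} directly to $\Fpop$ via Lemma~\ref{lem:fpop_lip}, whereas you route through $\E\bL_s$, but the conclusion is the same) together with Markov's inequality on $\Pr[\bL_s>C]$.
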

\begin{proof}
%Let $\calP^C$ be the distribution on $\calS_0$ which draws $s \sim \calP$, and replaces $s$ with $s_0$ if $\bL_s > C$, so that $\Fpop^C = \E_{s \sim \calP^C}[f^C(\cdot; s)]$.
For $s \in \calS$, let $\pi(s) \defeq s_0$ if $\bL_s > C$, and otherwise let $\pi(s) \defeq s$.
For any $x, x' \in \xset$, we have
\begin{align*}
\Par{\Fpop(x) - \Fpop^C(x)} - \Par{\Fpop(x') - \Fpop^C(x')} &= \E_{s \sim \calP}\Brack{f(x; s) - f(x; \pi(s)) - f(x'; s) + f(x'; \pi(s))} \\
&= \E_{s \sim \calP}\Brack{\Par{f(x; s)  - f(x'; s) } \ind_{\bL_s > C}} \\
&\le \E_{s \sim \calP}\Brack{\bL_s \norm{x - x'} \ind_{\bL_s > C}} \le \frac{G_k^k}{(k - 1)C^{k - 1}} \norm{x - x'}.
\end{align*}
In the second line, we used that $\pi(s) = s$ unless $\bL_s > C$, in which case $f(\cdot; \pi(s)) = 0$ uniformly. The last line used the definition of $\bL_s$ and Fact~\ref{fact:moment_bias_bound} with $X \gets \bL_s$, recalling Assumption~\ref{assume:kl_k_ht}.

Next, we analyze $\Fpop^{C,\mu}$. Overloading $\pi(s):=s_\mu$ if $\bL_s>C$, and letting $\pi(s):=s$ otherwise, 
\begin{align*}
    \Par{\Fpop(x) - \Fpop^{C,\mu}(x)} - \Par{\Fpop(x') - \Fpop^{C,\mu}(x')} &= \E_{s \sim \calP}\Brack{f(x; s) - f(x; \pi(s)) - f(x'; s) + f(x'; \pi(s))} \\
&= \E_{s \sim \calP}\Brack{\Par{f(x; s)  - f(x'; s) + f(x';s_\mu)-f(x;s_\mu) } \ind_{\bL_s > C}} \\
&\le \E_{s \sim \calP}\Brack{\bL_s \norm{x - x'} \ind_{\bL_s > C}}+\E_{s \sim \calP}\Brack{4G_k \norm{x - x'} \ind_{\bL_s > C}}\\
&\le \Par{\frac{G_k^k}{(k - 1)C^{k - 1}} +\frac{4G_k^{k+1}}{C^k}}\norm{x - x'}.
\end{align*}
In the third line, we used that $\mu D\le 4 G_1 \le 4G_k$ by Lemma~\ref{lem:diam_bound_mu} and Lemma~\ref{lem:fpop_lip} to show that $f(\cdot ;s_\mu)$ is $4G_k$-Lipschitz over $\xset$. Finally, the last line used Markov's inequality to bound $\E[\ind_{\bL_s > C}]$.
\end{proof}

Using Lemma~\ref{lem:c_diff_lip}, we provide a straightforward analysis of Algorithm~\ref{alg:klreduce}.

\begin{proposition}\label{thm:ub-red}
Consider an instance of known-Lipschitz $k$-heavy-tailed private SCO (Definition~\ref{def:psco}), and let $\rho \ge 0$.  
If $\alg$ is a $C$-Lipschitz private SCO algorithm (Definition~\ref{def:lip_algo}) and $\mu = 0$, Algorithm~\ref{alg:klreduce} using $\alg$ is a $\rho$-CDP algorithm which outputs $x \in \xset$ satisfying
\[\E\Brack{\Fpop(x) - \Fpop(x^\star)} \le \Err(\alg) + \frac{G_k^k D}{(k - 1)C^{k - 1}}, \text{ where } x^\star \defeq \argmin_{x \in \xset} \Fpop(x).\]
Further, if $f(\cdot; s)$ is $\mu$-strongly convex for all $s \in \calS$ and $\alg$ is a $C$-Lipschitz, $\mu$-strongly convex private SCO algorithm for $\mu > 0$, Algorithm~\ref{alg:klreduce} using $\alg$ is a $\rho$-CDP algorithm which outputs $x \in \xset$ satisfying
\begin{align*}\E\Brack{\Fpop(x) - \Fpop(x^\star)} &\le \Err(\alg)  \\
&+ \Par{\frac{G_k^k}{(k - 1)C^{k - 1}}+\frac{4G_k^{k+1}}{C^{k}}}\Par{\frac{2G_k^k}{\mu(k - 1)C^{k - 1}}+\frac{8G_k^{k+1}}{\mu C^k} + \sqrt{\frac 2 \mu \cdot \Err(\alg)}}, \\ 
\text{where } x^\star &\defeq \argmin_{x \in \xset} \Fpop(x).\end{align*}
In either case, Algorithm~\ref{alg:klreduce} queries $\Num(\alg)$ sample gradients (using samples in $\calD$).
\end{proposition}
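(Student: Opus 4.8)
The plan is to observe that running $\alg$ on the truncated dataset $\calD^C$ (or $\calD^C_\mu$) is exactly running a Lipschitz private SCO algorithm on a distribution over $\calS_0$ (resp.\ $\calS_\mu$), namely the pushforward of $\calP$ under the map $s \mapsto \pi(s)$ replacing $s$ by $s_0$ (resp.\ $s_\mu$) when $\bL_s > C$. Since truncation only decreases Lipschitz constants, every surviving sample function is $C$-Lipschitz, and the dummy function $f(\cdot; s_0) = 0$ is trivially $C$-Lipschitz (and $f(\cdot; s_\mu) = \frac\mu2\norm{\cdot - \bx}^2$ is $\mu D \le C$-Lipschitz by Lemma~\ref{lem:diam_bound_mu}, using $C \ge \frac{\mu D}{4}$ as required, and also $\mu$-strongly convex). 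Privacy is immediate: $\calD \mapsto \calD^C$ maps neighboring datasets to neighboring datasets (the $i$-th coordinate is a deterministic function of $s_i$ alone), so $\KLReduce$ inherits $\rho$-CDP from $\alg$ by postprocessing/composition, and the sample-gradient count is literally that of $\alg$.

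For the utility in the case $\mu = 0$: by definition of $\Err(\alg)$ applied to the truncated distribution, $\E[\Fpop^C(x)] - \min_{y \in \xset}\Fpop^C(y) \le \Err(\alg)$, where $x$ is the output. Then I would write
\[
\Fpop(x) - \Fpop(x^\star) = \Par{\Fpop(x) - \Fpop^C(x)} + \Par{\Fpop^C(x) - \Fpop^C(x^\star)} + \Par{\Fpop^C(x^\star) - \Fpop(x^\star)},
\]
take expectations, bound the middle term by $\E[\Fpop^C(x)] - \min_y \Fpop^C(y) \le \Err(\alg)$, and bound the sum of the first and third terms using that $\Fpop - \Fpop^C$ is $\frac{G_k^k}{(k-1)C^{k-1}}$-Lipschitz (Lemma~\ref{lem:c_diff_lip}) together with $\norm{x - x^\star} \le D$. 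This gives exactly the claimed $\Err(\alg) + \frac{G_k^k D}{(k-1)C^{k-1}}$.

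For the strongly convex case, the same three-term decomposition holds with $\Fpop^{C,\mu}$ in place of $\Fpop^C$, but now I cannot bound $\norm{x - x^\star}$ by $D$ if I want the stated (sharper, dimension-free in $D$) bound; instead I need $\norm{x - x^\star}$ small. The key step — and the main obstacle — is to control $\norm{x - x^\star}$ where $x^\star = \argmin \Fpop$: let $x^\star_C := \argmin_{x\in\xset}\Fpop^{C,\mu}$; by $\mu$-strong convexity of $\Fpop^{C,\mu}$ and the $\Err(\alg)$ guarantee, $\E\norm{x - x^\star_C}^2 \le \frac2\mu \Err(\alg)$, so $\E\norm{x - x^\star_C} \le \sqrt{\frac2\mu \Err(\alg)}$ by Jensen; and since $\Fpop$ is $\mu$-strongly convex while $\Fpop - \Fpop^{C,\mu}$ is $\Par{\frac{G_k^k}{(k-1)C^{k-1}} + \frac{4G_k^{k+1}}{C^k}}$-Lipschitz, a standard strong-convexity-vs-Lipschitz-perturbation argument (exactly as in Lemma~\ref{lem:restrict_radius}, comparing $\Fpop(x^\star_C)$ to $\Fpop(x^\star)$ via the upper and lower bounds from strong convexity) gives $\norm{x^\star_C - x^\star} \le \frac2\mu\Par{\frac{G_k^k}{(k-1)C^{k-1}} + \frac{4G_k^{k+1}}{C^k}}$. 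Adding these via the triangle inequality bounds $\E\norm{x - x^\star}$, and then each of the first and third terms of the decomposition is at most the Lipschitz constant of $\Fpop - \Fpop^{C,\mu}$ times this distance bound; collecting terms yields the stated inequality. I would double-check the bookkeeping that the Lipschitz constant of $\Fpop - \Fpop^{C,\mu}$ appears squared-ish (once from the distance-to-$x^\star_C$ being itself controlled by it, once as the prefactor) — this matches the $\Par{\frac{G_k^k}{(k-1)C^{k-1}}+\frac{4G_k^{k+1}}{C^k}}\Par{\frac{2G_k^k}{\mu(k-1)C^{k-1}}+\cdots}$ shape in the statement.
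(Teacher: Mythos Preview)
Your proposal is correct and follows essentially the same approach as the paper: the same decomposition via $\Fpop^C$ (resp.\ $\Fpop^{C,\mu}$), the same use of Lemma~\ref{lem:c_diff_lip}, and in the strongly convex case the same two-step bound on $\E\norm{x - x^\star}$ via $\E\norm{x - x^\star_C} \le \sqrt{\tfrac{2}{\mu}\Err(\alg)}$ (Jensen) and $\norm{x^\star_C - x^\star} \le \tfrac{2}{\mu}\cdot(\text{Lipschitz constant of }\Fpop - \Fpop^{C,\mu})$. One minor wording slip: it is the \emph{sum} of the first and third terms, i.e.\ $(\Fpop - \Fpop^{C,\mu})(x) - (\Fpop - \Fpop^{C,\mu})(x^\star)$, that is bounded by the Lipschitz constant times $\norm{x - x^\star}$, not each term separately---but your final bookkeeping remark shows you already have the right shape.
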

\begin{proof}
For the first utility claim, letting $x^{\star, C}\defeq \argmin_{x \in \xset} \Fpop^C(x)$, we have
\begin{equation}\label{eq:lip_derivation}
\begin{aligned}
\E\Brack{\Fpop(x) - \Fpop(x^\star)} &= \E\Brack{\Fpop^C(x) - \Fpop^C(x^\star)} + \E\Brack{\Par{\Fpop(x) - \Fpop^C(x)} - \Par{\Fpop(x^\star) - \Fpop^C(x^\star)}} \\
&\le \E\Brack{\Fpop^C(x) - \Fpop^C(x^{\star, C})} + \frac{G_k^k}{(k - 1)C^{k - 1}}\E\Brack{\norm{x - x^\star}}\\
&\le \Err(\alg) + \frac{G_k^k D}{(k - 1)C^{k - 1}},
\end{aligned}
\end{equation}
where the first inequality used the definition of $x^{\star, C}$ and Lemma~\ref{lem:c_diff_lip}, and the second used the definition of $\Err$ and $\diam(\xset) = D$. For the second claim, we first have
\begin{align*}
\E\Brack{\frac \mu 2 \norm{x - x^{\star, C}}^2} \le \Err(\alg)
\end{align*}
by the definition of $\Err(\alg)$ and $\mu$-strong convexity of $\Fpop^{C, \mu}$, so that $\E[\norm{x - x^{\star, C}}] \le (\frac 2 \mu \Err(\alg))^{1/2}$ by Jensen's inequality. Moreover, we also have
\begin{align*}
\frac \mu 2 \norm{x^{\star, C} - x^\star}^2 &\le \Fpop(x^{\star, C})  - \Fpop(x^\star)  \\
&\le \Par{\Fpop(x^{\star, C}) - \Fpop^C(x^{\star, C})} - \Par{\Fpop(x^\star) - \Fpop^C(x^\star)}\\
&\le \Par{\frac{G_k^k}{(k - 1)C^{k - 1}}+\frac{4G_k^{k+1}}{C^k}}\norm{x^{\star, C} - x^\star},
\end{align*}
where we use optimality of $x^{\star, C}$ in the second inequality, and Lemma~\ref{lem:c_diff_lip} in the third. Combining,
\[\E\norm{x - x^\star} \le \frac 2 \mu \cdot \Par{\frac{G_k^k}{(k - 1)C^{k - 1}}+\frac{4G_k^{k+1}}{C^k}} + \sqrt{\frac 2 \mu \cdot \Err(\alg)},\]
and then the claim follows by substituting this bound into \eqref{eq:lip_derivation}.
\end{proof}
\newcommand{\AlgLip}{\mc{A}_{\mathsf{Lip}}} % algorithm for Lipschitz case
\newcommand{\CLip}{C_{\mathsf{Lip}}}% Constant for Lipschitz case
\newcommand{\CHT}{C_{\textup{HT}}}

We can use any existing optimal algorithms for DP-SCO to instantiate our reduction. In particular, we can use the algorithm of~\cite{FeldmanKoTa20}, denoted by $\AlgLip$, which has the following guarantees. For simplicity of exposition, we focus on the case where our functions do not possess additional regularity properties e.g.\ smoothness, and we also focus on the simplest $\AlgLip$ which attains the optimal utility bound. Because of the generality of our reduction, however, improvements can be made by using more structured or faster subroutines as $\AlgLip$, such as the smooth DP-SCO algorithms of \cite{FeldmanKoTa20} or the Lipschitz DP-SCO algorithms of e.g.\ \cite{AsiFeKoTa21, kll21, CarmonJJLLST23}, which are more query-efficient, sometimes at the cost of logarithmic factors in the utility (in the case of \cite{CarmonJJLLST23}).

\begin{proposition}
\label{prop:uniform_Lip}
Let $\calP$ be a distribution over $\calS$ such that $f(\cdot; s)$ is $L$-Lipschitz and convex for all $s \in \calS$. There exists a constant $\CLip$ such that given $\calD \sim \Ds^n$, the algorithm $\AlgLip$ is $\rho$-CDP and outputs $\xpriv$ such that, for a universal constant $\CLip$, letting $x^\star \defeq \argmin_{x \in \xset} \Fpop(x)$,
\begin{align*}
    \E[F_\calP(\xpriv)-\Fpop(x^\star)]\le \CLip \cdot \left( \frac{G_2D}{\sqrt{n}}+\frac{LD \sqrt{d}}{n \sqrt{\rho}} \right),
    \end{align*}
and $\AlgLip$ queries $\le \CLip \max(n^2, \frac{n^3 \rho}{d})$ sample gradients (using samples in $\calD$), where $G_2$ is defined as in Assumption~\ref{assume:k_ht}.
Moreover, if $f(\cdot; s)$ is $\mu$-strongly convex for all $s \in \calS$, then
\begin{align*}
    \E[F_\calP(\xpriv)-\Fpop(x^\star)]\le \CLip \cdot \left( \frac{G_2^2}{\mu n}+\frac{L^2 d}{\mu n^2\rho} \right),
\end{align*}
and $\AlgLip$ queries $\le \CLip \max(n^2, \frac{n^3 \rho}{d})$ sample gradients (using samples in $\calD$).
\end{proposition}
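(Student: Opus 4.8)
The plan is to instantiate $\AlgLip$ as the localization-based noisy stochastic gradient method of \cite{FeldmanKoTa20} (or, at the cost of slightly worse constants in an otherwise identical query count, the noisy SGD of \cite{BassilyFeTaTh19}), and to read off its three advertised properties: $\rho$-CDP, the utility bound, and the sample-gradient complexity. Privacy is the easiest part: the method runs $T = \Theta(\max(n, \frac{n^2\rho}{d}))$ minibatched stochastic gradient steps, and since each $f(\cdot;s)$ is uniformly $L$-Lipschitz, every step is a postprocessing of a statistic of $\calD$ with $\ell_2$-sensitivity $O(\frac{L}{n})$; injecting Gaussian noise of variance $\sigma^2 = \Theta(\frac{L^2 T}{n^2\rho})$ per step and applying composition together with the Gaussian-mechanism bound of Lemma~\ref{lem:cdp_facts} gives $\rho$-CDP, exactly as in the privacy argument of Proposition~\ref{prop:dp-erm}. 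The query count is then $T$ minibatches of $O(n)$ samples, i.e.\ $O(\max(n^2, \frac{n^3\rho}{d}))$, consistent with the bound recorded for $\CDPSGD$ in Corollary~\ref{cor:erm_assume_good}.

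For the utility bound in the convex case, one clean way to see the two contributions is to decompose, through the empirical minimizer $x^\star_{\mathrm{emp}} \defeq \argmin_{x\in\xset}\FD(x)$,
\[\E[\Fpop(\xpriv) - \Fpop(x^\star)] = \E[\Fpop(\xpriv) - \FD(\xpriv)] + \E[\FD(\xpriv) - \FD(x^\star_{\mathrm{emp}})] + \E[\FD(x^\star_{\mathrm{emp}}) - \Fpop(x^\star)],\]
where the last term is $\le 0$ since $\E\FD(x^\star_{\mathrm{emp}}) \le \E\FD(x^\star) = \Fpop(x^\star)$ ($x^\star$ being independent of $\calD$). The middle (empirical optimization) term is governed by the noisy-SGD analysis on a convex $L$-Lipschitz objective --- in spirit the argument behind Proposition~\ref{prop:dp-erm} with $\lam = 0$, summed over the localization phases so the working domain radius remains $O(D)$ --- and plugging in $\eta$, $\sigma^2$, $T$ yields the $O(\frac{LD\sqrt d}{n\sqrt\rho})$ term. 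The first (generalization) term is where the gap between $G_2$ and $L$ matters: routed through a uniform-stability bound for the (regularized) empirical minimizer, as in the generalization arguments used in Section~\ref{ssec:pop_gen} and in \cite{Shalev-ShwartzSSS09, BassilyFeTaTh19}, this is $O(\frac{G_2 D}{\sqrt n})$, since it only consumes the second-moment bound $\E_{s\sim\calP}\norm{\nabla f(x;s)}^2 \le \E_s L_s^2 \le G_2^2$ rather than the uniform Lipschitz constant. Summing the three terms gives the claimed convex rate.

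For the strongly convex case, I would appeal to the standard halving/localization reduction from convex to strongly convex DP-SCO (Theorem 5.1 of \cite{FeldmanKoTa20}, or Lemma 5.5 of \cite{kll21}), which converts an $O(\frac{G_2 D}{\sqrt n} + \frac{LD\sqrt d}{n\sqrt\rho})$ convex rate into an $O(\frac{G_2^2}{\mu n} + \frac{L^2 d}{\mu n^2\rho})$ strongly convex rate with only a constant-factor increase in the query budget; alternatively, run noisy SGD directly on the already-$\mu$-strongly-convex $\FD$, using Proposition~\ref{prop:dp-erm} with $\lam = 0$ to bound $\FD(\xpriv) - \FD(x^\star_{\mathrm{emp}}) = O(\frac{L^2 d}{\mu n^2\rho})$ and combining with the $O(\frac{G_2^2}{\mu n})$ expected generalization gap of a $\mu$-strongly-convex ERM (again a uniform-stability fact).

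The main obstacle is exactly the point flagged above: ensuring that the $1/\sqrt n$ statistical term carries $G_2$ and not $L$. A crude empirical-SGD bound only gives $\frac{LD}{\sqrt n}$; obtaining $\frac{G_2 D}{\sqrt n}$ forces one to attribute the statistical error to a bounded-variance (stability-based) generalization analysis, while simultaneously relying on the localization/phasing structure of $\AlgLip$ to keep the noise-driven optimization term at its optimal $\frac{LD\sqrt d}{n\sqrt\rho}$ scale and the total sample-gradient count at $O(\max(n^2, \frac{n^3\rho}{d}))$.
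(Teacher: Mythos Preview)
Your proposal is correct and takes essentially the same approach as the paper: both instantiate $\AlgLip$ via the localization-based method of \cite{FeldmanKoTa20} (Theorem~4.8 for the convex case, Theorem~5.1 for the strongly convex reduction), and both identify that the $\frac{G_2 D}{\sqrt n}$ rather than $\frac{LD}{\sqrt n}$ dependence in the statistical term comes from the stability-based generalization step using only a second-moment bound. The paper's proof is terser---it simply cites the relevant theorems and remarks that ``inspecting the proof shows it only uses a second moment bound''---whereas you spell out more of the underlying decomposition and the privacy/complexity accounting, but the substance is the same.
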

\begin{proof}
This follows from developments in \cite{FeldmanKoTa20}, but we briefly explain any discrepancies. The $\mu = 0$ case applies Theorem 4.8 in \cite{FeldmanKoTa20}, where for simplicity we consider the full-batch variant which does not subsample.\footnote{The subsampled variant only satisfies a weaker variant of CDP called truncated CDP, with an upside of using $n$ times fewer sample gradient queries, but this is less comparable to the lower bounds in \cite{LowyR23}.} Moreover, Theorem 4.8 in \cite{FeldmanKoTa20} is stated with a dependence on $L$ rather than $G_2$ on the $n^{-1/2}$ term, but inspecting the proof shows it only uses a second moment bound. The $\mu > 0$ case follows from Theorem 5.1 of \cite{FeldmanKoTa20}, using Theorem 4.8 as a subroutine.
\end{proof}

We are now ready to present our main result in this section, using our reduction with $\AlgLip$.
\begin{theorem}\label{thm:known_lip_reduce_optimal}
Consider an instance of known-Lipschitz $k$-heavy-tailed private SCO (Definition~\ref{def:psco}), let $\rho \ge 0$, and let $x^\star \defeq \argmin_{x \in \xset} \Fpop(x)$. Algorithm~\ref{alg:klreduce} with $C \gets G_k(\frac{n\sqrt{\rho}}{\sqrt{d}})^{\frac 1 k}$ using $\AlgLip$ in Proposition~\ref{prop:uniform_Lip} is a $\rho$-CDP algorithm which outputs $x \in \xset$ satisfying, for a universal constant $\CHT$,
\[\E\Brack{\Fpop(x) - \Fpop(x^\star)} \le \CHT\Par{\frac{ G_2 D }{\sqrt{n}} + G_k D\cdot \left(\frac{\sqrt{d}}{n\sqrt \rho}\right)^{1 - \frac 1 k} } ,\]
querying $\le \CHT\max(n^2, \frac{n^3 \rho}{d})$ sample gradients (using samples in $\calD$).
Further, if $f(\cdot; s)$ is $\mu$-strongly convex for all $s \in \calS$, Algorithm~\ref{alg:klreduce} with $C \gets G_k(\frac{n^2\rho}{d})^{\frac 1 {2k}}$ using $\AlgLip$ in Proposition~\ref{prop:uniform_Lip} is a $\rho$-CDP algorithm which outputs $x \in \xset$ satisfying
\[\E\Brack{\Fpop(x) - \Fpop(x^\star)} \le \CHT\left( \frac{G_2^2}{\mu n}+ \frac{G_k^2}{\mu} \cdot \Par{\frac{d}{n^2\rho}}^{1 - \frac 1 k}\right),\]
querying $\le \CHT\max(n^2, \frac{n^3 \rho}{d})$ sample gradients (using samples in $\calD$).
\iffalse

Consider an instance of heavy-tailed private SCO, following notation in Definition~\ref{def:psco}, let $x\opt \defeq \argmin_{x \in \xset} \Fpop(x)$, and $\rho \ge 0$. Given $\calD \sim \calP^n$ and setting $C=G_k(\frac{n\sqrt{\rho}}{\sqrt{d}})^{\frac 1 k} $, \Cref{alg:reduction} using $\AlgLip$ and Lipschitz Oracle $\calO$ is $\rho$-CDP, and outputs $\hat x \in \xset$ satisfying,
\begin{equation*}
\Fpop(\hat x) - \Fpop(x\opt) \le O \left(  \frac{ G_2 D }{\sqrt{n}} + G_k D\cdot \left(\frac{\sqrt{d}}{n\sqrt \rho}\right)^{1 - \frac 1 k}  \right).
\end{equation*}
Moreover, if the functions are $\mu$-strongly convex, setting $C=G_k(\frac{n^2\rho}{d})^{1/2k}$, we have
\begin{align*}
    \Fpop(\hat x) - \Fpop(x\opt) \le O \left(  \frac{G_2^2}{\mu n}+\frac{G_k^2}{\mu} \cdot \left(\frac{d}{n^2\rho} \right)^{\frac{k-1}{k}} \right).
\end{align*}
\fi
\end{theorem}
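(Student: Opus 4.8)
The statement is a direct consequence of instantiating the reduction of Algorithm~\ref{alg:klreduce}, whose guarantees are quantified by Proposition~\ref{thm:ub-red}, with the optimal Lipschitz DP-SCO algorithm $\AlgLip$ of Proposition~\ref{prop:uniform_Lip}, and then choosing the clip threshold $C$ to balance the resulting error terms. Before invoking these results I would check that $\AlgLip$ is a legitimate subroutine on the truncated instance: for $\mu = 0$ the functions $f^C(\cdot; s)$ are convex and $C$-Lipschitz (the surviving ones since $\bL_s \le C$, the truncated ones since $f(\cdot; s_0)\equiv 0$), and for $\mu > 0$ the functions $f^C_\mu(\cdot; s)$ are $C$-Lipschitz and $\mu$-strongly convex provided $C \ge \mu D$, since then $f(\cdot; s_\mu) = \frac{\mu}{2}\norm{\cdot - \bx}^2$ is $C$-Lipschitz over $\xset$. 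The key observation is that truncation only shrinks sample Lipschitz constants, so the second-moment Lipschitz bound of the truncated distribution is still at most $G_2$ by Assumption~\ref{assume:kl_k_ht}; this is precisely the quantity on which the $n^{-1/2}$ term of Proposition~\ref{prop:uniform_Lip} depends (see its proof), so that term does not degrade to $C$. Privacy ($\rho$-CDP) and the sample-gradient count $\CLip\max(n^2,\frac{n^3\rho}{d})$ are inherited unchanged through Proposition~\ref{thm:ub-red}.

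\emph{Convex case ($\mu = 0$).} Substituting $\Err(\AlgLip) \le \CLip(\frac{G_2 D}{\sqrt n} + \frac{CD\sqrt d}{n\sqrt\rho})$ into the first bound of Proposition~\ref{thm:ub-red} yields
\[\E\Brack{\Fpop(x) - \Fpop(x^\star)} \le \CLip\frac{G_2 D}{\sqrt n} + \CLip\frac{CD\sqrt d}{n\sqrt\rho} + \frac{G_k^k D}{(k - 1)C^{k - 1}}.\]
Setting $C = G_k(\frac{n\sqrt\rho}{\sqrt d})^{1/k}$ makes the last two terms equal up to a $k$-independent factor, each becoming $\Theta(G_k D(\frac{\sqrt d}{n\sqrt\rho})^{1 - 1/k})$, which gives the first claimed bound.

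\emph{Strongly convex case ($\mu > 0$).} Abbreviate $Q \defeq \frac{G_k^k}{(k - 1)C^{k - 1}} + \frac{4G_k^{k + 1}}{C^k}$, so the second bound of Proposition~\ref{thm:ub-red} reads $\E[\Fpop(x) - \Fpop(x^\star)] \le \Err(\AlgLip) + \frac{2Q^2}{\mu} + Q\sqrt{\frac{2}{\mu}\Err(\AlgLip)}$. With $C = G_k(\frac{n^2\rho}{d})^{1/(2k)}$ one computes $\frac{C^2 d}{\mu n^2\rho} = \frac{G_k^2}{\mu}(\frac{d}{n^2\rho})^{1 - 1/k}$, hence $\Err(\AlgLip) \le \CLip(\frac{G_2^2}{\mu n} + \frac{G_k^2}{\mu}(\frac{d}{n^2\rho})^{1 - 1/k})$; similarly $\frac{G_k^k}{C^{k - 1}} = G_k(\frac{d}{n^2\rho})^{(1 - 1/k)/2}$ and $\frac{G_k^{k + 1}}{C^k} = G_k(\frac{d}{n^2\rho})^{1/2}$, so $Q = O(G_k(\frac{d}{n^2\rho})^{(1 - 1/k)/2})$ when $n^2\rho \ge d$ (the $\frac{G_k^{k+1}}{C^k}$ term being dominated). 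Then $\frac{Q^2}{\mu} = O(\frac{G_k^2}{\mu}(\frac{d}{n^2\rho})^{1 - 1/k})$, and using $\sqrt{a + b} \le \sqrt a + \sqrt b$ followed by AM-GM on the resulting $G_2 G_k$ cross term, $Q\sqrt{\frac{2}{\mu}\Err(\AlgLip)} = O(\frac{G_2^2}{\mu n} + \frac{G_k^2}{\mu}(\frac{d}{n^2\rho})^{1 - 1/k})$. Summing the three contributions proves the second claimed bound, and the gradient count is again inherited from $\AlgLip$.

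The mathematical content is essentially the one-line balancing of $C$; the only point requiring care is ensuring that $C \ge \mu D$ in the strongly convex case, so that the truncated functions are genuinely $C$-Lipschitz and Proposition~\ref{thm:ub-red} applies. This is handled by instead taking $C \gets \max(\mu D, G_k(\frac{n^2\rho}{d})^{1/(2k)})$ and noting $\mu D \le 4 G_1 \le 4 G_k$ by Lemmas~\ref{lem:fpop_lip} and~\ref{lem:diam_bound_mu}, so that the extra $\frac{(\mu D)^2 d}{\mu n^2\rho} \le \frac{16 G_k^2 d}{\mu n^2\rho}$ contribution to $\Err(\AlgLip)$ is absorbed into the stated bound whenever $n^2\rho \ge d$; when $n^2 \rho < d$ the stated bound exceeds $\frac{2G_2^2}{\mu}$, which upper-bounds $\Fpop(x) - \Fpop(x^\star)$ for every $x \in \xset$ (by $\mu$-strong convexity of $\Fpop$ together with its $G_1 \le G_2$-Lipschitzness via Lemma~\ref{lem:fpop_lip}), so the claim holds vacuously. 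I expect this edge-case bookkeeping to be the only friction in writing up the argument.
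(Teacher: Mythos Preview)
Your proposal is correct and follows essentially the same route as the paper: both invoke Proposition~\ref{thm:ub-red}, substitute the $\AlgLip$ error from Proposition~\ref{prop:uniform_Lip}, and plug in the stated $C$ to balance terms. The paper handles the edge case and the strongly-convex simplification slightly differently---it first assumes $d\le n^2\rho$ (so $C\ge G_k$) and then uses $C\ge G_k$ to collapse $\tfrac{4G_k^{k+1}}{C^k}\le \tfrac{4G_k^k}{C^{k-1}}$, rather than your explicit AM--GM on the cross term---but the calculations are equivalent up to constants. One small caveat: your sentence ``truncation only shrinks sample Lipschitz constants'' is literally true only in the $\mu=0$ case; for $\mu>0$ the replaced function $f(\cdot;s_\mu)$ is $\mu D$-Lipschitz, so the truncated second moment is $O(G_2^2)$ rather than $\le G_2^2$, which is harmless for the stated bound.
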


\begin{proof}
Throughout the proof, assume without loss of generality that $d \le n^2 \rho$, as otherwise all stated bounds are vacuous since the additive function value range over $\xset$ is at most $G_1 D \le \frac{4G_1^2}{\mu}$ by Lemma~\ref{lem:diam_bound_mu} and Lemma~\ref{lem:fpop_lip}. This also implies that $C \ge G_k$ in either case.

In the $\mu = 0$ case, \Cref{thm:ub-red} and the guarantees of $\AlgLip$ in~\Cref{prop:uniform_Lip} imply that 
\begin{align*}
    \E\Brack{\Fpop(\hat x) - \Fpop(x\opt) }
   &  \le  \Err(\AlgLip)+\frac{G_k^kD}{C^{k-1}} \\
   & \le   \CLip \cdot \left( \frac{G_2D}{\sqrt{n}}+\frac{C D \sqrt{d}}{n\sqrt{\rho}} \right) + \frac{G_k^kD}{C^{k-1}}  \\
   & \le \Par{\CLip + 2}\left(  \frac{ G_2 D }{\sqrt{n}}  + G_k D\cdot \left(\frac{\sqrt{d}}{n\sqrt \rho}\right)^{1 - \frac 1 k}  \right),
\end{align*}
where the last inequality follows from our choice of $C$. 
Next, we consider $\mu > 0$. \Cref{thm:ub-red} and the guarantees of $\AlgLip$ in~\Cref{prop:uniform_Lip} for this case imply that, assuming $\CLip \ge 2$ without loss,
\begin{align*}
\E\Brack{\Fpop(\hat x) - \Fpop(x\opt)} 
	& \le  \Err(\AlgLip)+\frac{5G_k^k}{C^{k-1}} \left(\sqrt{\frac{2 \Err(n,d,\rho,C,D)}{ \mu}} + \frac{10G_k^k}{\mu C^{k-1}}  \right) \\
	& \le  \CLip \cdot \left( \frac{G_2^2}{\mu n}+\frac{C^2 d}{\mu n^2\rho} + \frac{5G_k^k}{C^{k - 1}}\Par{\frac{G_2}{\mu\sqrt{n}} + \frac{C\sqrt d}{\mu n \sqrt \rho} + \frac{10G_k^k}{\mu C^{k-1}}} \right)\\
	& \le \Par{\CLip + 61} \cdot \left( \frac{G_2^2}{\mu n}+ \frac{G_k^2}{\mu} \cdot \Par{\frac{d}{n^2\rho}}^{1 - \frac 1 k}\right),
\end{align*}
where we used $C \ge G_k$ to simplify bounds, and applied our choice of $C$.
\end{proof}

\section{Fast Algorithms for Smooth Functions}\label{sec:smooth}
In this section, we develop a linear-time algorithm for the smooth setting where we additionally assume $f(\cdot ;s)$ is $\beta$-smooth for all $s \in \calS$. Our algorithm attains nearly-optimal rates for a sufficiently small value of $\beta$, and is based on the localization framework of~\cite{FeldmanKoTa20}. To apply this framework, we show that a variant of clipped DP-SGD (see Algorithm~\ref{alg:clipped-onepass}) is stable in the heavy-tailed setting with high probability. We then ensure that stability holds for any input dataset (not necessarily sampled from a distribution $P$), by using the sparse vector technique~\cite{DworkRo14} to verify that the number of clipped gradients is not too large. In Section~\ref{ssec:helper_smooth}, we provide some standard preliminary results from the literature. 
We use these results in Section~\ref{ssec:algo_smooth}, where we state our algorithm in full as~\Cref{alg:phased-sgd} and analyze it in Theorem~\ref{thm:smooth}, the main result of this section.
% We are given $n$ random samples. We run noiseless clipped one-pass SGD with clip threshold $C$ (i.e.\ we add Gaussian noise at the end of each localization phase), on the function
% \[\frac 1 n \sum_{i \in [n]} f_i(x) + \frac \lam 2 \norm{x}_2^2.\]

\subsection{Helper tools}\label{ssec:helper_smooth}

First, we state a standard bound on the contractivity of smooth gradient descent iterations.

\begin{fact}[Lemma 3.7, \cite{HardtRS16}]\label{fact:contract_smooth}
Let $f: \xset \to \R$ be $\beta$-smooth, and let $\eta \le \frac 2 \beta$. Then for any $x, x' \in \xset$,
\[\norm{(x - x') - \eta(\nabla f(x) - \nabla f(x'))} \le \norm{x - x'}.\]
\end{fact}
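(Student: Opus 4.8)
The plan is to reduce the claim to the co-coercivity (inverse strong monotonicity) of the gradient of a convex $\beta$-smooth function, namely that
\[\inprod{\nabla f(x) - \nabla f(x')}{x - x'} \ge \frac 1 \beta \norm{\nabla f(x) - \nabla f(x')}^2 \quad \text{for all } x, x' \in \xset.\]
Here I would use that $f$ is convex, as holds for every sample function in this paper; this is essential, since for a non-convex (e.g.\ concave quadratic) smooth $f$ the gradient step is genuinely expansive. Granting co-coercivity, the statement follows by expanding a square and using $\eta \le \frac 2 \beta$.

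First I would establish co-coercivity in the usual way. Define the auxiliary convex, $\beta$-smooth function $g(z) \defeq f(z) - \inprod{\nabla f(x)}{z}$, which is minimized at $z = x$ since $\nabla g(x) = \0_d$. Applying the descent lemma to $g$ at the point $x' - \frac 1 \beta \nabla g(x')$ gives
\[g(x) \;\le\; g\Par{x' - \frac 1 \beta \nabla g(x')} \;\le\; g(x') - \frac 1 {2\beta}\norm{\nabla g(x')}^2,\]
which, unpacking the definition of $g$, rearranges to $f(x') - f(x) - \inprod{\nabla f(x)}{x' - x} \ge \frac 1 {2\beta}\norm{\nabla f(x') - \nabla f(x)}^2$. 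Adding this to the inequality obtained by swapping the roles of $x$ and $x'$ yields co-coercivity as stated.

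Next I would expand the quantity on the left-hand side of the claimed bound:
\begin{align*}
\norm{(x - x') - \eta(\nabla f(x) - \nabla f(x'))}^2 &= \norm{x - x'}^2 - 2\eta \inprod{\nabla f(x) - \nabla f(x')}{x - x'} + \eta^2 \norm{\nabla f(x) - \nabla f(x')}^2 \\
&\le \norm{x - x'}^2 - \eta\Par{\frac 2 \beta - \eta}\norm{\nabla f(x) - \nabla f(x')}^2,
\end{align*}
where the inequality substitutes co-coercivity into the cross term. Since $\eta \le \frac 2 \beta$, the coefficient $\eta(\frac 2 \beta - \eta)$ is nonnegative, so the right-hand side is at most $\norm{x - x'}^2$; taking square roots gives the conclusion.

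There is no real obstacle here: the argument is routine once co-coercivity is available, and the only point worth flagging is that convexity of $f$ is used (implicitly) and is necessary, since the gradient map fails to be nonexpansive for general smooth functions regardless of step size.
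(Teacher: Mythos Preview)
The paper does not give its own proof of this statement: it is stated as a Fact and attributed to \cite{HardtRS16}, Lemma~3.7, without argument. Your proof is correct and is precisely the standard route via co-coercivity (the Baillon--Haddad inequality) followed by squaring and expanding; this is essentially the same argument one finds in the cited reference. Your remark that convexity is needed (and is implicit in the paper's setting, where every $f(\cdot; s)$ is convex) is well taken and worth recording, since the stated Fact as written only assumes $\beta$-smoothness.
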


Next, we provide a standard utility bound on a one-pass SGD algorithm using clipped gradients.

\begin{algorithm2e}[ht!]
\caption{$\OPCSGD(\calD, C, \eta, T, \xset, x_0)$}
\label{alg:clipped-onepass}
\DontPrintSemicolon
\codeInput Dataset $\calD = \{s_t\}_{t \in [T]}\in \Ds^T$, clip threshold $C \in \R_{\ge 0}$, step size $\eta \in \R_{\ge 0}$, iteration count $T \in \N$, domain $\xset \subset \ball(x_0, D)$ for $x_0 \in \xset$\;
\For{$0 \le t < T$}{
%\State Sample $b$ datapoints $Z_t^1,\dots,Z_t^b$ from $\calD$
$x_{t+1} \gets \argmin_{x \in \calX} \{\eta \inprod{\Pi_C(\nabla f(x_t; s_{t + 1}))}{x} + \half\norm{x - x_t}^2\}$\;\label{line:clip_sgd}
}
\codeReturn $\hx \gets  \frac 1 T \sum_{0 \le t < T} x_t$
\end{algorithm2e}

\begin{lemma}\label{lem:one_pass_utility}
Consider an instance of $k$-heavy-tailed private SCO, following notation in Definition~\ref{def:psco}, and let $u \in \xset$ be independent of $\calD$. Assuming $\calD \sim \calP^T$ i.i.d., Algorithm~\ref{alg:clipped-onepass} outputs $\hx \in \xset$ satisfying
\[\E\Brack{\Fpop(\hx) - \Fpop(u)} \le \frac{\norm{x_0 - u}^2}{2\eta T} + \frac{\eta G_2^2}{2} + \frac{G_k^k D}{(k - 1)C^{k - 1}}.\]
\end{lemma}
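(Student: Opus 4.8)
The plan is to analyze Algorithm~\ref{alg:clipped-onepass} as a projected stochastic subgradient method on $\Fpop$, where the stochastic gradient estimates are \emph{biased} due to clipping. Write $\hat g_t \defeq \Pi_C(\nabla f(x_t; s_{t+1}))$ for the clipped sample gradient used at step $t$, and let $g_t \defeq \nabla f(x_t; s_{t+1})$ be the unbiased sample gradient; note $\E[g_t \mid x_t] = \nabla \Fpop(x_t)$ since $s_{t+1}$ is fresh (independent of $x_t$, which depends only on $s_1,\dots,s_t$). The standard one-pass SGD analysis starts from the first-order optimality condition for the update $x_{t+1} = \argmin_{x \in \xset}\{\eta\inprod{\hat g_t}{x} + \half\norm{x - x_t}^2\}$, which gives for any fixed $u \in \xset$:
\[
\inprod{\hat g_t}{x_t - u} \le \frac{1}{2\eta}\Par{\norm{x_t - u}^2 - \norm{x_{t+1} - u}^2} + \frac{\eta}{2}\norm{\hat g_t}^2.
\]
Then I would add and subtract to pass from $\hat g_t$ to $g_t$ to $\nabla\Fpop(x_t)$: write $\inprod{\hat g_t}{x_t - u} = \inprod{g_t}{x_t - u} + \inprod{\hat g_t - g_t}{x_t - u}$, use convexity of $f(\cdot; s_{t+1})$ to lower-bound $\inprod{g_t}{x_t - u} \ge f(x_t; s_{t+1}) - f(u; s_{t+1})$, and bound $\norm{\hat g_t}^2 \le \norm{g_t}^2$ (clipping only decreases norm).

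The key steps, in order: (1) sum the per-step inequality over $0 \le t < T$, telescoping the $\norm{x_t - u}^2$ terms to get $\norm{x_0 - u}^2$; (2) take expectations over all randomness — the term $\E[f(x_t; s_{t+1}) - f(u; s_{t+1})]$ equals $\E[\Fpop(x_t) - \Fpop(u)]$ by independence of $s_{t+1}$ from $x_t$, and the term $\E\norm{g_t}^2 = \E_{x_t}\E_{s_{t+1}}\norm{\nabla f(x_t; s_{t+1})}^2 \le G_2^2$ by Assumption~\ref{assume:k_ht} with $j = 2$; (3) handle the bias term $\E\inprod{\hat g_t - g_t}{x_t - u}$: bound it by $\E[\norm{\hat g_t - g_t}\cdot\norm{x_t - u}] \le D \cdot \E\norm{\hat g_t - g_t}$ using $\norm{x_t - u} \le D$ (since $\xset \subset \ball(x_0, D)$ and both points are in $\xset$, giving diameter $\le D$ — or more carefully $2D$, but the constant is absorbable; I should double-check whether the intended bound uses $\diam(\xset) = D$ from Definition~\ref{def:psco} directly), then apply Fact~\ref{fact:moment_bias_bound} conditionally on $x_t$ to the random vector $\nabla f(x_t; s_{t+1})$ to get $\E[\norm{\hat g_t - g_t} \mid x_t] \le \frac{G_k^k}{(k-1)C^{k-1}}$; (4) divide by $\eta T$ and apply Jensen/convexity of $\Fpop$ to move from $\frac 1 T \sum_t \E[\Fpop(x_t) - \Fpop(u)]$ to $\E[\Fpop(\hx) - \Fpop(u)]$ for $\hx = \frac 1 T \sum_t x_t$. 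Collecting terms yields exactly $\frac{\norm{x_0 - u}^2}{2\eta T} + \frac{\eta G_2^2}{2} + \frac{G_k^k D}{(k-1)C^{k-1}}$.

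The main obstacle — though it is more of a bookkeeping subtlety than a genuine difficulty — is correctly handling the conditioning so that the freshness of $s_{t+1}$ is exploited in three places simultaneously: for unbiasedness of $g_t$ relative to $\nabla\Fpop(x_t)$, for the second-moment bound on $\norm{g_t}^2$, and for the bias bound via Fact~\ref{fact:moment_bias_bound} applied to the conditional law of $\nabla f(x_t; s_{t+1})$ given $x_t$. Each requires the tower property with the filtration generated by $s_1, \dots, s_t$. A secondary point is ensuring the $\norm{x_t - u}$ factor in the bias term is controlled by $D$ and not something larger; since this is the only place the diameter enters the bias term and the claimed bound has coefficient exactly $D$, I would lean on $\diam(\xset) = D$ (so $\norm{x_t - u} \le D$ for $x_t, u \in \xset$) rather than the weaker $\xset \subset \ball(x_0, D)$ containment, which would only give $2D$. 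Everything else is the routine mirror-descent / projected-SGD telescoping argument.
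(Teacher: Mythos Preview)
Your proposal is correct and follows essentially the same route as the paper's proof: first-order optimality for the projected step, decomposition $\hat g_t = g_t + (\hat g_t - g_t)$, the variance bound $\|\hat g_t\|^2 \le \|g_t\|^2 \le G_2^2$, the bias bound via Fact~\ref{fact:moment_bias_bound}, telescoping, and Jensen for $\hx$. The only cosmetic difference is that you invoke convexity of each sample function $f(\cdot; s_{t+1})$ before taking expectation, whereas the paper applies convexity of $\Fpop$ after using $\E[g_t \mid x_t] = \nabla\Fpop(x_t)$; your caution about $\|x_t - u\| \le D$ is resolved exactly as you suggest, by $\diam(\xset) = D$ from Definition~\ref{def:psco}.
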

\begin{proof}
To simplify notation, let $g_t \defeq \nabla f(x_t; s_{t + 1})$ for all $0 \le t < T$, and let $\hg_t \defeq \trunc_C(g_t)$. Because $s_{t + 1} \sim \calP$ is independent of $x_t$, we have that $\E g_t = \nabla \Fpop(x_t)$. Therefore, in iteration $t$,
\bb \label{eq:bias_var}
\Fpop(x_t) - \Fpop(u) &= \E\Brack{\inprod{g_t}{x_t - u}} \\
&\le \E\Brack{\inprod{\hg_t}{x_t - u} + \norm{g_t - \hg_t}D} \\
&\le \E\Brack{\half \norm{x_t - u}^2 - \half \norm{x_{t + 1} - u}^2 + \frac{\eta G_2^2}{2}}+ \frac{G_k^k D}{(k - 1)C^{k - 1}},
\ee
where all expectations are conditional on the first $t$ iterations of the algorithm, and taken over the randomness of $s_{t + 1}$. In the third line, we used the first-order optimality condition on $x_{t + 1}$, applied Fact~\ref{fact:moment_bias_bound} to bound $\E\norm{g_t - \hg_t}$, and used
\bb\label{eq:varbound}
\E \norm{\hg_t}^2 \le \E \norm{g_t}^2 \le G_2^2.
\ee
Summing across all iterations and dividing by $T$ yields the result upon iterating expectations.
\end{proof}

We also note the following straightforward generalization of Lemma~\ref{lem:one_pass_utility} to the case of randomized clipping thresholds, which is used in our later development.

\begin{corollary}\label{cor:diff_threshold}
For $C, \hat{C} \ge 0$ and $g \in \R^d$, define the operation
\[\Pi_{C, \hat{C}}(g) \defeq \begin{cases} 
\Pi_C(g) & \norm{g} \ge \hat{C} \\ 
g & \textup{else}
\end{cases}.
\]
If Algorithm~\ref{alg:clipped-onepass} is run with $\Pi_{C}(\nabla f(x_t; s_{t + 1}))$ replaced by $\Pi_{C, \hat{C}_t}(\nabla f(x_t; s_{t + 1}))$ where $\hat{C}_t$ is independent of $s_{t + 1}$ and satisfies $\hat{C}_t \ge \frac C 2$ for all $0 \le t < T$, then following notation in Lemma~\ref{lem:one_pass_utility},
\begin{align*}
\E\Brack{\Fpop(\hx) - \Fpop(u)} \le \frac{\norm{x_0 - u}^2}{2\eta T} + 2\eta G_2^2 + \frac{G_k^k D}{(k - 1)(\frac C 2)^{k - 1}}.
\end{align*}
\end{corollary}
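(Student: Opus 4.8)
The plan is to reuse the proof of Lemma~\ref{lem:one_pass_utility} essentially verbatim, since replacing $\Pi_C$ by $\Pi_{C,\hat{C}_t}$ affects only two quantities in that argument: the second moment $\E\norm{\hg_t}^2$ of the clipped gradient, and the bias $\E\norm{g_t - \hg_t}$. I would set $g_t \defeq \nabla f(x_t; s_{t+1})$ and $\hg_t \defeq \Pi_{C,\hat{C}_t}(g_t)$, and condition throughout on the first $t$ iterations of the algorithm together with $\hat{C}_t$. The key structural point is that since $\hat{C}_t$ is assumed independent of $s_{t+1}$ (and $x_t$ is determined by the first $t$ iterations), we still have $s_{t+1} \sim \calP$ under this conditioning, so $\E g_t = \nabla \Fpop(x_t)$, $\E\norm{g_t}^2 \le G_2^2$, and $\E\norm{g_t}^k \le G_k^k$, exactly as in the fixed-threshold case.

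Next I would record the two elementary properties of $\Pi_{C,\hat{C}_t}$ that make the argument carry over. First, it is norm-nonincreasing: if $\norm{g_t} < \hat{C}_t$ then $\hg_t = g_t$, and otherwise $\hg_t = \Pi_C(g_t)$ has norm $\min(C,\norm{g_t}) \le \norm{g_t}$; hence $\E\norm{\hg_t}^2 \le \E\norm{g_t}^2 \le G_2^2$. Second, it does not touch small gradients: because $\hat{C}_t \ge \frac{C}{2}$, any $g_t$ with $\norm{g_t} < \frac{C}{2}$ lies in the branch $\hg_t = g_t$, so $\norm{g_t - \hg_t} \le \norm{g_t}\ind_{\norm{g_t} \ge C/2}$; applying Fact~\ref{fact:moment_bias_bound} to $g_t$ (under the conditioning above, with clip level $\frac{C}{2}$) yields $\E\norm{g_t - \hg_t} \le \frac{G_k^k}{(k-1)(C/2)^{k-1}}$.

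With these two facts, the per-iteration inequality \eqref{eq:bias_var} is reproduced line for line: decompose $\Fpop(x_t) - \Fpop(u) = \E\inprod{g_t}{x_t - u} \le \E\inprod{\hg_t}{x_t - u} + D\cdot\E\norm{g_t - \hg_t}$ using $\norm{x_t - u} \le D$, bound the first term by $\frac{1}{2\eta}(\norm{x_t - u}^2 - \norm{x_{t+1} - u}^2) + \frac{\eta}{2}\norm{\hg_t}^2$ via first-order optimality for $x_{t+1}$, take expectations using $\E\norm{\hg_t}^2 \le G_2^2$, and insert the bias bound. Summing over $0 \le t < T$, telescoping the distance terms, dividing by $T$, and using convexity of $\Fpop$ at $\hx = \frac{1}{T}\sum_{0 \le t < T} x_t$ produces the claim with $\frac{\eta G_2^2}{2}$ in place of $2\eta G_2^2$; since $\frac{\eta G_2^2}{2} \le 2\eta G_2^2$, the stated inequality follows a fortiori.

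I do not expect a genuine obstacle: this is a routine re-derivation. The only point that needs care is the conditioning in the first paragraph — the bias estimate from Fact~\ref{fact:moment_bias_bound} is a statement about a random vector with bounded $k$-th moment, so one must check that conditioning on $\hat{C}_t$ (which may depend on the data seen so far) does not perturb the law of the fresh sample $s_{t+1}$, and this is exactly what the hypothesis $\hat{C}_t \perp s_{t+1}$ provides. One could even sharpen the bias constant to $\frac{G_k^k D}{(k-1)C^{k-1}}$ by noting that $\Pi_C$ never alters a vector of norm below $C$, so the second property above holds with clip level $C$ rather than $\frac{C}{2}$; but $\frac{C}{2}$ already suffices for the stated bound.
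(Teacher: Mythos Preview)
Your proposal is correct and follows essentially the same approach as the paper: re-run the proof of Lemma~\ref{lem:one_pass_utility}, modifying only the variance and bias estimates for $\hg_t$. In fact, your variance observation that $\Pi_{C,\hat C_t}$ is norm-nonincreasing (so $\E\norm{\hg_t}^2 \le G_2^2$) is sharper than the paper's, which only records $\norm{\hg_t}\le 2\norm{g_t}$ and hence arrives at the stated $2\eta G_2^2$; your remark that the bias can be taken with clip level $C$ rather than $C/2$ is likewise correct, since $\hg_t\neq g_t$ forces $\norm{g_t}\ge\max(\hat C_t,C)\ge C$.
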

\begin{proof}
For a fixed iteration $0 \le t < T$, the calculation \eqref{eq:varbound} changes in two ways. First, in place of the variance bound \eqref{eq:varbound} (which used $\norm{\hg_t} \le \norm{g_t}$ deterministically), when using the modified clipping operators we require the modified deterministic bound
\[\norm{\hg_t} \le 2\norm{g_t},\]
which follows because $\norm{\hg_t} \neq \norm{g_t}$ (which implies $C = \norm{\hg_t}$) only if $\norm{g_t} \ge \frac C 2$. Moreover, in place of the bias bound $\E \norm{g_t - \hg_t} \le \frac{G_k^k}{(k - 1)C^{k - 1}}$ which followed from Fact~\ref{fact:moment_bias_bound}, we instead have
\begin{align*}
\E\norm{\Pi_{C, \hat{C}_t}(g_t) - g_t} = \E\Brack{\Abs{\frac{C}{\norm{g_t}} - 1}\norm{g_t} \ind_{\norm{g_t} \ge \max(\hat{C}_t, C)}} \le \E\Brack{\norm{g_t} \ind_{\norm{g_t} \ge \frac C 2} } \le \frac{G_k^k}{(k - 1)(\frac C 2)^{k - 1}}.
\end{align*}

The conclusion follows by adjusting these constants appropriately in Lemma~\ref{lem:one_pass_utility}.
\end{proof}

Next, for $R, \tau \ge 0$, we let $\BLap(R, \tau)$ denote the bounded Laplace distribution with scale parameter $R$ and truncation threshold $\tau$ be defined as the conditional distribution of $\xi \sim \Lap(R)$ on the event $|\xi| \le \tau$ (recall that $\Lap(R)$ has a density function $\propto \exp(-\frac{1}{R}|\xi|)$). It is a standard calculation that 
\begin{equation}\label{eq:lap_tail_bound}\Pr_{\xi \sim \Lap(R)}\Brack{|\xi| \le R\log\Par{ \frac 1 \delta}} = 1 - \delta,\end{equation}
so that the total variation distance between $\Lap(R)$ and $\BLap(R, R\log (\frac 1 \delta))$ is $\delta$. We hence have the following bounded generalization of the privacy given by the Laplace mechanism.

\begin{lemma}\label{lem:bounded_lap_mech}
Let $\eps, \delta \in (0, 1)$. If $S(\calD) \in \R$ is a $\Delta$-sensitive statistic of the dataset $\calD$, i.e.\ for neighboring datasets $\calD, \calD'$ we have that $|S(\calD) - S(\calD')| \le \Delta$, then the \emph{bounded Laplace} mechanism which outputs $S(\calD) + \xi$ where $\xi \sim \BLap(\frac \Delta \eps, \tau)$ for any $\tau \ge \frac \Delta \eps \log(\frac 4 \delta)$ satisfies $(\eps, \delta)$-DP.
\end{lemma}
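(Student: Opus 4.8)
The plan is to reduce to the (pure) differential privacy of the ordinary untruncated Laplace mechanism and then pay a small additive cost in $\delta$ to account for the truncation, via a coupling argument. First I would recall the one-line proof that $\calD \mapsto S(\calD) + \xi'$ with $\xi' \sim \Lap(R)$ and $R \defeq \frac \Delta \eps$ is $\eps$-DP: for neighboring $\calD, \calD'$ and any $t \in \R$, the ratio of the output densities at $t$ is $\exp(\frac 1 R(|t - S(\calD')| - |t - S(\calD)|)) \le \exp(\frac{|S(\calD) - S(\calD')|}{R}) \le \exp(\frac \Delta R) = \exp(\eps)$ by the triangle inequality and the sensitivity bound, which immediately gives $\eps$-DP.

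Next I would relate $\BLap(R, \tau)$ to $\Lap(R)$ through a coupling: draw $\xi' \sim \Lap(R)$; if $|\xi'| \le \tau$ set $\xi \defeq \xi'$, and otherwise draw $\xi \sim \BLap(R, \tau)$ freshly. A short check, using that $\BLap(R, \tau)$ is by definition the law of $\Lap(R)$ conditioned on $\{|\xi'| \le \tau\}$, shows the marginal law of $\xi$ is exactly $\BLap(R, \tau)$, and $\xi = \xi'$ on the event $\{|\xi'| \le \tau\}$, whose probability is $1 - \exp(-\tau/R)$ by \eqref{eq:lap_tail_bound}. Hence for every fixed dataset $\calD$, the total variation distance between the law of $S(\calD) + \xi$ and the law of $S(\calD) + \xi'$ is at most $\exp(-\tau/R) \le \frac \delta 4$, using the hypothesis $\tau \ge R\log(\frac 4 \delta)$.

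Finally I would chain these estimates: for neighboring $\calD, \calD'$ and any $T \subseteq \R$,
\begin{align*}
\Pr\Brack{S(\calD) + \xi \in T} &\le \Pr\Brack{S(\calD) + \xi' \in T} + \frac \delta 4 \le \exp(\eps)\Pr\Brack{S(\calD') + \xi' \in T} + \frac \delta 4 \\
&\le \exp(\eps)\Par{\Pr\Brack{S(\calD') + \xi \in T} + \frac \delta 4} + \frac \delta 4 = \exp(\eps)\Pr\Brack{S(\calD') + \xi \in T} + \Par{\exp(\eps) + 1}\frac \delta 4,
\end{align*}
where the first and third inequalities are the two applications of the total variation bound (on $\calD$ and on $\calD'$) and the middle one is $\eps$-DP of the Laplace mechanism. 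Since $\eps < 1$ we have $\exp(\eps) + 1 < e + 1 < 4$, so the additive term is at most $\delta$, establishing $(\eps, \delta)$-DP.

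This lemma is essentially routine, so there is no serious obstacle; the only points needing care are (i) verifying that the coupling has $\BLap(R, \tau)$ as its marginal, which follows directly from the conditional-law definition together with $\Pr_{\xi' \sim \Lap(R)}[|\xi'| \le \tau] = 1 - \exp(-\tau/R) \ge \frac 1 2$, and (ii) noticing that the truncation slack is incurred on both $\calD$ and $\calD'$, producing the factor $\exp(\eps) + 1$ and hence the requirement $\tau \ge R\log(\frac 4 \delta)$ rather than the $\tau \ge R \log(\frac 1 \delta)$ appearing in \eqref{eq:lap_tail_bound}. An alternative, coupling-free route gives the same conclusion: bound the output density ratio by $\exp(\eps)$ on the overlap of the two supports $[S(\calD) - \tau, S(\calD) + \tau] \cap [S(\calD') - \tau, S(\calD') + \tau]$ (where the normalization constants $1 - \exp(-\tau/R)$ cancel), and separately bound the $\BLap(R,\tau)$-mass of the part of $S(\calD)$'s support disjoint from $S(\calD')$'s support by $\Pr_{\xi \sim \BLap(R,\tau)}[\,\xi > \tau - \Delta\,] \le \exp(-\tau/R)\Par{\exp(\eps) - 1} \le \delta$.
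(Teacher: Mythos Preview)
Your proof is correct and follows essentially the same approach as the paper's: both argue by passing to the untruncated Laplace mechanism via the total variation bound $\le \frac\delta 4$ between $\Lap(R)$ and $\BLap(R,\tau)$, invoking the pure $\eps$-DP of the Laplace mechanism, and then absorbing the two $\frac\delta 4$ slacks using $\exp(\eps)+1 \le 4$ (the paper phrases this as $\exp(\eps)\le 3$). Your version is slightly more explicit in spelling out the coupling and the density-ratio computation for $\eps$-DP, and your coupling-free aside is additional, but the core argument is the same.
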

\begin{proof}
For notational simplicity, let $\alg$ denote the Laplace mechanism (which samples $\xi \sim \Lap(\frac \Delta \eps)$ instead of $\BLap(\frac \Delta \eps, \tau)$), let $\balg$ denote the bounded Laplace mechanism, and let $\event \subseteq \R$ be an event in the outcome space. By standard guarantees on $(\eps, 0)$-DP of $\alg$ (e.g.\ Theorem 3.6, \cite{DworkRo14}),
\begin{equation}\label{eq:paydeltas}
\begin{aligned}
\Pr\Brack{\balg(\calD) \in \event} &\le \Pr\Brack{\alg(\calD) \in \event} + \frac \delta 4 \\
&\le \exp(\eps)\Pr\Brack{\alg(\calD') \in \event} + \frac \delta 4 \le \exp(\eps)\Pr\Brack{\balg(\calD') \in \event} + \delta,
\end{aligned}
\end{equation}
for any neighboring datasets, where we used $\exp(\eps) \le 3$ and that the total variation distance between $(\alg(\calD), \balg(\calD))$ and $(\alg(\calD'), \balg(\calD'))$ are bounded by $\frac \delta 4$ by \eqref{eq:lap_tail_bound}.
\end{proof}

We also use the sparse vector technique (SVT)~\cite{DworkRo14}, which has been used recently in private optimization in the user-level setting~\cite{LiuA24}.
% and in the online setting~\cite{}.
Given an input dataset $\calD = \{s_i\}_{i \in [n]} \in \calS^n$, SVT takes a stream of queries $q_1,q_2,\ldots,q_T : \calD \to \R$ in an online manner. We assume each $q_i$ is $\Delta$-sensitive, i.e.\ $|q_i(\calD) - q_i(\calD') | \le \Delta$ for neighboring datasets $\calD, \calD' \in \calS^n$.
One notable difference is that our SVT algorithm will use the bounded Laplace mechanism rather than the Laplace mechanism, but this distinction is handled similarly to Lemma~\ref{lem:bounded_lap_mech}. %\ha{I would suggest moving this to the appendix and just stating the lemma for SVT because we dont use SVT as blackbox in section 5.2}
We provide a guarantee on this variant of SVT in Lemma~\ref{lemma:svt}, and pseudocode is provided as Algorithm~\ref{alg:svt}. 

\begin{algorithm2e}[ht!]
\caption{$\SVT(\calD, \{q_i\}_{i \in [T]}, c, L, R, \tau)$}
\label{alg:svt}
\DontPrintSemicolon
\codeInput Dataset $\calD = \{s_t\}_{t \in [n]}\in \Ds^n$, $\Delta$-sensitive queries $\{q_i: \Ds^n \to \R\}_{i \in [T]}$, count threshold $c \in \N$, query threshold $L \in \R$, scale parameter $R \in \R_{\ge 0}$, truncation threshold $\tau \in \R_{\ge 0}$\;
$i \gets 1$, $\rmcount \gets 0$\;
$b \gets L + \xi$ for $\xi \sim \BLap(R, \tau)$\; % R \ge \frac{6}{\eps}\sqrt{c\log(\frac 1 \delta)}
\While{$i \in [T] \; \mathbf{and} \; \rmcount < c$}{
$\xi \sim \BLap(2R, 2\tau)$\;
\If{$q_i(\calD) + \xi < b$}{
\codeOutput $a_i \gets \bot$\;
$i \gets i + 1$\;
}
\Else{
\codeOutput $a_i \gets \top$\;
$i \gets i + 1$, $\rmcount \gets \rmcount + 1$\;
$b \gets L + \xi$ for $\xi \sim \BLap(R, \tau)$
}
}
\textbf{Halt}\label{line:halt}
\end{algorithm2e}

\begin{lemma}
\label{lemma:svt}
Let $\delta, \eps \in (0, 1)$ and suppose 
\bb\label{eq:svt_params}
R \ge \frac {6\Delta} \eps \sqrt{c\log\Par{\frac 5 \delta}},\; \tau \ge R\log\Par{\frac {10T} \delta}
\ee
Algorithm~\ref{alg:svt} outputs a sequence of answers $\{a_i \in \{\bot, \top\}\}_{i \in [k]}$ for some $k \in [T]$, and is $(\eps, \delta)$-DP.
\end{lemma}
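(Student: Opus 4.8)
The goal is to show Algorithm~\ref{alg:svt} satisfies $(\eps,\delta)$-DP, following the classical SVT analysis (e.g.\ the ``AboveThreshold'' argument of \cite{DworkRo14}) but with the Laplace noise replaced by bounded Laplace, as in Lemma~\ref{lem:bounded_lap_mech}. First I would reduce from the bounded-Laplace version to the unbounded-Laplace version: the algorithm draws at most $1$ threshold noise $\xi \sim \BLap(R,\tau)$ at the start, at most $c-1$ further threshold resets $\BLap(R,\tau)$, and at most $T$ per-query noises $\BLap(2R,2\tau)$. By \eqref{eq:lap_tail_bound} and the choice $\tau \ge R\log(\tfrac{10T}{\delta})$, each bounded-Laplace draw has total variation distance $\le \tfrac{\delta}{10T}$ from the corresponding unbounded Laplace draw (for the threshold noises, TV distance $\le \tfrac{\delta}{10T} \le \tfrac{\delta}{10c}$; for the $\le T$ query noises, $\le \tfrac{\delta}{10T}$ each). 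Summing over the at most $c + T \le 2T$ total draws gives total variation distance $\le \tfrac{\delta}{5}$ between the bounded-Laplace mechanism and the unbounded-Laplace mechanism on any input, so by a coupling argument exactly as in \eqref{eq:paydeltas} it suffices to prove that the unbounded-Laplace version is $(\eps, \tfrac{3\delta}{5})$-DP, say; I will actually aim for $(\eps, 0)$ on the idealized version and absorb the $\tfrac{2\delta}{5}$ slack (one $\tfrac{\delta}{5}$ for each of $\calD,\calD'$) into $\delta$.

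**Core argument.** For the idealized (unbounded Laplace) mechanism, I would run the standard AboveThreshold hybrid argument. Condition on the per-query noises being frozen and integrate over the $\le c$ threshold noises. Split the privacy budget: allocate $\tfrac{\eps}{2}$ to the threshold noises and $\tfrac{\eps}{2}$ to the query noises (or, more precisely, use the ``$\sqrt{c}$'' composition that the choice $R \gtrsim \tfrac{\Delta}{\eps}\sqrt{c\log(1/\delta)}$ is designed for). The key structural observation is that the transcript $(a_1,\dots,a_k)$ has at most $c$ coordinates equal to $\top$ (the algorithm halts once $\rmcount$ reaches $c$), and each maximal ``run'' of $\bot$'s followed by a single $\top$ is governed by a single fresh threshold noise $\xi_j \sim \Lap(R)$; within such a run, the event $\{a_i=\bot$ for $i$ in the run, $a_\ell = \top$ at the end$\}$ can be written as $\{q_i(\calD)+\mathrm{Lap}(2R)_i < L + \xi_j \text{ for all } i \text{ in the run}\}\cap\{q_\ell(\calD)+\mathrm{Lap}(2R)_\ell \ge L+\xi_j\}$. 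Shifting $\xi_j \mapsto \xi_j + \Delta$ (using $2$-sensitivity absorbed by the factor-$2$ scale of the query noise, standard in SVT) changes the density of $\xi_j$ by at most $e^{\eps/(2\sqrt c \cdot \mathrm{const})}$ and simultaneously makes all the $\bot$-inequalities for $\calD'$ imply the corresponding ones for $\calD$, while the final $\top$-inequality costs an additional $e^{\eps/(2\sqrt c \cdot \mathrm{const})}$ factor from the $\mathrm{Lap}(2R)_\ell$ term. Multiplying these factors over the $\le c$ runs gives a total multiplicative factor $e^{\eps}$ once $R$ is set as in \eqref{eq:svt_params}; I would track the constants to confirm that $R = \tfrac{6\Delta}{\eps}\sqrt{c\log(5/\delta)}$ (together with the $\log(5/\delta)$ room) is the right bound. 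Concluding: the idealized mechanism is $(\eps,0)$-DP, hence by the TV reduction the bounded version is $(\eps,\delta)$-DP. The output-length claim ($k\in[T]$) is immediate from the loop condition $i\in[T]$ and the \textbf{Halt} on Line~\ref{line:halt}.

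**Main obstacle.** The delicate part is the constant-tracking in the $\sqrt{c}$-composition over threshold resets: unlike the textbook SVT where exactly one threshold noise is used, here up to $c$ fresh threshold noises appear, and naive composition would cost a factor $e^{c\eps/c} = e^\eps$ only if one is careful that each reset ``uses'' budget $\eps/c$ rather than $\eps/\sqrt c$. The reason we can afford scale $R \propto \tfrac{\Delta}{\eps}\sqrt{c\log(1/\delta)}$ rather than $R\propto \tfrac{\Delta c}{\eps}$ is an advanced-composition / Rényi argument over the $c$ independent Laplace shifts, each individually $\tfrac{\eps}{\Theta(\sqrt c)}$-DP after the shift, combined via the $\log(1/\delta)$ term; I would either invoke a black-box advanced composition theorem for pure-DP mechanisms or, cleanly, bound the privacy loss random variable's moments directly (each shifted-Laplace comparison contributes a bounded, mean-$O(\eps^2/c)$ term, and a Bernstein/Azuma bound over $c$ of them gives the $\eps$ bound with the stated $\sqrt{c\log(1/\delta)}$ dependence), paying the $\delta$ for the tail. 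Everything else — the TV reduction, the halting behavior, the per-run monotonicity of the $\bot$-events — is routine.
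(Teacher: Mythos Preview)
Your approach matches the paper's: reduce from bounded to unbounded Laplace via a total-variation bound (the paper likewise bounds TV by $\tfrac{\delta}{5}$ over the at most $2T$ noise draws), then invoke the standard Sparse Vector analysis on the unbounded version, and conclude via the calculation \eqref{eq:paydeltas}. The paper is much terser on the middle step, simply citing Theorems~3.20 and~3.23 of \cite{DworkRo14} to assert that the unbounded-Laplace SVT is $(\eps,\tfrac{\delta}{5})$-DP, whereas you unpack that analysis.

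One slip to fix: in your ``Core argument'' paragraph you claim the idealized mechanism is $(\eps,0)$-DP by ``multiplying these factors over the $\le c$ runs.'' But with $R \asymp \tfrac{\Delta}{\eps}\sqrt{c\log(1/\delta)}$, each run is only $\Theta\bigl(\eps/\sqrt{c\log(1/\delta)}\bigr)$-DP, and basic composition over $c$ runs gives privacy $\Theta\bigl(\eps\sqrt{c/\log(1/\delta)}\bigr)$, not $\eps$. You correctly diagnose this in your ``Main obstacle'' paragraph: advanced composition (Theorem~3.20 in \cite{DworkRo14}) is exactly what supplies the missing $\sqrt{\log(1/\delta)}$ factor, at the cost of a $\tfrac{\delta}{5}$ failure probability. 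So the unbounded SVT is $(\eps,\tfrac{\delta}{5})$-DP, not $(\eps,0)$-DP, and the final $\delta$ budget is $\tfrac{\delta}{5}$ (TV swap on $\calD$) $+\;\tfrac{\delta}{5}$ (advanced composition) $+\;e^\eps\cdot\tfrac{\delta}{5}$ (TV swap on $\calD'$) $\le \delta$, just as in the paper.
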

\begin{proof}
The proof is analogous to Lemma~\ref{lem:bounded_lap_mech}. Let $\alg$ denote SVT run with Laplace noise in place of bounded Laplace noise (i.e.\ $\tau = \infty$), and let $\balg$ denote SVT run with bounded Laplace noise. We first claim that $\alg$ is $(\eps, \frac \delta 5)$-DP, which is immediate from Theorem 3.23 and Theorem 3.20 in \cite{DworkRo14}.

Next, by a union bound on all of the $\le 2T$ random variables sampled, the total variation distance between $(\alg(\calD), \balg(\calD))$ for any dataset $\calD$ is bounded by $\frac \delta 5$. Then, for neighboring datasets $\calD, \calD'$ and some event $\event$ in the outcome space, repeating the calculation \eqref{eq:paydeltas},
\begin{align*}
\Pr\Brack{\balg(\calD) \in \event} &\le \Pr\Brack{\alg(\calD) \in \event} + \frac \delta 5 \\ 
&\le \exp(\eps)\Pr\Brack{\alg(\calD') \in \event} + \frac \delta 5 + \frac \delta 5 \le \exp(\eps)\Pr\Brack{\balg(\calD') \in \event} + \delta.
\end{align*}
\end{proof}

\subsection{Algorithm statement and analysis}\label{ssec:algo_smooth}

In this section, we present the full details of our algorithm (see Algorithm~\ref{alg:phased-sgd}) and prove its corresponding guarantees, separating out the privacy analysis and utility analysis.

%\frac{8\sqrt{2c\log(2/\delta)}}{\eps}

\begin{algorithm2e}[H]
\caption{$\AlgSmooth(\calD,x_0,\eta,c,\eps,\delta)$}
\DontPrintSemicolon
\label{alg:phased-sgd}
\codeInput Dataset $\calD  \in \calS^n$, initial point $x_0 \in \xset$, step size $\eta \in \R_{>0}$, parameters $C,c, \omega \in \R_{>0}$, privacy parameters $(\eps,\delta) \in \R^2_{>0}$\;
$I \gets \lfloor\log_2 n\rfloor$, $n \gets 2^I$\;\label{line:adjust_n_smooth}
\For{$i \in [I]$}{\label{line:phase_start}
$n_i \gets \frac n {2^i}$, $\eta_i \gets \frac \eta {4^i}$, $\omega_i \gets \omega \cdot 6C\eta_i\beta$\;
$\hat{C}\leftarrow C+\BLap(\omega_i, \omega_i\log(\frac{30n_i}{\delta}))$, $\hat{c}_i\leftarrow c+\BLap(\frac 3 \eps, \frac c 2)$, $\rmcount \gets 0$\;
$x_{i,1} \gets x_{i-1}$\;
\For{$j \in [n_i]$}{
$s_{i, j} \gets (\sum_{i' \in [i]} n_{i'} + j)^{\text{th}}$ element of $\calD$\; 
$\nu_{i,j} \sim \BLap(2\omega_i, 2\omega_i\log(\frac{30n_i}{\delta}))$\;
\If {$\|\nabla f(x_{i,j};s_{i,j})\|+\nu_{i,j}\ge \hat{C}$}{
$\rmcount \gets \rmcount+1$\;
$g_{i,j} \gets \Pi_{C}(\nabla f(x_{i,j};s_{i,j}))$ \;
$\hat{C}\gets C+\BLap(\omega_i, \omega_i\log(\frac{30n_i}{\delta}))$\;
}
\Else{
$g_{i,j} \gets \nabla f(x_{i,j};s_{i,j})$\;
}
\If{$\rmcount\ge \hat c_i$}{\label{line:terminate_smooth}
\codeReturn $\perp$\;
}
$x_{i,j+1} \gets \Pi_{\xset}(x_{i,j}-\eta_ig_{i,j})$ \;
}
$\wbx_i \gets \frac{1}{n_i}\sum_{j\in[n_i]}x_{i,j}$ \;
$x_i\leftarrow \wbx_i+\zeta_i$, where $\zeta_i\sim\Nor(0,\sigma_i^2\id_d)$ with $\sigma_i=\frac{30C\eta_i\sqrt{\log(3/\delta)}}{\eps}$\;
}\label{line:phase_end}
\codeReturn $x_I$\label{line:noterminate_smooth}
\end{algorithm2e}

The following theorem summarizes the guarantees of Algorithm~\ref{alg:phased-sgd}.
\begin{theorem}\label{thm:smooth}
Consider an instance of $k$-heavy-tailed private SCO, following notation in Definition~\ref{def:psco}, and let $x^\star \defeq \argmin_{x \in \xset} \Fpop(x)$, and $\eps, \delta \in (0, 1)$. 
Algorithm~\ref{alg:phased-sgd} run with parameters
\begin{gather*}
\eta \gets \min\Par{\sqrt{\frac 4 n} \cdot \frac{D}{G_2},\; \frac{DI}{G_k n}\cdot\Par{\frac{n^2\eps^2}{14400d\log^2(\frac{15n}{\delta})}}^{\frac{k - 1}{2k}}},\; \\
C \gets 2\Par{\frac{G_k^k DI n\eps^2}{14400d\eta\log^2(\frac{15n}{\delta})}}^{\frac 1 {k + 1}},\;
c \gets \frac{240\sqrt d \log(\frac{15n}{\delta})}{\eps},\; \omega \gets \frac{18}{\eps}\sqrt{2c\log\Par{\frac{15}{\delta}}},
\end{gather*}
\iffalse
\begin{align*}
    \eta=\min\{\frac{D\eps}{30C\sqrt{d\log(n/\delta)}},\frac{D}{G_2\sqrt{n}}\},\; c=n\frac{4^kG^k_k}{C^k}\log(n),\\
    C=4G_k\min\{n^{\frac{1}{2(k-1)}}, (\frac{n\eps}{\sqrt{d}\log(n/\delta)})^{1/k}\}, \;\omega=\frac{18}{\eps}\sqrt{2c\log\Par{\frac{15}{\delta}}}
\end{align*}
\fi
is $(\eps,\delta)$-DP and
outputs $x_I$ that satisfies, for a universal constant $\Csmooth$,
\begin{align*}
    \E\Brack{F_{\calP}(x_k)-F_{\calP}(x^\star)}\le \Csmooth\Par{\frac{G_2 D}{\sqrt n} + G_k D \cdot \Par{\frac{\sqrt{d\log^3(\frac n \delta)}}{n\eps}}^{1 - \frac 1 k}},
\end{align*}
assuming $f(\cdot; s)$ is $\beta$-smooth for all $s \in \calS$, where
\begin{equation}\label{eq:beta_bound_ours}
\begin{aligned}\beta &\le \frac{\eps^{1.5}}{24000\eta\sqrt{d}\log^2(\frac{30n}{\delta})} \\
&= \Theta\Par{\max\Par{\frac{G_2}{D} \cdot \frac{\sqrt n \eps^{1.5}}{\sqrt d \log^2(\frac n \delta)},\; \frac{G_k}{D} \cdot \frac{\eps^{1.5} n}{\sqrt d \log(n)\log^2(\frac n \delta)} \cdot \Par{\frac{d\log^2(\frac{n}{\delta})}{n^2\eps^2}}^{\frac{k - 1}{2k}} }}.\end{aligned}
\end{equation}
\end{theorem}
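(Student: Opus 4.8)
The statement has two halves — that Algorithm~\ref{alg:phased-sgd} is $(\eps,\delta)$-DP, and that on this instance it attains the displayed rate — and I would prove them separately, both reusing the phased/localization template of \cite{FeldmanKoTa20} but around a clipped, $\SVT$-guarded inner loop. The first structural observation I would use for privacy is that the $I$ phases read \emph{disjoint} length-$n_i$ blocks of $\calD$, so a neighboring pair $\calD\sim\calD'$ differs inside exactly one phase $i^\star$, and everything in phases $i\ne i^\star$ is postprocessing of data that agrees for both (phases before $i^\star$ are literally identical; phases after it postprocess only $x_{i^\star}$); it therefore suffices to show a single phase is $(\eps,\delta)$-DP. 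I would split that phase's budget between (i) the transcript of clip/no-clip flags together with the termination flag, and (ii) the conditional release of $x_{i^\star}$ (or $\perp$). Part (i) is exactly an instance of $\SVT$ (Algorithm~\ref{alg:svt}) with queries $q_j(\cdot)=\norm{\nabla f(x_{i^\star,j};\cdot)}$, per-query noise $\BLap(2\omega_{i^\star},\cdot)$, threshold noise $\BLap(\omega_{i^\star},\cdot)$ on $\hat C$, and count threshold $c$ privatized by the $\BLap(\tfrac3\eps,\tfrac c2)$ on $\hat c_{i^\star}$; away from the single step $j_0$ that uses the differing sample these queries are $O(C\eta_{i^\star}\beta)$-sensitive — because the two runs' iterates have drifted only by $O(\eta_{i^\star}C)$ (shown next) and $\nabla f(\cdot;s)$ is $\beta$-smooth — which matches the scale $\omega_{i^\star}/\omega$ of the $\SVT$ noise, while step $j_0$ shifts the count by at most $1$, absorbed by $\hat c_{i^\star}$'s noise. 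Checking that the chosen $\omega,c$ meet \eqref{eq:svt_params} (which is where the numerical constants are calibrated), Lemma~\ref{lemma:svt} makes (i) $(\tfrac\eps2,\tfrac\delta2)$-DP.

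\textbf{Stability, hence part (ii).} For \emph{any} fixed clip pattern consistent with a transcript in which $\SVT$ halts at count $\le\tfrac{3c}2$ (so the pattern records $O(c)$ clips), conditioning on producing that transcript makes $\wbx_{i^\star}$ a \emph{deterministic} function of the data: non-clip rounds then deterministically use gradients of norm $\le\hat C-\nu\le C+O(\omega_{i^\star}\log\tfrac n\delta)$ — this is precisely where the bounded-Laplace \emph{truncation} is needed — and clip rounds use gradients of norm $\le C$. Running this deterministic recursion on a neighboring pair $\calD,\calD'$: the iterates agree until step $j_0$, separate there by $\le 2\eta_{i^\star}(C+O(\omega_{i^\star}\log\tfrac n\delta))=O(\eta_{i^\star}C)$, and thereafter each non-clip round is non-expansive (a projected $\beta$-smooth gradient step with $\eta_{i^\star}\le\tfrac2\beta$, Fact~\ref{fact:contract_smooth}) while each clip round expands the gap by a factor at most $1+\eta_{i^\star}\beta$ (since $\Pi_C$ is $1$-Lipschitz and $\nabla f(\cdot;s)$ is $\beta$-Lipschitz). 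Over $O(c)$ clips this gives $\norm{\wbx_{i^\star}(\calD)-\wbx_{i^\star}(\calD')}\le O(\eta_{i^\star}C)(1+\eta_{i^\star}\beta)^{O(c)}=O(\eta_{i^\star}C)$, because the assumed bound \eqref{eq:beta_bound_ours} on $\beta$ is exactly what forces $\eta_{i^\star}\beta c=O(1)$. The Gaussian step with $\sigma_{i^\star}=\Theta(C\eta_{i^\star}\eps^{-1}\sqrt{\log\tfrac1\delta})$ is then $(\tfrac\eps2,\tfrac\delta2)$-DP on $\wbx_{i^\star}$ for every fixed transcript, so adaptively composing (i) and (ii) makes the phase, hence the algorithm, $(\eps,\delta)$-DP. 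I expect this interleaving of the $\SVT$ accounting with the clipped-step stability bound — the latter inherently delicate, since clipped contractive steps genuinely need not be contractive, cf.\ Appendix~\ref{sec:noncontract} — to be the main obstacle: every truncated-noise term is placed so that the two trajectories coincide off a $\delta$-probability event while keeping all controlling quantities of size $O(\eta_iC)$.

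\textbf{Utility.} First I would condition on the event that no phase returns $\perp$: since $\calD\sim\calP^n$ i.i.d.\ and $s_{i,j}$ is independent of $x_{i,j}$, Markov's inequality on $\E\norm{\nabla f(x_{i,j};s_{i,j})}^k\le G_k^k$ bounds the clipping probability at any step by $O((G_k/C)^k)$; the parameter choices make $\sum_i n_i(G_k/C)^k\ll c$ (in fact the second branch of $\eta$ is defined by this balance), so a concentration bound together with $\hat c_i\ge\tfrac c2$ shows no phase halts with probability $\ge1-\delta$, and the failure event contributes at most $\delta\cdot G_2D$ to the expected error. On the good event, phase $i$ is one-pass clipped SGD from $x_{i-1}$ with step size $\eta_i$, $n_i$ fresh samples, and a random clip level $\hat C\ge\tfrac C2$ (again by bounded-Laplace truncation), followed by a perturbation $\zeta_i\sim\Nor(\0_d,\sigma_i^2\id_d)$, so Corollary~\ref{cor:diff_threshold} yields
\[\E\Brack{\Fpop(\wbx_i)-\Fpop(x^\star)}\le\frac{\E\norm{x_{i-1}-x^\star}^2}{2\eta_in_i}+2\eta_iG_2^2+\frac{2^{k-1}G_k^kD}{(k-1)C^{k-1}},\]
and a standard projected-SGD iterate bound controls $\E\norm{\wbx_i-x^\star}^2$ (adding $\sigma_i^2d$ for $\zeta_i$), feeding these into the localization recursion of \cite{FeldmanKoTa20}. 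Because the schedules $\eta_i=\eta/4^i$, $n_i=n/2^i$ (and hence $\sigma_i$) decay geometrically, the optimization, clipping-bias, and Gaussian-noise contributions each sum geometrically: the optimization part is $O(D^2/(\eta n)+\eta G_2^2)=O(G_2D/\sqrt n)$ after optimizing the first branch of $\eta$; the Gaussian part, once $C$ and the second branch of $\eta$ are plugged in (these values are exactly those balancing clipping bias against Gaussian noise), contributes a term of scale $G_kD(\sqrt{d\log^3(n/\delta)}/(n\eps))^{1-1/k}$; and the clipping-bias part $\sum_i G_k^kD/C^{k-1}$ is of the same order. Combining these with the reconciliation between the high-probability good event and the stated expectation bound gives the claim.
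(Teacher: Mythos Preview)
Your privacy argument is essentially the paper's: you correctly isolate the single phase $i^\star$ containing the differing sample, privatize the clip/no-clip transcript via $\SVT$ (with the special index $j_0$ handled by the bounded-Laplace noise on $\hat c_{i^\star}$), and then, conditional on a fixed transcript, invoke the Gaussian mechanism on $\wbx_{i^\star}$ once you know $\norm{\wbx_{i^\star}(\calD)-\wbx_{i^\star}(\calD')}=O(C\eta_{i^\star})$. Your drift bound via the multiplicative factor $(1+\eta_{i^\star}\beta)^{O(c)}$ is a slightly cruder variant of the paper's additive accounting (Lemma~\ref{lem:bound_drift}, proved via an operator framework in Appendix~\ref{sec:apdx-smooth}), but both reduce to the same condition $\eta_i\beta c=O(1)$, which is exactly what \eqref{eq:beta_bound_ours} enforces.

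The utility argument, however, has a real gap in the choice of comparison point. Your displayed inequality applies Corollary~\ref{cor:diff_threshold} with $u=x^\star$, yielding
\[
\E\Brack{\Fpop(\wbx_i)-\Fpop(x^\star)}\le\frac{\E\norm{x_{i-1}-x^\star}^2}{2\eta_in_i}+\cdots,
\]
and you then propose to ``control $\E\norm{\wbx_i-x^\star}^2$ via a standard projected-SGD iterate bound'' and recurse. But without strong convexity, no such bound makes $\E\norm{x_i-x^\star}^2$ \emph{shrink}: the best you get is $\E\norm{x_i-x^\star}^2\le\E\norm{x_{i-1}-x^\star}^2+O(\eta_i^2n_iG_2^2+\sigma_i^2d+\text{bias})$, so the distance stays $\Theta(D^2)$ throughout. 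Since $\eta_In_I=\eta n/8^I$ is tiny, the final term $D^2/(2\eta_In_I)$ blows up and you do not recover the claimed rate. The paper (following \cite{FeldmanKoTa20}, Section~4.1) instead telescopes
\[
\Fpop(\wbx_I)-\Fpop(x^\star)=\sum_{i\in[I]}\Par{\Fpop(\wbx_i)-\Fpop(\wbx_{i-1})}
\]
with $\wbx_0\defeq x^\star$, and applies the per-phase bound (Lemma~\ref{lm:phased-sgd-utility}) with $u=\wbx_{i-1}$. The point is that $\norm{x_{i-1}-\wbx_{i-1}}=\norm{\zeta_{i-1}}$ exactly, so $\E\norm{x_{i-1}-\wbx_{i-1}}^2=\sigma_{i-1}^2d$, which \emph{does} decay geometrically by construction; this is what makes the sum converge. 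Your closing sentence (``the contributions each sum geometrically'') is the right intuition, but it only becomes true once you switch the comparison point from $x^\star$ to $\wbx_{i-1}$ --- no iterate-distance recursion is needed at all.
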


We now proceed to prove Theorem~\ref{thm:smooth}. 

\paragraph{Privacy proof overview.} We first overview the structure of our privacy proof. Consider two neighboring datasets $\calD,\calD'$ that differ on a single sample $s_{i,j_0} \neq s'_{i,j_0}$. 
The core argument used to prove privacy is controlling the total number of times when gradients are clipped, so we introduce the variable ``$\rmcount$.''
Note that we have $\|x_{i,j_0+1}-x_{i,j_0+1}'\| = O(C\eta)$ due to the clip operation.
If no clip ever happened afterward, then we know $\|x_{i,n_i}-x_{i,n_i}'\|\le \|x_{i,j_0+1}-x_{i,j_0+1}'\| = O(C\eta)$ due to our smoothness assumption (see Fact~\ref{fact:contract_smooth}), which means the algorithm is private.
When $\rmcount$ is not too large, we can still bound the sensitivity between $\|x_{i,n_i}-x_{i,n_i}'\|$ by $O(C\eta)$. 
However, when the value of $\rmcount$ is larger, there is a risk that the sensitivity of $x_{i,n_i}$  is not bounded as before, and hence we halt the algorithm when $\rmcount$ exceeds some appropriate cutoff point $\hat{c}_i$.

One subtle difference between our algorithm and standard uses of SVT is that we add Laplace noise to the cutoff point $c$ to obtain a randomized cutoff $\hat{c}_i$.
This is because the sensitivity of the $\rmcount$ increment at the $j_0^{\text{th}}$ iteration of phase $i$ is bounded by one, even though $\|\nabla f(x_{i,j_0};s_{i,j_0})\|-\|\nabla f(x'_{i,j_0};s_{i,j_0}')\|$ can be arbitrarily large. 
The guarantees of the bounded Laplace mechanism imply that the noise added in $\hat c_i$ hence suffices to privatize $\rmcount$.

In summary, we can control the sensitivity between $\|x_{i,j}-x_{i,j}'\|$ for all $j$ due to the termination condition in Line~\ref{line:terminate_smooth} and our use of bounded Laplace noise, and hence can control the sensitivity of the query for $\|\nabla f(x_{i,j};s_{i,j})\|-\|\nabla f(x_{i,j}';s_{i,j}')\|$ for all $j\neq j_0$.
By adding Laplace noise on the cutoff $c$, we handle the issue of the sensitivity of the $j_0^{\text{th}}$ query $\norm{\nabla f(x_{i, j_0}; s_{i, j_0})}$ being unbounded.
If the algorithm succeeds and returns $x_k$, we know the sensitivity $\|x_{i,n_i}-x_{i,n_i}'\|$ is $O(C\eta_i)$ and the privacy guarantee follows from the Gaussian mechanism.
If the algorithm fails and outputs $\perp$, the privacy guarantee follows from the bounded Laplace noise on the cutoff point and the guarantees of SVT.

\paragraph{Privacy proof.} We now provide our formal privacy analysis following this overview. To fix notation in the remainder of the privacy proof, we consider running Algorithm~\ref{alg:phased-sgd} on two neighboring datasets $\calD,\calD'$ that differ on a single sample $s_{i,j_0} \neq s'_{i,j_0}$, for some $i \in [I]$. By standard postprocessing properties of differential privacy, it suffices to argue that the $i^{\text{th}}$ phase (i.e.\ the run of the loop in Lines~\ref{line:phase_start} to~\ref{line:phase_end} corresponding to this value of $i$) is private, so we fix $i \in [I]$ in the following discussion.

We let $\{x_{i,j}\}_{j \in [n_i]}$ and $\{x'_{i,j}\}_{j \in [n_i]}$ be the iterates of the $i^{\text{th}}$ phase of Algorithm~\ref{alg:phased-sgd} using $\calD$ and $\calD'$,
and we let $Y_{i,j}$ and $Y'_{i,j}$ be the respective $0$-$1$ indicator variables that $\rmcount$ increases by $1$ in iteration $j$. We also let $\rmcount_j$ and $\rmcount_j'$ denote the values of $\rmcount$ at the end of the $j^{\text{th}}$ iteration, and abusing notation we let $\hc_i$, $\hc'_i$ be the values of $\hc_i$ in the $i^{\text{th}}$ phase when using $\calD$ or $\calD'$ respectively. Finally, we denote
$\wbx_i \defeq \frac{1}{n_i}\sum_{j\in[n_i]}x_{i,j}$ and let $\wbx_i'$ denote the average iterate using $\calD'$ similarly.

We first bound the sensitivity between the iterates $\{x_{i,j}\}_{j \in [n_i]}$ and $\{x_{i,j}'\}_{j \in [n_i]}$ in the following lemma, assuming $\rmcount_j$ and $\rmcount_{j'}$ are bounded. The proof is deferred to~\Cref{sec:apdx-smooth}.

\begin{lemma}
\label{lem:bound_drift}
Let $t \in [n_i]$, and suppose that $192\eta_i\beta c \le 1$ and $C \ge 8\omega_i\log(\frac{30n_i}{\delta})$.
If $\rmcount_t<\hat c_i$, $\rmcount_t'<\hat c'_i$, and $Y_{i,j}=Y_{i,j}'$ for all $j< t$ with $j\neq j_0$,
then
\begin{align*}
    \|x_{i,t}-x_{i,t}'\|\le 6C\eta_i.
\end{align*}
\end{lemma}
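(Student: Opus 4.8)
The plan is to track how the drift $\Delta_j \defeq \norm{x_{i,j} - x_{i,j}'}$ between the two runs evolves across phase $i$, handling the iterations $j < t$ in three groups according to their position relative to $j_0$. First I would dispose of the case $j_0 \ge t$: then all samples $s_{i,j}$ with $j < t$ coincide across the two runs, and together with the hypothesis $Y_{i,j} = Y_{i,j}'$, a short induction shows $\Delta_j = 0$ for all $j \le t$. (Base case: $x_{i,1} = x_{i-1} = x_{i,1}'$, since phases $1, \dots, i-1$ run on identical data; inductive step: if $x_{i,j} = x_{i,j}'$, $s_{i,j} = s_{i,j}'$, and both runs clip or both do not clip at iteration $j$, then $g_{i,j} = g_{i,j}'$, hence $x_{i,j+1} = x_{i,j+1}'$.) The same induction gives $\Delta_j = 0$ for all $j \le j_0$ in general, so from now on I would assume $j_0 \le t - 1$.

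Next I would bound the single ``bad step'' at $j_0$. Since $\Pi_\xset$ is $1$-Lipschitz and $x_{i,j_0} = x_{i,j_0}'$, we get $\Delta_{j_0 + 1} \le \eta_i\Par{\norm{g_{i,j_0}} + \norm{g_{i,j_0}'}}$, and I claim each surrogate here has norm at most $2C$. If it is clipped this is immediate. If it is not clipped, then $\norm{\nabla f(x_{i,j_0}; s_{i,j_0})} + \nu_{i,j_0} < \hat C$, and the truncation bounds $\nu_{i,j_0} \ge -2\omega_i\log(\frac{30n_i}{\delta})$ and $\hat C \le C + \omega_i\log(\frac{30n_i}{\delta})$, together with the hypothesis $C \ge 8\omega_i\log(\frac{30n_i}{\delta})$, give $\norm{\nabla f(x_{i,j_0}; s_{i,j_0})} < C + 3\omega_i\log(\frac{30n_i}{\delta}) \le \frac{11}{8}C$. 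Hence $\Delta_{j_0 + 1} \le 4C\eta_i$.

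The core of the argument is propagating this initial gap through the iterations $j_0 < j < t$, along which the samples agree and (by hypothesis) the clipping decisions are synchronized, $Y_{i,j} = Y_{i,j}'$. When neither run clips at $j$, I would invoke Fact~\ref{fact:contract_smooth} for $f(\cdot; s_{i,j})$ — valid because $\eta_i \le \frac 2 \beta$, a consequence of $192\eta_i\beta c \le 1$ — together with $1$-Lipschitzness of $\Pi_\xset$, to get $\Delta_{j+1} \le \Delta_j$. When both runs clip at $j$, the key point is that $\Pi_C$ is also $1$-Lipschitz, so $\norm{g_{i,j} - g_{i,j}'} = \norm{\Pi_C(\nabla f(x_{i,j}; s_{i,j})) - \Pi_C(\nabla f(x_{i,j}'; s_{i,j}))} \le \beta\Delta_j$ by smoothness, and hence $\Delta_{j+1} \le (1 + \eta_i\beta)\Delta_j$. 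Composing these over $j_0 < j < t$ yields $\Delta_t \le (1 + \eta_i\beta)^M \cdot 4C\eta_i$, where $M$ denotes the number of iterations in $\{j_0+1, \dots, t-1\}$ at which both runs clip.

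To finish, I would bound $M$: every such iteration is in particular a clipping iteration of the $\calD$-run, so $M \le \rmcount_t$; by hypothesis $\rmcount_t < \hat c_i$, and $\hat c_i \le \frac{3c}{2}$ because the bounded-Laplace noise added to $c$ in forming $\hat c_i$ is supported on $[-\frac c 2, \frac c 2]$, so $M < \frac{3c}{2}$. Then $(1 + \eta_i\beta)^M \le \exp(\frac 3 2 c\eta_i\beta) \le \exp(\frac 1 {128}) \le \frac 3 2$ using $192\eta_i\beta c \le 1$, so $\Delta_t \le 6C\eta_i$, as claimed. The step I expect to be the main obstacle — and the one that genuinely uses $\beta$-smoothness — is the both-clip case: a naive bound $\norm{g_{i,j}}, \norm{g_{i,j}'} \le C$ would add $2C\eta_i$ of drift per clip, hence $\Theta(cC\eta_i)$ in total, which is useless; the resolution is that synchronized clipping of a smooth loss at two nearby iterates is still a $(1+\eta_i\beta)$-Lipschitz map, so once $\eta_i\beta c$ is small the clips cost only a constant multiplicative factor, and the SVT-style cutoff $\hat c_i$ is precisely what bounds the number of clips by $O(c)$.
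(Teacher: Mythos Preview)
Your proof is correct, and the overall skeleton matches the paper's: classify each iteration $j<t$ as the single differing step $j_0$, a synchronized no-clip step (contractive by Fact~\ref{fact:contract_smooth}), or a synchronized clip step whose total number is capped via $\hat c_i$, and track how each type moves the drift $\Delta_j$. The one substantive difference is in the both-clip case. You invoke $1$-Lipschitzness of $\Pi_C$ directly to get $\norm{g_{i,j}-g'_{i,j}}\le\beta\Delta_j$ and hence the multiplicative bound $\Delta_{j+1}\le(1+\eta_i\beta)\Delta_j$, which you then exponentiate over the at most $M<\tfrac{3c}{2}$ clipped steps. The paper instead passes through an auxiliary abstraction (Lemma~\ref{lm:drift_operation}) and bounds the both-clip step additively as $\Delta_{j+1}\le\Delta_j+12\eta_i^2 C\beta$, valid only once $\Delta_j\le 6C\eta_i$, by adding and subtracting the unclipped gradients and applying Fact~\ref{fact:contract_smooth} and Fact~\ref{fact:truncation_contraction} separately. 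Your route is shorter and sidesteps the auxiliary lemma; the paper's abstraction is more modular but for this particular statement your argument is the more elementary one.
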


\noindent
Using this bound on the sensitivity, we are now ready to prove privacy of the algorithm.
% Recall that we assume the neighboring datasets $\calD$ and $\calD'$ differ $z_{i,t}$
\begin{lemma}
\label{lm:privacy_guarantee}
%If $192 \eta_i \beta c \le 1$, $C \ge 8\omega_i\log(\frac{16n_i}{\delta})$, and $c \ge \frac{4\log(\frac {32} \delta)}{\eps}$, 
Algorithm~\ref{alg:phased-sgd} is $(\eps,\delta)$-DP if it is run with parameters satisfying
\[C \ge 8\omega_i\log\Par{\frac{30n_i}{\delta}},\; c \ge \frac{6}{\eps}\log\Par{\frac {12} \delta},\; \omega \ge \frac{18}{\eps}\sqrt{2c\log\Par{\frac{15}{\delta}}},\;192\eta_i\beta c \le 1. \]
\end{lemma}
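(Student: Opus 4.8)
The plan is to prove privacy phase by phase. Fix neighboring datasets $\calD,\calD'$ differing only in the sample $s_{i,j_0}$ used at iteration $j_0$ of phase $i$. For every phase $i'\neq i$, that phase reads only samples that agree in $\calD$ and $\calD'$, so its output is identically distributed, and the eventual output ($x_I$, or the decision to return $\perp$) is a post-processing of the output of phase $i$; hence by post-processing it suffices to show that the output of phase $i$, viewed as a randomized function of the common input $x_{i-1}$, is $(\eps,\delta)$-indistinguishable. I would split the budget of phase $i$ into three $(\eps/3,\delta/3)$ pieces handling, respectively: (a) the transcript of clip/no-clip decisions $\{a_{i,j}\}$ against trajectory-induced query perturbations, via the SVT guarantee (\Cref{lemma:svt}); (b) the termination event $\{\rmcount\ge \hat c_i\}$ against the count shift caused by the differing sample, via the bounded Laplace mechanism (\Cref{lem:bounded_lap_mech}) applied to the cutoff; and (c) the published iterate $x_i=\wbx_i+\zeta_i$ (output only on non-termination), via the Gaussian mechanism.

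The deterministic engine is \Cref{lem:bound_drift}, whose two hypotheses $192\eta_i\beta c\le 1$ and $C\ge 8\omega_i\log(\frac{30n_i}{\delta})$ are among our assumed parameter conditions. It gives: under any coupling of the two runs (sharing all injected noise) in which the counts stay below their cutoffs and the clip indicators agree at every $j<t$ with $j\neq j_0$, one has $\norm{x_{i,t}-x_{i,t}'}\le 6C\eta_i$ for all $t\in[n_i]$. Two consequences drive the rest: first, for $j\neq j_0$ we have $s_{i,j}=s_{i,j}'$, so by $\beta$-smoothness the SVT query $q_j(\cdot)\defeq\norm{\nabla f(x_{i,j};s_{i,j})}$ obeys $|q_j(\calD)-q_j(\calD')|\le \beta\cdot 6C\eta_i=6\beta C\eta_i=:\Delta$, and note $\Delta=\omega_i/\omega$; second, the averaged iterate has $\ell_2$-sensitivity $\norm{\wbx_i-\wbx_i'}\le \frac 1{n_i}\sum_{j}\norm{x_{i,j}-x_{i,j}'}\le 6C\eta_i$.

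For pieces (a) and (b), I would read the loop of phase $i$ as the SVT mechanism of \Cref{alg:svt} with threshold $C$, scale parameter $\omega_i$, truncation $\omega_i\log(\frac{30n_i}{\delta})$, and count cutoff $c$ — except with the cutoff replaced by $\hat c_i=c+\BLap(\frac 3\eps,\frac c2)$, which lies in $[\frac c2, \frac{3c}2]$. The $n_i-1$ queries with index $\neq j_0$ are $\Delta$-sensitive by the previous paragraph, and substituting $R=\omega_i$, $\Delta=\omega_i/\omega$, and $\omega\ge\frac{18}\eps\sqrt{2c\log(\frac{15}\delta)}$ one checks that the conditions \eqref{eq:svt_params} of \Cref{lemma:svt} hold at privacy level $(\eps/3,\delta/3)$; this privatizes the decision transcript against every perturbation that the differing sample induces through the trajectory. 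The query at index $j_0$ has unbounded sensitivity, but after coupling the two runs by the standard SVT threshold-shift — under which the decisions at all $j\neq j_0$ agree, which is precisely the hypothesis \Cref{lem:bound_drift} needs — the only residual discrepancy is that $\rmcount$ and $\rmcount'$ differ by at most one from step $j_0$ on, so the termination event is a threshold comparison of a $\pm1$-perturbed count against a Laplace-randomized cutoff, i.e.\ the bounded Laplace mechanism on a $1$-sensitive statistic; this is $(\eps/3,\delta/3)$-DP since $\frac c2\ge\frac 3\eps\log(\frac{12}\delta)$ by assumption. Composing gives $(2\eps/3,2\delta/3)$-indistinguishability of the transcript together with the termination bit.

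Finally, on the event that phase $i$ runs to completion, $\wbx_i$ is, given the transcript, a deterministic function of the data of $\ell_2$-sensitivity $\le 6C\eta_i$, so $x_i=\wbx_i+\zeta_i$ with $\zeta_i\sim\Nor(0,\sigma_i^2\id_d)$ and $\sigma_i=\frac{30C\eta_i\sqrt{\log(3/\delta)}}\eps$ is $\frac{(6C\eta_i)^2}{2\sigma_i^2}=\frac{\eps^2}{50\log(3/\delta)}$-CDP by the Gaussian mechanism (\Cref{lem:cdp_facts}), which converts, through the RDP-to-DP bound of \Cref{lem:cdp_facts} and optimizing the RDP order, to $(\eps/3,\delta/3)$-DP for $\eps,\delta\in(0,1)$. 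Adaptively composing this with the $(2\eps/3,2\delta/3)$ guarantee from the previous step yields $(\eps,\delta)$-DP for phase $i$, hence for \Cref{alg:phased-sgd}. I expect the main obstacle to be making the coupling in pieces (a)/(b) fully rigorous: one must resolve the apparent circularity that \Cref{lem:bound_drift} presupposes agreement of the clip decisions at $j\neq j_0$ while those decisions are themselves the SVT output (resolved by performing the threshold-shift coupling first, under which that agreement holds by construction), and must track the bookkeeping from the extra threshold re-randomization that a flipped decision at $j_0$ triggers, which folds into the at most $c$ above-threshold events the SVT analysis already charges for. Using bounded rather than ordinary Laplace noise throughout is what keeps the total loss from all truncations at $O(\delta)$.
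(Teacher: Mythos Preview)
Your proposal is correct and follows essentially the same three-piece decomposition as the paper: SVT (\Cref{lemma:svt}) for the clip-decision transcript on the $\Delta$-sensitive queries $j\neq j_0$, the bounded Laplace mechanism (\Cref{lem:bounded_lap_mech}) on the randomized cutoff $\hat c_i$ to absorb the $\pm 1$ count shift from the unbounded-sensitivity $j_0$-th query, and the Gaussian mechanism on $\wbx_i$ via the $6C\eta_i$ drift bound from \Cref{lem:bound_drift}. The paper resolves the circularity you flag by a specific conditioning order---it first releases the scalar $Y_{i,j_0}-\hat c_i$ (which is $(\eps/3,\delta/3)$-indistinguishable by bounded Laplace on a $1$-sensitive statistic), and then, conditioned on that value, treats the remaining transcript $W_{[n_i]\setminus[j_0]}$ as an SVT instance with fixed count threshold $\hat c_i - Y_{i,j_0} - [W]_{[j_0-1]}\le 2c$ and $\Delta$-sensitive queries---which is exactly the idea your sketch points to, made precise without needing a pointwise coupling.
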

\begin{proof}
Recall our assumption that $\calD$ and $\calD'$ only differ in $s_{i, j_0}$, the $j_0^{\text{th}}$ sample used in the $i^{\text{th}}$ phase of the algorithm. The privacy of all phases of the algorithm other than phase $i$ is immediate from postprocessing properties of DP, so it suffices to argue that phase $i$ is $(\eps, \delta)$-DP. Note also that the conditions of Lemma~\ref{lem:bound_drift} are met after reparameterizing $\delta \gets \frac \delta 4$. We split our privacy argument into two cases, depending on whether the algorithm terminates on Line~\ref{line:terminate_smooth} or Line~\ref{line:noterminate_smooth}.

\textit{Termination on Line~\ref{line:terminate_smooth}.} We begin with the case where the algorithm outputs $\perp$. We introduce some simplifying notation. For iterations $S \subseteq [n_i]$, define $W_S \defeq \Brace{Y_{i, j}}_{j \in S}$ to be the $0$-$1$ indicator variables for whether $\rmcount$ incremented on iterations $j \in S$ (when run on $\calD$), and define $[W]_S \defeq \sum_{j \in S} Y_{i, j}$ to be their sum. Similarly, define $W'_S$ and $[W']_S$ for when the algorithm is run on $\calD'$. Observe that the algorithm outputs $\perp$ iff the following event occurs:
\[Y_{i, j_0} + [W]_{[n_i] \setminus \{j_0\}} \ge \hat{c}_i \iff \Par{Y_{i, j_0} - \hat{c}_i} + [W]_{[n_i] \setminus [j_0]} \ge -[W]_{[j_0 - 1]}.\]
The right-hand side $-[W]_{[j_0 - 1]}$ is independent of whether the dataset used was $\calD$ or $\calD'$, so it suffices to argue about the privacy loss of the random variables $Y_{i, j_0} - \hat{c}_i$ and $W_{[n_i] \setminus [j_0]}$ as a function of the dataset used. First, $Y_{i, j_0} - c$ is clearly a $1$-sensitive statistic, so Lemma~\ref{lem:bounded_lap_mech} implies $Y_{i, j_0} - \hat{c}_i$ is $(\frac \eps 3, \frac \delta 3)$-indistinguishable as a function of the dataset used.
Next, conditioning on the value of $Y_{i, j_0} - \hat{c}_i$, the random variable $W_{[n_i] \setminus [j_0]}$ is an instance of Algorithm~\ref{alg:svt} run with a fixed threshold $\hat{c}_i - Y_{i, j_0} - [W]_{[j_0 - 1]} \le 2c$, where we rename the output variables $\{\bot, \top\}$ to $\{0, 1\}$. Moreover, Lemma~\ref{lem:bound_drift} and smoothness of each sample function implies that the sensitivity of each query $\norm{\nabla f(\cdot; s_{i, j})}$ is bounded by $\Delta \defeq 6C\eta_i \beta$.
Therefore, Lemma~\ref{lemma:svt} shows that $W_{[n_i] \setminus [j_0]}$ is $(\frac \eps 3, \frac \delta 3)$-indistinguishable, where we note that we adjusted constants appropriately in $\omega$ and the failure probabilities everywhere. By basic composition of DP, this implies $Y_{i, j_0} - \hat{c}_i + [W]_{[n_i] \setminus [j_0]}$ (a postprocessing of $Y_{i, j_0} - \hat{c}_i$ and $W_{[n_i] \setminus [j_0]} \mid Y_{i, j_0} - \hat{c}_i$) is $(\frac {2\eps} 3, \frac {2\delta} 3)$-DP, as required.

\textit{Termination on Line~\ref{line:noterminate_smooth}.} Finally, we argue about the privacy when the algorithm does not terminate on Line~\ref{line:terminate_smooth}. As before, the sensitivity of $\bar{x}_i$ is bounded by $6C\eta_i$ via Lemma~\ref{lem:bound_drift} and the triangle inequality, conditioned on a $(\frac{2\eps}{3}, \frac{2\delta}{3})$-indistinguishable event (i.e.\ the values of $Y_{i, j_0} - \hat{c}_i$ and $W_{[n_i] \setminus [j_0]} \mid Y_{i, j_0} - \hat{c}_i$). Then $x_i$ is $(\frac \eps 3, \frac \delta 3)$-indistinguishable by standard bounds on the Gaussian mechanism (Theorem A.1, \cite{DworkRo14}), which completes the proof upon applying basic composition.
\end{proof}

\paragraph{Utility proof.} The utility proof follows the standard analysis of localized SGD algorithms and a specialized analysis of clipped SGD (Corollary~\ref{cor:diff_threshold}).
We first state a utility guarantee in each phase.
\begin{lemma}
\label{lm:phased-sgd-utility}
Following notation in Algorithm~\ref{alg:phased-sgd}, fix $i \in [I]$, and suppose $\calD \sim \calP^n$ i.i.d.\ where $\calP$ satisfies Assumption~\ref{assume:k_ht}. 
    For any $x \in \xset$, if $C\ge 8\omega_i\log(\frac{30n_i}{\delta})$ and  $\frac c 4 \ge \max(n \cdot (\frac{2G_k}{C})^k, 6\log(n))$, 
    \begin{align*}
        \E[F_\calP(\wbx_i)-F_{\calP}(x)]\le \frac{\|x-x_{i-1}\|^2}{2\eta_in_i}+ 2\eta_i G_2^2 +\frac{G_k^k D}{(k - 1)(\frac C 2)^{k - 1}} +\frac{G_2D}{n^2}.
    \end{align*}
\end{lemma}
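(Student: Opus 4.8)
The plan is to reduce the claim to Corollary~\ref{cor:diff_threshold} by showing that, with high probability over the randomness in the bounded-Laplace noise terms $\{\nu_{i,j}\}$ and the randomized threshold $\hat C$, the $i^{\text{th}}$ phase of Algorithm~\ref{alg:phased-sgd} runs exactly like an instance of $\OPCSGD$ with the randomized clipping operators $\Pi_{C, \hat C_j}$ of Corollary~\ref{cor:diff_threshold}, and moreover never terminates early on Line~\ref{line:terminate_smooth}. First I would unwind the clipping logic: in iteration $j$ the algorithm clips iff $\|\nabla f(x_{i,j}; s_{i,j})\| + \nu_{i,j} \ge \hat C$, and upon clipping sets $g_{i,j} = \Pi_C(\nabla f(x_{i,j}; s_{i,j}))$, otherwise $g_{i,j} = \nabla f(x_{i,j}; s_{i,j})$. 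Since $\hat C = C + \BLap(\omega_i, \omega_i \log(\tfrac{30 n_i}{\delta}))$ and $\nu_{i,j} \sim \BLap(2\omega_i, 2\omega_i\log(\tfrac{30 n_i}{\delta}))$, both noise magnitudes are bounded by $2\omega_i\log(\tfrac{30 n_i}{\delta})$ deterministically (that is the point of truncation). Hence $\hat C - \nu_{i,j} \in [C - 4\omega_i\log(\tfrac{30 n_i}{\delta}),\, C + 4\omega_i\log(\tfrac{30 n_i}{\delta})]$, and using the hypothesis $C \ge 8\omega_i\log(\tfrac{30n_i}{\delta})$ this effective per-iteration threshold $\hat C_j \defeq \hat C - \nu_{i,j}$ satisfies $\tfrac C 2 \le \hat C_j \le 2C$; in particular $\hat C_j \ge \tfrac C2$ and $\hat C_j$ depends only on noise independent of $s_{i,j}$, as Corollary~\ref{cor:diff_threshold} requires. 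Thus, conditioned on not terminating, $\wbx_i$ has exactly the distribution of the output of $\OPCSGD$ run with operators $\Pi_{C, \hat C_j}$, comparison point $u = x$, step size $\eta_i$, and $n_i$ steps, so Corollary~\ref{cor:diff_threshold} gives
\[
\E\Brack{F_\calP(\wbx_i) - F_\calP(x) \;\middle|\; \text{no early termination}} \le \frac{\|x - x_{i-1}\|^2}{2\eta_i n_i} + 2\eta_i G_2^2 + \frac{G_k^k D}{(k-1)(\frac C2)^{k-1}}.
\]

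The second ingredient is bounding the probability of early termination, i.e.\ that $\rmcount \ge \hat c_i$ within the phase. Here $\hat c_i = c + \BLap(\tfrac 3\eps, \tfrac c2)$, so $\hat c_i \ge \tfrac c2$ deterministically by truncation, and it suffices to show $\rmcount < \tfrac c2$ with high probability. The variable $\rmcount$ increments on iteration $j$ exactly when $\|\nabla f(x_{i,j}; s_{i,j})\| \ge \hat C_j \ge \tfrac C2$. Since $s_{i,j}$ is independent of $x_{i,j}$ and of $\hat C_j$, and $\E_{s \sim \calP}[\max_{x \in \xset}\|\nabla f(x;s)\|^k] \le G_k^k$ by Assumption~\ref{assume:k_ht}, Markov's inequality gives $\Pr[\,\|\nabla f(x_{i,j}; s_{i,j})\| \ge \tfrac C2\,] \le (\tfrac{2G_k}{C})^k =: p$. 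So $\rmcount$ is stochastically dominated by a $\mathrm{Binomial}(n_i, p)$ random variable (we may bound $n_i \le n$), which has mean at most $np \le \tfrac c4$ by the hypothesis $\tfrac c4 \ge n(\tfrac{2G_k}{C})^k$. A standard multiplicative Chernoff bound then yields $\Pr[\rmcount \ge \tfrac c2] \le \Pr[\rmcount \ge 2\E\rmcount] \le \exp(-\tfrac{np}{3}) \le \exp(-\tfrac{2\log n}{3}) \le n^{-2/3}$, using $np \ge ?$—more carefully, the cleanest route is to observe the hypothesis $\tfrac c4 \ge 6\log n$ forces the Poisson-tail bound $\Pr[\mathrm{Binomial}(n,p) \ge \tfrac c2] \le \exp(-\tfrac c {12}) \le \exp(-2\log n) = n^{-2}$ whenever $\tfrac c2 \ge 2np$ (a textbook Chernoff estimate for sums of independent Bernoullis exceeding twice a mean upper bound: $\Pr[X \ge \tfrac c 2] \le \exp(-\tfrac{c}{4}\log\tfrac{c}{4np}) \le \exp(-\tfrac c 4)$ since $\tfrac{c}{4np}\ge e$... ); I will present the clean version giving failure probability $\le n^{-2}$.

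Finally I would combine the two parts by the law of total expectation, writing $\event$ for the no-early-termination event: on $\event$ the conditional bound above applies; on $\event^c$ (probability $\le n^{-2}$) the output is $\perp$, but for the purpose of the utility statement we may interpret $\wbx_i$ on this event as any fixed point of $\xset$ (or simply note $\wbx_i$ is still defined as the partial average before termination — but the cleanest statement is to bound its contribution by the total function-value range $G_1 D \le G_2 D$ using Lemma~\ref{lem:fpop_lip} and Jensen), so
\[
\E\Brack{F_\calP(\wbx_i) - F_\calP(x)} \le \frac{\|x - x_{i-1}\|^2}{2\eta_i n_i} + 2\eta_i G_2^2 + \frac{G_k^k D}{(k-1)(\frac C2)^{k-1}} + n^{-2} \cdot G_2 D,
\]
and since $n^{-2} \le n^{-2}$ the last term is at most $\tfrac{G_2 D}{n^2}$, as claimed. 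The main obstacle is the bookkeeping in the first paragraph: one must carefully verify that the truncation thresholds on the bounded-Laplace variables are exactly calibrated so that the effective threshold $\hat C_j$ stays in $[\tfrac C2, 2C]$ under the hypothesis on $C$, and that conditioning on the early-termination event does not bias the per-iteration bias/variance estimates (it does not, because the Chernoff argument shows $\event$ holds with probability $1 - o(1)$ and we can absorb the discrepancy into the additive $\tfrac{G_2 D}{n^2}$ term rather than trying to argue the conditional expectation is unchanged). Everything else is a direct invocation of Corollary~\ref{cor:diff_threshold} and a routine Chernoff bound.
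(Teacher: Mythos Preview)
Your proposal is correct and follows essentially the same route as the paper. The only cosmetic difference is that the paper defines the good event $\event$ directly as $\sum_{s \in \calD} \ind_{L_s > C/2} \le c/2$ (a function of the dataset alone), whereas you condition on ``no early termination'' and then dominate $\rmcount$ by that same sum of indicators; both arguments land on the identical Chernoff step (mean $\le c/4$, tail at $c/2$, using $c/4 \ge 6\log n$ to get failure probability $\le n^{-2}$), the identical invocation of Corollary~\ref{cor:diff_threshold}, and the identical $G_2 D$-boundedness trick on the bad event. Your observation that conditioning on $\event$ could in principle bias the Corollary~\ref{cor:diff_threshold} estimate, and that this discrepancy is absorbed into the $G_2 D/n^2$ slack, is exactly the point the paper glosses over; you could tighten your Chernoff paragraph (the clean bound is $\Pr[X \ge c/2] \le \exp(-\Theta(c)) \le n^{-2}$ directly from the multiplicative form with mean $\le c/4$), but the argument is sound.
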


\begin{proof}
%The proof will proceed by arguing that for an input dataset $\calD \sim \calP^n$, the algorithm will be basically running clipped-SGD (Algorithm~\ref{alg:clipped-onepass}), hence implying the final result using Corollary~\ref{cor:diff_threshold}.

By Markov's inequality, $\E_{s \sim \calP}[\ind_{L_s > \frac C 2}] \le (\frac{2G_k}{C})^k$, so the total number of expected samples with $L_s > \frac C 2$ is at most $\frac c 4$. Hence by applying a Chernoff bound,
\begin{align*}
    \Pr_{\calD\sim \calP^n}\Brack{\underbrace{\sum_{s\in \calD}\ind_{L_s > \frac C 2}\le \frac c 2}_{\defeq \event}}\ge 1- \frac 1 {n^2}.
\end{align*}
% Let $\Tilde{C}$ denote the maximum value over all the $\hat{C}$ values during the implementation.
% By the concentration property of the Laplacian, we know 
% \begin{align*}
%     \Pr[\max_{i,j}|\nu_{i,j}|\le C/4, \Tilde{C}\ge 3C/4]\ge 1-1/n^2,
% \end{align*}
% which is denoted by event $E_2$.
% Moreover, we know 
% \begin{align*}
%     \Pr[\min\hat{c}_i\ge c/2]\ge 1-1/n^2,
% \end{align*}
% which is denoted by event $E_3$.

Conditional on $\event$, the algorithm will not halt (i.e., return $\perp$) and is running one-pass clipped-SGD (Algorithm~\ref{alg:clipped-onepass}) using the modified clipping operation defined in the precondition in Corollary~\ref{cor:diff_threshold}.
Then, the statement follows from Corollary~\ref{cor:diff_threshold} as follows: letting $\event^c$ denote the complement of $\event$,
\begin{align*}
    \E[F_\calP(\wbx_i)-F_\calP(x)]=&~ \E[F_\calP(\wbx_i)-F_\calP(x)\mid \event]\Pr[\event]+\E[F_\calP(\wbx_i)-F_\calP(x)\mid \event^c]\Pr[\event^c]\\
    \le & ~ \frac{\norm{x - x_{i-1}}^2}{2\eta_i n_i} + 2\eta_i G_2^2 + \frac{G_k^k D}{(k - 1)(\frac C 2)^{k - 1}}+\E[F_\calP(\wbx_i)-F_\calP(x)\mid \event^c]\Pr[\event^c]\\
    \le &~\frac{\norm{x - x_{i-1}}^2}{2\eta_i n_i} + 2\eta_i G_2^2 + \frac{G_k^k D}{(k - 1)(\frac C 2)^{k - 1}}+G_2D\Pr[\event^c]\\
    \le & \frac{\norm{x - x_{i-1}}^2}{2\eta_i n_i} + 2\eta_i G_2^2 + \frac{G_k^k D}{(k - 1)(\frac C 2)^{k - 1}}+ \frac{G_2 D}{n^2},
\end{align*}
where we used that $\Fpop$ is $G_1 \le G_2$-Lipschitz by Lemma~\ref{lem:fpop_lip}.
\end{proof}

% \begin{theorem}\label{thm:smooth}
% Let $0<\eps<10,0<\delta<1/2$, and $\beta\log(n/\delta)\le \min\{n^{\frac{1}{2(k-1)}},(n\eps/\sqrt{d})^{1/k},\eps(n\eps/\sqrt{d})^{1/2},\eps n^{\frac{k}{4(k-1)}}\}$.
% Setting $\eta=\min\{\frac{D\eps}{C\sqrt{d\log(n/\delta)}},\frac{D}{G_2\sqrt{n}}\}$, $c=n\frac{4^kG^k_k}{C^k}\log(n)$, $C=G_k\min\{n^{\frac{1}{2(k-1)}}, (\frac{n\eps}{\sqrt{d}\log(n/\delta)})^{1/k}\}$, 
% Algorithm~\ref{alg:phased-sgd} is $(\eps,\delta)$-DP and
% outputs $x_k$ that satisfies 
% \begin{align*}
%     \E[F_{\calP}(x_k)-F_{\calP}^*]\lesssim G_kD \cdot \left( \frac{1}{\sqrt{n}} + \left( \frac{\sqrt{d}\log(n/\delta)}{n\eps}\right)^{1 - \frac{1}{k}} \right).
% \end{align*}
% \end{theorem}

Combining our privacy and utility guarantees, we are ready to prove this section's main theorem.
\begin{proof}[Proof of Theorem~\ref{thm:smooth}]
For simplicity, let $\bx_0 \defeq x^\star$ and $\zeta_0 \defeq x_0 - \xs$, so $\norm{\zeta_0} \le D$ by assumption. Also, suppose that $n$ is a power of $2$, as the adjustment on Line~\ref{line:adjust_n_smooth} only affects $n$ (and hence the guarantees) by constant factors. The privacy claim follows immediately from Lemma~\ref{lm:privacy_guarantee} assuming its preconditions are met, which we verify at the end of the proof.
By applying Lemma~\ref{lm:phased-sgd-utility} in each phase $i \in [I]$ to $x \gets x_i$, assuming its preconditions are met, we have
\begin{align*}
\E\Brack{\Fpop(x_I) - \Fpop(\xs)} &\le \sum_{i \in [I]} \Par{\frac{\E\Brack{\norm{\zeta_{i - 1}}^2}}{2\eta_i n_i} + 2\eta_i G_2^2 + \frac{G_k^kD}{(\frac C 2)^{k - 1}}}+ \frac{G_2 DI}{n^2} +  \E\Brack{\Fpop(x_k) - \Fpop(\bx_k)} \\
&\le \frac{4D^2}{\eta n} + \frac{\eta G_2^2}{2} + \frac{G_k^kD I}{(\frac C 2)^{k - 1}} + \frac{G_2 D}{\sqrt n} + G_2 \sigma_I \sqrt d \\
&+ \sum_{i \in [I - 1]} \Par{\frac{3600C^2d\eta_i\log(\frac 3 \delta)}{n_i\eps^2} + \frac{\eta_i G_2^2}{2}}.
\end{align*}
In the first inequality, we used $G_1 \le G_2$-Lipschitzness of $\Fpop$ by Lemma~\ref{lem:fpop_lip}, and in the second inequality, we pulled out the $i = 1$ term and adjusted indices, and bounded $I \le n$ and used Jensen's inequality to bound $(\E\norm{\zeta_I})^2 \le \E \norm{\zeta_I}^2 = \sigma_I^2 d$. Now using that $\frac {\eta_i}{n_i}$ and $\eta_i$ are geometrically decaying sequences, we continue bounding the above display using our choice of $C$:
\begin{align*}
\E\Brack{\Fpop(x_I) - \Fpop(\xs)} &\le \frac{4D^2}{\eta n} + \eta G_2^2 + \frac{14400(\frac C 2)^2 d\eta\log(\frac 3 \delta)}{n\eps^2} + \frac{G_k^k DI}{(\frac C 2)^{k - 1}} + \frac{G_2 D}{\sqrt n} + G_2\sigma_I\sqrt d \\
&\le \frac{4D^2}{\eta n} + \eta G_2^2 + 2(A\eta)^{\frac{k - 1}{k + 1}}\Par{G_k^k DI}^{\frac 2 {k + 1}} + \frac{G_2 D}{\sqrt n} + G_2\sigma_I\sqrt d, \\
\text{for } A &\defeq \frac{14400d\log^2(\frac {15n} \delta)}{n\eps^2},\; C = 2\Par{\frac{G_k^k DI}{A\eta}}^{\frac 1 {k + 1}}.
\end{align*}
Next, plugging in our choice of
\begin{equation}\label{eq:eta_smooth_def}\eta = \min\Par{\underbrace{\sqrt{\frac 4 n} \cdot \frac{D}{G_2}}_{\defeq \eta_1},\; \underbrace{\frac{DI }{G_k n} \cdot \Par{\frac{n}{A}}^{\frac{k - 1}{2k}}}_{\defeq \eta_2}},\end{equation}
we have the claimed utility bound upon simplifying, and using that $G_2\sigma_I \sqrt{d}$ is a low-order term.

We now verify our parameters satisfy the conditions in Lemma~\ref{lm:privacy_guarantee} and Lemma~\ref{lm:phased-sgd-utility}, which concludes the proof. First, it is straightforward to check that both sets of conditions are implied by
\begin{equation}\label{eq:privacy_conditions}
\begin{aligned}
\frac{96\eta \beta c}{\sqrt \eps}\log\Par{\frac{30n}{\delta}} \le 1,\; c \ge 4n \cdot \Par{\frac{2G_k}{C}}^k,\text{ and } c \ge \frac{26}{\eps}\log\Par{\frac{15n}{\delta}},
\end{aligned}
\end{equation}
given that we chose $\omega = \frac{18}{\eps}\sqrt{2c\log(\frac{15}{\delta})} \le \frac{c}{\sqrt \eps}$. Indeed, $C \ge 8\omega_i\log(\frac{30n_i}{\delta}) \impliedby 2\eta\beta\omega\log(\frac{30n}{\delta}) \le 1$ which is subsumed by the first condition in \eqref{eq:privacy_conditions}. Clearly, $c \ge \frac{26}{\eps}\log(\frac{15n}{\delta})$, giving the third condition in \eqref{eq:privacy_conditions}. Next, a direct computation with the definition of $\eta_2$ in \eqref{eq:eta_smooth_def} yields
\begin{align*}
c = 2\sqrt{An} = 4n \cdot \sqrt{\frac A n} = 4n\cdot\Par{G_k \cdot \Par{\frac{A\eta_2}{G_k^k DI}}^{\frac{1}{k+1}}}^{k}.
\end{align*}
Now because $C$ depends inversely on $\eta \le \eta_2$ defined in \eqref{eq:eta_smooth_def},  the second condition in \eqref{eq:privacy_conditions} holds:
\begin{align*}
c = 4n\cdot\Par{G_k \cdot \Par{\frac{A\eta_2}{G_k^k DI}}^{\frac{1}{k+1}}}^{k} \ge 4n \cdot \Par{G_k \cdot \Par{\frac{A\eta}{G_k^k DI}}^{\frac 1 {k + 1}}}^k = 4n \cdot \Par{\frac{2G_k}{C}}^k.
\end{align*}
Finally, the first condition in \eqref{eq:privacy_conditions} now follows from our upper bound on $\beta$.
\end{proof}
\section{Improved Smoothness Bounds for Generalized Linear Models}\label{sec:smooth_glm}

In this section, we give an improved algorithm for heavy-tailed private SCO when the sample functions $f(x; s)$ are instances of a smooth generalized linear model (GLM). That is, we assume the sample space $\calS \subseteq \R^d$, and that for a convex function $\sigma: \R \to \R$, 
\begin{equation}\label{eq:glm_assume}f(x; s) = \sigma\Par{\inprod{s}{x}}.\end{equation}
We also assume that all $f(x; s)$ are $\beta$-smooth. Observe that
\begin{equation}\label{eq:glm_same_direction}\nabla f(x; s) = \sigma'(\inprod{s}{x})s,\end{equation}
so that for all $x \in \xset$, $\nabla f(x; s)$ are all scalar multiples of the same vector $s$. We prove that under this assumption, clipped gradient descent steps can only improve contraction, in contrast to Fact~\ref{fact:conj_false}.

\begin{lemma}\label{lem:contract_glm}
Let $s, s' \in \R$ and let $x, x', g \in \R^d$. Assume that
\[\norm{(x - sg) - (x' - s'g)} \le \norm{x - x'}.\]
Then for any $C \ge 0$, letting $t \defeq \sign(s)\min(|s|, C)$ and $t' \defeq \sign(s')\min(|s'|, C)$, we have
\[\norm{(x - tg) - (x' - t'g)} \le \norm{x - x'}.\]
\end{lemma}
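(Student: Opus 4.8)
The plan is to reduce the $d$-dimensional statement to a one-dimensional one. Write $g = \frac{g}{\norm{g}}\norm{g}$ and decompose $x - x' = \alpha \hat{g} + w$ where $\hat{g} \defeq g / \norm{g}$ (assuming $g \neq 0$; the case $g = 0$ is trivial since both sides equal $\norm{x - x'}$), $\alpha \defeq \inprod{x - x'}{\hat{g}}$, and $w \perp \hat{g}$. Then $(x - sg) - (x' - s'g) = (\alpha - (s - s')\norm{g})\hat{g} + w$, and by the Pythagorean theorem the hypothesis $\norm{(x - sg) - (x' - s'g)}^2 \le \norm{x - x'}^2$ is equivalent to the scalar inequality
\[
\Par{\alpha - (s - s')\norm{g}}^2 \le \alpha^2.
\]
Similarly, the conclusion is equivalent to $\Par{\alpha - (t - t')\norm{g}}^2 \le \alpha^2$. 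So the whole lemma reduces to: if $|\alpha - (s - s')\norm{g}| \le |\alpha|$, then $|\alpha - (t - t')\norm{g}| \le |\alpha|$, where $t = \sign(s)\min(|s|, C)$ and $t' = \sign(s')\min(|s'|, C)$.

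The key observation is that clipping is a $1$-Lipschitz, monotone, odd map on $\R$, and in particular the clipped difference $t - t'$ has the same sign as $s - s'$ and satisfies $|t - t'| \le |s - s'|$ (this is the standard fact that $x \mapsto \sign(x)\min(|x|, C)$ is a contraction; it follows by checking the cases where each of $s, s'$ is inside or outside $[-C, C]$, or more slickly since this map is the proximal/projection operator onto $[-C, C]$ composed with identity, hence nonexpansive and monotone). Set $u \defeq (s - s')\norm{g}$ and $v \defeq (t - t')\norm{g}$; then $v$ has the same sign as $u$ and $|v| \le |u|$, i.e. $v = \theta u$ for some $\theta \in [0, 1]$. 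The hypothesis reads $|\alpha - u| \le |\alpha|$, which by squaring is $u^2 \le 2\alpha u$, equivalently $0 \le u \cdot \mathrm{sign}(\alpha) $ and $|u| \le 2|\alpha|$ — more cleanly, $|\alpha - u|^2 \le \alpha^2 \iff u(u - 2\alpha) \le 0$. We want $|\alpha - v|^2 \le \alpha^2 \iff v(v - 2\alpha) \le 0 \iff \theta u(\theta u - 2\alpha) \le 0$. Since $\theta \ge 0$, this is $\theta u(\theta u - 2\alpha) \le 0 \iff u(\theta u - 2\alpha) \le 0$ (when $\theta > 0$; when $\theta = 0$ it is immediate). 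Now $u(\theta u - 2\alpha) = \theta u^2 - 2\alpha u \le \theta u^2 + (u^2 - 2\alpha u) - u^2 = (\theta - 1)u^2 + u(u - 2\alpha) \le 0$, using $\theta \le 1$ and the hypothesis $u(u - 2\alpha) \le 0$. This closes the argument.

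I expect the main obstacle to be purely bookkeeping: verifying carefully that $v = \theta u$ with $\theta \in [0,1]$, i.e. that clipping contracts differences and preserves their sign — this requires a short case analysis on whether $s$ and $s'$ lie inside $[-C,C]$ (four cases, or two by symmetry), and one must be mildly careful about the degenerate subcases where $s = s'$ or $\alpha = 0$. None of this is deep, but it is the only place where the specific structure of the clipping operator (as opposed to an arbitrary $1$-Lipschitz map, which would not suffice without the sign-preservation property) is actually used. An alternative, slightly cleaner packaging avoids introducing $\theta$: directly expand $\norm{(x - tg) - (x' - t'g)}^2 = \norm{x - x'}^2 - 2(t - t')\norm{g}\,\alpha + (t-t')^2\norm{g}^2$ and $\norm{(x-sg)-(x'-s'g)}^2 = \norm{x-x'}^2 - 2(s-s')\norm{g}\,\alpha + (s-s')^2\norm{g}^2$; subtract, and it suffices to show $(t-t')^2\norm{g}^2 - 2(t-t')\norm{g}\alpha \le (s-s')^2\norm{g}^2 - 2(s-s')\norm{g}\alpha$, which by factoring and using $|t - t'| \le |s - s'|$, $\sign(t-t') = \sign(s-s')$, and the hypothesis (rearranged to $(s-s')\norm{g}\alpha \ge \tfrac12 (s-s')^2\norm{g}^2 \ge 0$) follows from a two-term comparison. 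I would present whichever of these is shorter in the final writeup.
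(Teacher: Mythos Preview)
Your proposal is correct and takes essentially the same approach as the paper: both expand the squared norm to reduce to a scalar inequality (the paper's $(s-s')^2\norm{g}^2 \le 2(s-s')\inprod{x-x'}{g}$ is your $u(u-2\alpha) \le 0$), then use monotonicity and $1$-Lipschitzness of clipping to conclude $t-t'$ lies between $0$ and $s-s'$ (the paper via a WLOG on signs, you via $v = \theta u$ with $\theta \in [0,1]$), from which the scalar inequality for $t-t'$ follows immediately. Your orthogonal-decomposition packaging is slightly more explicit than the paper's, but the content is identical.
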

\begin{proof}
Note that the premise is impossible unless $\sign(s - s') = \sign(\inprod{x - x'}{g})$. Without loss of generality, assume they are both nonnegative, else we can negate $s, s', g$. In this case,
\begin{align*}\norm{(x - x') - (s - s')g} \le \norm{x - x'} &\iff (s' - s)^2\norm{g}^2 \le 2(s - s')\inprod{x - x'}{g} \\
&\iff s - s' \le \frac{2\inprod{x - x'}{g}}{\norm{g}^2}.\end{align*}
Now, observe that $t - t' \le s - s'$ and $\sign(t - t') = \sign(s - s')$, for any value of $C \ge 0$. Therefore, $t - t' \le \frac{2\inprod{v}{g}}{\norm{g}^2}$ as well, and we can reverse the above chain of implications.
\end{proof}

Note that the premise of Lemma~\ref{lem:contract_glm} is exactly an instance of Fact~\ref{fact:contract_smooth} where $\nabla f(x)$ and $\nabla f(x')$ are scalar multiples of the same direction, which is the case for GLMs by \eqref{eq:glm_same_direction}. Hence, Lemma~\ref{lem:contract_glm} shows the contraction property in Fact~\ref{fact:contract_smooth} is preserved after clipping gradients (again, for GLMs).

We can now directly combine Lemma~\ref{lem:one_pass_utility} and our contraction results, used to analyze the stability of Algorithm~\ref{alg:clipped-onepass}, with the iterative localization framework of \cite{FeldmanKoTa20}, Section 4.

\begin{algorithm2e}[ht!]
\caption{$\OPCDPSGD(\calD, n, \xset, x_0, \rho)$}
\label{alg:clipped-onepass-dp}
\DontPrintSemicolon
\codeInput Dataset $\calD = \{s_i\}_{i \in [n]}\in \Ds^n$, domain $\xset \subset \ball(x_0, D)$ for $x_0 \in \xset$\;
$I \gets \lfloor \log_2(n)\rfloor$\;
$n \gets 2^I$\;
$\eta \gets \min(\sqrt{\frac{8}{n}}\cdot \frac{D}{G_2},  \frac 1 n \cdot (\frac{n^2 \rho}{32d})^{\frac{k - 1}{2k}} \cdot \frac{2^{\frac{k + 1}{2k}}D}{G_k})$, $C \gets (\frac{G_k^k D \rho n}{32 \eta d})^{\frac 1 {k + 1}}$\;
\For{$i \in [I]$}{
$n_i \gets 2^{-i} n$, $\eta_i \gets 16^{-i}\eta$, $C_i \gets 2^i C$, $\sigma_i \gets 2\eta_iC_i \cdot \sqrt{\frac{2}{\rho}}$\;
$\calD_i \gets $ first $n_i$ elements of $\calD$, $\calD \gets \calD \setminus \calD_i$\;
$\bx_i \gets \OPCSGD(\calD_i, C_i, \eta_i, n_i, \xset, x_{i - 1})$\;\label{line:onepass_split}
$\xi_i \sim \Nor(\0_d, \sigma_i^2 \id_d)$\;
$x_i \gets \bx_i + \xi_i$\;
}
\codeReturn $x_I$
\end{algorithm2e}

\begin{theorem}\label{thm:smooth_glm}
Consider an instance of $k$-heavy-tailed private SCO, following notation in Definition~\ref{def:psco}, let $x^\star \defeq \argmin_{x \in \xset} \Fpop(x)$, and let $\rho \ge 0$. Further, assume that for a convex function $\sigma$, the sample functions $f(x; s)$ satisfy \eqref{eq:glm_assume} for all $s \in \calS \subseteq \R^d$. Finally, assume $f(x; s)$ is $\beta$-smooth for all $s \in \calS$, where $\beta \le \max(\sqrt{\frac n 2} \cdot \frac {G_2} D, n \cdot (\frac{d}{n^2\rho})^{\frac{k - 1}{2k}} \cdot \frac{G_k}{D})$. Algorithm~\ref{alg:clipped-onepass-dp} is a $\rho$-CDP algorithm which draws $\calD \sim \calP^n$, queries $n$ sample gradients (using samples in $\calD$), and outputs $x_I \in \xset$ satisfying
\[\E\Brack{\Fpop(x_I) - \Fpop(x^\star)} \le 4G_2 D \sqrt{\frac 1 n} + 26G_k D\Par{\frac{\sqrt d}{n\sqrt{\rho}}}^{1 - \frac 1 k}.\]
\end{theorem}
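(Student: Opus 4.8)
The plan is to assemble the one-pass clipped-SGD utility bound of Lemma~\ref{lem:one_pass_utility} with the phased (localization) template of \cite{FeldmanKoTa20}, deriving the privacy of each phase from a sensitivity argument that rests on Lemma~\ref{lem:contract_glm} — the fact that, for GLMs, clipping never destroys the contractivity of a smooth gradient step.

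\textbf{Privacy.} The phases of Algorithm~\ref{alg:clipped-onepass-dp} consume disjoint sample subsets $\calD_1,\dots,\calD_I$ (Line~\ref{line:onepass_split}), so any two neighboring datasets differ in a single sample belonging to exactly one phase $i_0$; by postprocessing of CDP it suffices to argue phase $i_0$ is $\rho$-CDP. I would first bound the $\ell_2$-sensitivity of the averaged iterate $\bx_{i_0}$ returned by $\OPCSGD$ by $2\eta_{i_0}C_{i_0}$. Running $\OPCSGD$ on $\calD_{i_0}$ and on a neighbor, the two trajectories coincide until the step using the differing sample, where the iterate gap becomes at most $\eta_{i_0}\norm{\Pi_{C_{i_0}}(\nabla f(x_t;s))-\Pi_{C_{i_0}}(\nabla f(x_t;s'))}\le 2\eta_{i_0}C_{i_0}$ (Euclidean projection onto $\xset$ is nonexpansive). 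At every later step both trajectories apply the \emph{same} map $x\mapsto\Pi_{\xset}(x-\eta_{i_0}\Pi_{C_{i_0}}(\nabla f(x;s)))$; since $f(\cdot;s)$ is a GLM, \eqref{eq:glm_same_direction} shows $\nabla f(x;s)$ and $\nabla f(x';s)$ are scalar multiples of the common vector $s$, so Fact~\ref{fact:contract_smooth} together with Lemma~\ref{lem:contract_glm} shows this map is $1$-Lipschitz provided $\eta_{i_0}\le 2/\beta$. The hypothesized bound $\beta\le\max(\sqrt{\tfrac n2}\,\tfrac{G_2}{D},\ n(\tfrac{d}{n^2\rho})^{(k-1)/2k}\tfrac{G_k}{D})$ is exactly calibrated so that $\eta\le 2/\beta$, hence $\eta_i\le 2/\beta$ for all $i$; I would check this on each branch of the $\min$ defining $\eta$. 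Thus the gap never grows, averaging preserves it, and the Gaussian mechanism (third part of Lemma~\ref{lem:cdp_facts}) with $\sigma_{i_0}=2\eta_{i_0}C_{i_0}\sqrt{2/\rho}$ makes phase $i_0$ — hence the whole algorithm — $(2\eta_{i_0}C_{i_0})^2/(2\sigma_{i_0}^2)=\rho/4\le\rho$-CDP. The query bound is immediate since each sample is queried once and $\sum_i n_i<n$.

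\textbf{Utility.} Write $\bx_0\defeq x^\star=\argmin_{\xset}\Fpop$ and telescope
\[\Fpop(x_I)-\Fpop(x^\star)=\big(\Fpop(x_I)-\Fpop(\bx_I)\big)+\sum_{i\in[I]}\big(\Fpop(\bx_i)-\Fpop(\bx_{i-1})\big).\]
For each $i$, $\calD_i$ is independent of the $\sigma$-field generated by phases $1,\dots,i-1$, so $\bx_{i-1}\in\xset$ and the phase-$i$ starting point $x_{i-1}$ are both independent of $\calD_i$; applying Lemma~\ref{lem:one_pass_utility} conditionally with target $u\gets\bx_{i-1}$ and then taking total expectation gives
\[\E\big[\Fpop(\bx_i)-\Fpop(\bx_{i-1})\big]\le\frac{\E\norm{x_{i-1}-\bx_{i-1}}^2}{2\eta_i n_i}+\frac{\eta_i G_2^2}{2}+\frac{G_k^k D}{(k-1)C_i^{k-1}},\]
where $\norm{x_{i-1}-\bx_{i-1}}=\norm{\xi_{i-1}}$ for $i\ge 2$ (so $\E\norm{\cdot}^2=\sigma_{i-1}^2 d$) and $\le D$ for $i=1$ (after projecting $x_{i-1}$ back into $\xset$ if needed, which perturbs the analysis only by lower-order terms in $\sigma_{i-1}$). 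The leftover term satisfies $\Fpop(x_I)-\Fpop(\bx_I)\le G_2\norm{\xi_I}$ by Lemma~\ref{lem:fpop_lip}, which is $\mathrm{poly}(1/n)$ since $\sigma_I$ is exponentially small in $I=\lfloor\log_2 n\rfloor$. Plugging $\eta_i=16^{-i}\eta$, $n_i=2^{-i}n$, $C_i=2^i C$, $\sigma_i=2\eta_i C_i\sqrt{2/\rho}$ into the sum and collapsing the geometric series reduces the bound to $O\big(\tfrac{D^2}{\eta n}+\eta G_2^2+\tfrac{\eta C^2 d}{\rho n}+\tfrac{G_k^k D}{C^{k-1}}\big)$. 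The choice $C=(\tfrac{G_k^k D\rho n}{32\eta d})^{1/(k+1)}$ equalizes $\tfrac{\eta C^2 d}{\rho n}$ and $\tfrac{G_k^k D}{C^{k-1}}$, reducing both to $\Theta\big(G_k D(\tfrac{\sqrt d}{n\sqrt\rho})^{1-1/k}\big)$ (independent of $\eta$, as a short computation confirms), and then $\eta=\min(\sqrt{8/n}\,D/G_2,\dots)$ balances $\tfrac{D^2}{\eta n}$ against $\eta G_2^2$ to reach $\Theta(G_2 D/\sqrt n)$; carrying the constants through the geometric sums yields the stated coefficients $4$ and $26$.

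\textbf{Main difficulty.} The substantive step is the privacy argument: absent clipping this would be the stability analysis of contractive SGD from \cite{FeldmanKoTa20}, but clipping a smooth gradient step can in general break contraction (cf.\ Fact~\ref{fact:conj_false}), so the whole scheme hinges on the GLM-specific Lemma~\ref{lem:contract_glm} and \eqref{eq:glm_same_direction}. The remaining care is bookkeeping: one must verify the hypothesized upper bound on $\beta$ is exactly the condition $\eta\le 2/\beta$ required for contraction, and one must jointly optimize the two coupled parameters $\eta$ and $C$ over phases whose step sizes, batch sizes, and noise levels all scale geometrically, ensuring that no single phase dominates and that the added-noise terms remain lower order.
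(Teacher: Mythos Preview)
Your approach is essentially the paper's: privacy via the $2\eta_i C_i$ sensitivity bound (from Fact~\ref{fact:contract_smooth} and Lemma~\ref{lem:contract_glm}) plus the Gaussian mechanism, and utility via the telescoping sum with Lemma~\ref{lem:one_pass_utility} applied per phase at $u\gets\bx_{i-1}$.

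One bookkeeping slip to fix: after the $C$-optimization, the combined bias-plus-noise term is
\[
\frac{32d\eta C^2}{\rho n}+\frac{G_k^k D}{C^{k-1}}
=2\,\eta^{\frac{k-1}{k+1}}(G_k^kD)^{\frac{2}{k+1}}\Par{\frac{32d}{\rho n}}^{\frac{k-1}{k+1}},
\]
which is \emph{not} independent of $\eta$; it scales as $\eta^{(k-1)/(k+1)}$. This is exactly why $\eta$ is a $\min$ of two terms: the second branch balances $\tfrac{D^2}{\eta n}$ against this $\eta$-dependent expression (not merely against $\eta G_2^2$), and only with $\eta\le\eta_2$ does the above collapse to the target $O\bigl(G_kD(\sqrt d/(n\sqrt\rho))^{1-1/k}\bigr)$. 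With that correction your accounting matches the paper's.
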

\begin{proof}
We begin with the privacy claim. Consider neighboring datasets $\calD$, $\calD'$, and suppose the datasets differ on the $j^{\text{th}}$ entry such that $s_j \in \calD_i$ (if the differing entry is not in $\cup_{i \in [I]} \calD_i$, Algorithm~\ref{alg:clipped-onepass-dp} clearly satisfies $0$-CDP). Let $\bx_i$ and $\bx'_i$ be the outputs of Line~\ref{line:onepass_split} when run with the same initialization $x_{i -1}$, and neighboring $\calD_i$, $\calD'_i$. By the assumption on $\beta$, since $\eta_i \le \eta$ for all $i \in [I]$, we can apply Fact~\ref{fact:contract_smooth} and Lemma~\ref{lem:contract_glm} (recalling the characterization \eqref{eq:glm_same_direction}) to show $\norm{\bx_i - \bx'_i} \le 2\eta_i C_i$ with probability $1$. Therefore, by our choice of $\sigma_i$ and the first and third parts of Lemma~\ref{lem:cdp_facts}, the whole algorithm is $\rho$-CDP regardless of which $\calD_i$ contained the differing sample, since all other calls to $\OPCSGD$ are $0$-CDP as we can couple all randomness used by the calls.

Next, we prove the utility claim. For simplicity, let $\bx_0 \defeq x^\star$ and $\xi_0 \defeq x_0 - x^\star$, so $\norm{\xi_0} \le D$ by assumption. By applying Lemma~\ref{lem:one_pass_utility} for all $i \in [I]$ with $x_0 \gets x_{i - 1}$ and $u \gets \bx_{i - 1}$, we have
\begin{align*}
\E\Brack{\Fpop(x_I) - \Fpop(x^\star)} &= \sum_{i \in [I]} \E\Brack{\Fpop(\bx_i) - \Fpop(\bx_{i - 1})} + \E\Brack{\Fpop(x_I) - \Fpop(\bx_I)} \\
&\le \sum_{i \in [I]} \Par{\E\Brack{\frac{\norm{\xi_{i - 1}}^2}{2\eta_i n_i}} + \frac{\eta_i G_2^2}{2} + \frac{G_k^k D}{(k - 1)C_i^{k - 1}}} + G_1\E\Brack{\norm{x_I - \bx_I}} \\
&\le \frac{4D^2}{\eta n} + \sum_{i \in [I - 1]} 2^{-i}\Par{\frac{32d\eta C^2}{\rho n} + \frac{\eta G_2^2}{2} + \frac{G_k^k D}{C^{k - 1}} } + \sqrt{\frac {8d} \rho }G_1 \eta C \cdot 8^{-I} \\
&\le \frac{4D^2}{\eta n} + \frac{32d\eta C^2}{\rho n} + \frac{\eta G_2^2}{2} + \frac{G_k^k D}{C^{k - 1}} + 24\sqrt{\frac {d} \rho} \cdot \frac {G_1 \eta C} {n^3},
\end{align*}
where the second line applied Lemma~\ref{lem:fpop_lip}, the third used Jensen's inequality to bound $\E[\norm{x_I - \bx_I}]^2 \le \E[\norm{x_I - \bx_I}^2]$ and our assumption $k \ge 2$, and the last used the geometric decay of the different parameters. Finally, by plugging in our choices of $C, \eta$, we have
\begin{align*}
\frac{4D^2}{\eta n}+ \frac{\eta G_2^2}{2} + \frac{32d\eta C^2}{\rho n}  + \frac{G_k^k D}{C^{k - 1}} &= \frac{4D^2}{\eta n}+ \frac{\eta G_2^2}{2} + 2\eta^{\frac{k - 1}{k + 1}}\Par{G_k^kD}^{\frac{2}{k+1}}\Par{\frac{32d}{\rho n}}^{\frac{k - 1}{k + 1}} \\
&\le G_2 D \sqrt{\frac 8 n} + 8G_k D\Par{\frac{\sqrt d}{n\sqrt{\rho}}}^{1 - \frac 1 k}.
\end{align*}
We can also check that the final summand is a low-order term, by using $\eta \le \frac 1 n \cdot (\frac{n^2 \rho}{32d})^{\frac{k - 1}{2k}} \cdot \frac{2^{\frac{k + 1}{2k}}D}{G_k}$:
\[24\sqrt{\frac d \rho} \cdot \frac{G_1 \eta C}{n^3} \le \frac{5G_k D}{n^2}.\]
The conclusion follows by adjusting $n$, since Algorithm~\ref{alg:clipped-onepass-dp} is run with a sample count in $[\frac n 2, n]$.
\end{proof}

\section*{Acknowledgements}

KT thanks Frederic Koehler for suggesting the counterexample in Lemma~\ref{lem:variancenondecay}, and Yair Carmon and Jiaming Liang for discussion of the literature on high-probability stochastic convex optimization.

\addcontentsline{toc}{section}{References}
\bibliographystyle{alpha}
\bibliography{ref}

\newcommand{\etalchar}[1]{$^{#1}$}
\begin{thebibliography}{WXDX20}

\bibitem[ACG{\etalchar{+}}16]{AbadiChGoMcMiTaZh16}
Martin Abadi, Andy Chu, Ian Goodfellow, Brendan McMahan, Ilya Mironov, Kunal
  Talwar, and Li~Zhang.
\newblock Deep learning with differential privacy.
\newblock In {\em Proceedings of the 23rd Annual ACM Conference on Computer and
  Communications Security (CCS)}, pages 308--318, 2016.

\bibitem[ACJ{\etalchar{+}}21]{AsiCJJS21}
Hilal Asi, Yair Carmon, Arun Jambulapati, Yujia Jin, and Aaron Sidford.
\newblock Stochastic bias-reduced gradient methods.
\newblock In {\em Advances in Neural Information Processing Systems 34: Annual
  Conference on Neural Information Processing Systems 2021}, pages
  10810--10822, 2021.

\bibitem[ADF{\etalchar{+}}21]{AsiDuFaJaTa21}
Hilal Asi, John Duchi, Alireza Fallah, Omid Javidbakht, and Kunal Talwar.
\newblock Private adaptive gradient methods for convex optimization.
\newblock In {\em Proceedings of the 38th International Conference on Machine
  Learning (ICML)}, pages 383--392, 2021.

\bibitem[AFKT21]{AsiFeKoTa21}
Hilal Asi, Vitaly Feldman, Tomer Koren, and Kunal Talwar.
\newblock Private stochastic convex optimization: Optimal rates in {$\ell_1$}
  geometry.
\newblock In {\em Proceedings of the 38th International Conference on Machine
  Learning (ICML)}, 2021.

\bibitem[AL24]{LiuA24}
Hilal Asi and Daogao Liu.
\newblock User-level differentially private stochastic convex optimization:
  Efficient algorithms with optimal rates.
\newblock In {\em International Conference on Artificial Intelligence and
  Statistics, 2024}, volume 238 of {\em Proceedings of Machine Learning
  Research}, pages 4240--4248. {PMLR}, 2024.

\bibitem[ALD21]{AsiLeDu21}
Hilal Asi, Daniel Levy, and John Duchi.
\newblock Adapting to function difficulty and growth conditions in private
  optimization.
\newblock In {\em Proceedings of the 34nd Annual Conference on Advances in
  Neural Information Processing Systems (NeurIPS)}, volume~34, pages
  19069--19081, 2021.

\bibitem[BD14]{BarberD14}
Rina~Foygel Barber and John~C. Duchi.
\newblock Privacy: {A} few definitional aspects and consequences for minimax
  mean-squared error.
\newblock In {\em 53rd {IEEE} Conference on Decision and Control, {CDC} 2014},
  pages 1365--1369. {IEEE}, 2014.

\bibitem[BDRS18]{BunDRS18}
Mark Bun, Cynthia Dwork, Guy~N. Rothblum, and Thomas Steinke.
\newblock Composable and versatile privacy via truncated {CDP}.
\newblock In {\em Proceedings of the 50th Annual {ACM} {SIGACT} Symposium on
  Theory of Computing, {STOC} 2018}, pages 74--86. {ACM}, 2018.

\bibitem[BFTT19]{BassilyFeTaTh19}
Raef Bassily, Vitaly Feldman, Kunal Talwar, and Abhradeep Thakurta.
\newblock Private stochastic convex optimization with optimal rates.
\newblock In {\em Proceedings of the 32nd Annual Conference on Advances in
  Neural Information Processing Systems (NeurIPS)}, pages 11282--11291, 2019.

\bibitem[BGN21]{BassilyGuNa21}
Raef Bassily, Cristobal Guzman, and Anupama Nandi.
\newblock Non-euclidean differentially private stochastic convex optimization.
\newblock {\em arXiv:2103.01278 [cs.LG]}, 2021.

\bibitem[BS16]{BunS16}
Mark Bun and Thomas Steinke.
\newblock Concentrated differential privacy: Simplifications, extensions, and
  lower bounds.
\newblock In {\em Theory of Cryptography - 14th International Conference, {TCC}
  2016-B, Proceedings, Part {I}}, volume 9985 of {\em Lecture Notes in Computer
  Science}, pages 635--658, 2016.

\bibitem[CH24]{CarmonH24}
Yair Carmon and Oliver Hinder.
\newblock The price of adaptivity in stochastic convex optimization.
\newblock {\em CoRR}, abs/2402.10898, 2024.

\bibitem[CJJ{\etalchar{+}}23]{CarmonJJLLST23}
Yair Carmon, Arun Jambulapati, Yujia Jin, Yin~Tat Lee, Daogao Liu, Aaron
  Sidford, and Kevin Tian.
\newblock Resqueing parallel and private stochastic convex optimization.
\newblock In {\em 64th {IEEE} Annual Symposium on Foundations of Computer
  Science, {FOCS} 2023}, pages 2031--2058. {IEEE}, 2023.

\bibitem[DDXZ21]{DavisDXZ21}
Damek Davis, Dmitriy Drusvyatskiy, Lin Xiao, and Junyu Zhang.
\newblock From low probability to high confidence in stochastic convex
  optimization.
\newblock {\em J. Mach. Learn. Res.}, 22:49:1--49:38, 2021.

\bibitem[DNR{\etalchar{+}}09]{DworkNRRV09}
Cynthia Dwork, Moni Naor, Omer Reingold, Guy~N. Rothblum, and Salil~P. Vadhan.
\newblock On the complexity of differentially private data release: efficient
  algorithms and hardness results.
\newblock In {\em Proceedings of the 41st Annual {ACM} Symposium on Theory of
  Computing, {STOC} 2009}, pages 381--390. {ACM}, 2009.

\bibitem[DR14]{DworkRo14}
Cynthia Dwork and Aaron Roth.
\newblock The algorithmic foundations of differential privacy.
\newblock {\em Foundations and Trends in Theoretical Computer Science}, 9(3 \&
  4):211--407, 2014.

\bibitem[FKT20]{FeldmanKoTa20}
Vitaly Feldman, Tomer Koren, and Kunal Talwar.
\newblock Private stochastic convex optimization: optimal rates in linear time.
\newblock In {\em Proceedings of the 52nd Annual ACM on the Theory of
  Computing}, pages 439--449, 2020.

\bibitem[GLL{\etalchar{+}}23]{gopi2023private}
Sivakanth Gopi, Yin~Tat Lee, Daogao Liu, Ruoqi Shen, and Kevin Tian.
\newblock Private convex optimization in general norms.
\newblock In {\em Proceedings of the 2023 Annual ACM-SIAM Symposium on Discrete
  Algorithms (SODA)}, pages 5068--5089. SIAM, 2023.

\bibitem[HRS16]{HardtRS16}
Moritz Hardt, Ben Recht, and Yoram Singer.
\newblock Train faster, generalize better: Stability of stochastic gradient
  descent.
\newblock In {\em {ICML}}, pages 1225--1234, 2016.

\bibitem[HS16]{HsuS16}
Daniel~J. Hsu and Sivan Sabato.
\newblock Loss minimization and parameter estimation with heavy tails.
\newblock {\em J. Mach. Learn. Res.}, 17:18:1--18:40, 2016.

\bibitem[JST24]{JambulapatiST24}
Arun Jambulapati, Aaron Sidford, and Kevin Tian.
\newblock Closing the computational-query depth gap in parallel stochastic
  convex optimization.
\newblock In {\em The Thirty Seventh Annual Conference on Learning Theory,
  {COLT} 2024}, Proceedings of Machine Learning Research. {PMLR}, 2024.

\bibitem[KLL21]{kll21}
Janardhan Kulkarni, Yin~Tat Lee, and Daogao Liu.
\newblock Private non-smooth empirical risk minimization and stochastic convex
  optimization in subquadratic steps.
\newblock {\em arXiv preprint arXiv:2103.15352}, 2021.

\bibitem[KLL{\etalchar{+}}23]{KelnerLLST23}
Jonathan~A. Kelner, Jerry Li, Allen~X. Liu, Aaron Sidford, and Kevin Tian.
\newblock Semi-random sparse recovery in nearly-linear time.
\newblock In {\em The Thirty Sixth Annual Conference on Learning Theory, {COLT}
  2023}, volume 195 of {\em Proceedings of Machine Learning Research}, pages
  2352--2398. {PMLR}, 2023.

\bibitem[KLZ22]{KamathLiZh22}
Gautam Kamath, Xingtu Liu, and Huanyu Zhang.
\newblock Improved rates for differentially private stochastic convex
  optimization with heavy-tailed data.
\newblock In {\em Proceedings of the 39th International Conference on Machine
  Learning (ICML)}, pages 10633--10660, 2022.

\bibitem[Lia24]{Liang24}
Jiaming Liang.
\newblock Variance reduction and low sample complexity in stochastic
  optimization via proximal point method.
\newblock {\em CoRR}, abs/2402.08992, 2024.

\bibitem[LR23]{LowyR23}
Andrew Lowy and Meisam Razaviyayn.
\newblock Private stochastic optimization with large worst-case lipschitz
  parameter: Optimal rates for (non-smooth) convex losses and extension to
  non-convex losses.
\newblock In {\em International Conference on Algorithmic Learning Theory},
  volume 201 of {\em Proceedings of Machine Learning Research}, pages
  986--1054. {PMLR}, 2023.

\bibitem[LSB12]{LacosteJSB12}
Simon Lacoste{-}Julien, Mark Schmidt, and Francis~R. Bach.
\newblock A simpler approach to obtaining an o(1/t) convergence rate for the
  projected stochastic subgradient method.
\newblock {\em CoRR}, abs/1212.2002, 2012.

\bibitem[Mir17]{Mironov17}
Ilya Mironov.
\newblock R{\'{e}}nyi differential privacy.
\newblock In {\em 30th {IEEE} Computer Security Foundations Symposium, {CSF}
  2017, Santa Barbara, CA, USA, August 21-25, 2017}, pages 263--275. {IEEE}
  Computer Society, 2017.

\bibitem[Sch14]{schneider2014convex}
Rolf Schneider.
\newblock {\em Convex bodies: the Brunn--Minkowski theory}.
\newblock Number 151. Cambridge university press, 2014.

\bibitem[SSSS09]{Shalev-ShwartzSSS09}
Shai Shalev{-}Shwartz, Ohad Shamir, Nathan Srebro, and Karthik Sridharan.
\newblock Stochastic convex optimization.
\newblock In {\em {COLT} 2009 - The 22nd Conference on Learning Theory}, 2009.

\bibitem[SSSSS09]{ShwartzShSrSr09}
Shai Shalev-Shwartz, Ohad Shamir, Nathan Srebro, and Karthik Sridharan.
\newblock Stochastic convex optimization.
\newblock In {\em Proceedings of the Twenty Second Annual Conference on
  Computational Learning Theory}, 2009.

\bibitem[SZ23]{SidfordZ23}
Aaron Sidford and Chenyi Zhang.
\newblock Quantum speedups for stochastic optimization.
\newblock In {\em Advances in Neural Information Processing Systems 36: Annual
  Conference on Neural Information Processing Systems 2023, NeurIPS 2023, New
  Orleans, LA, USA, December 10 - 16, 2023}, 2023.

\bibitem[WXDX20]{WangXiDeXu20}
Di~Wang, Hanshen Xiao, Srinivas Devadas, and Jinhui Xu.
\newblock On differentially private stochastic convex optimization with
  heavy-tailed data.
\newblock In {\em Proceedings of the 37th International Conference on Machine
  Learning (ICML)}, pages 10081--10091, 2020.

\end{thebibliography}

\newpage

\begin{appendix}
\section{High-probability stochastic convex optimization}\label{sec:nonprivate}

In this section, to highlight another application of our population-level localization framework, we show that it obtains improved high-probability guarantees for the following standard bounded-variance estimator parameterization of SCO in the non-private setting.

\begin{definition}[Stochastic convex optimization]\label{def:sco}
Let $\xset \subset \R^d$ be compact and convex, with $\diam(\xset) = D$. In the \emph{stochastic convex optimization (SCO)} problem, there is a convex function $f: \xset \to \R$, and we  have query access to a stochastic oracle $g: \xset \to \R^d$ satisfying, for all $x \in \xset$,
\begin{align*}
\E\Brack{g(x)} \in \partial f(x), \;
\E\Brack{\norm{g(x)}^2} \le G^2.
\end{align*}
For a convex function $\psi: \xset \to \R$, our goal in SCO is to optimize the composite function $f + \psi$.
\end{definition}

For instance, one can set $\psi$ to the constant zero function to recover the non-composite variant of SCO. We include the composite variant of Definition~\ref{def:sco} as it is a standard extension in the SCO literature, under the assumption that the function $\psi$ is ``simple.'' The specific notion of simplicity we use is that $\psi: \xset \to \R$ admits an efficient \emph{proximal oracle} (Definition~\ref{def:prox}).

\begin{definition}[Proximal oracle]\label{def:prox}
Let $\xset \subset \R^d$ be compact and convex. We say $\oracle$ is a \emph{proximal oracle} for a convex function $\psi: \xset \to \R$ if for any inputs $v \in \R^d$, $\eta \in \R_{\ge 0}$, $\oracle(v)$ returns
\[\argmin_{x \in \xset}\Brace{\frac 1 {2\eta} \norm{x - v}^2 + \psi(x)}.\]
\end{definition}

In Theorem~\ref{thm:sco_nonprivate}, we give an algorithm which uses $n$ queries to each of $g$ and a proximal oracle for $\psi$, and achieves an error bound for $f + \psi$ of
\begin{equation}\label{eq:hp_rate_sco}O\Par{GD \cdot \sqrt{\frac{\log \frac 1 \delta}{n}}},\end{equation}
with probability $\ge 1 - \delta$. Similar rates are straightforward to derive using martingale concentration when the estimator $g$ is assumed to satisfy heavier tail bounds, such as a sub-Gaussian norm. To our knowledge, the rate \eqref{eq:hp_rate_sco} was first attained recently by \cite{CarmonH24}, who also proved a matching lower bound. Our Theorem~\ref{thm:sco_nonprivate} gives an alternative route to achieving this error bound.
As was the case in several recent works in the literature \cite{HsuS16, DavisDXZ21, Liang24} who studied high-probability variants of stochastic convex optimization, our Theorem~\ref{thm:sco_nonprivate} is based on using geometric aggregation techniques within a proximal point method framework (in our case, using Fact~\ref{fact:geom_agg} within Algorithm~\ref{alg:sco}). However, these aforementioned prior works all assume additional smoothness bounds on the function $f$.

We use the following standard result in the literature as a key subroutine.

\begin{lemma}[Lemma 1, \cite{AsiCJJS21}]\label{lem:odc}
In the setting of Definition~\ref{def:sco}, assume $\psi$ is $\lambda$-strongly convex, let $x^\star \defeq \argmin_{x \in \xset} f(x) + \psi(x)$, and let $T \in \N$. There is an algorithm which queries the stochastic oracle $g$ and a proximal oracle for $\psi$ each $T$ times, and produces $\bar{x}$ satisfying, with probability $\ge \frac 4 5$,
\[\norm{\bar x - x^\star} \le \frac{30G}{\lam \sqrt{T}}.\]
\end{lemma}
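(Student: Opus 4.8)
The plan is to realize the oracle promised by Lemma~\ref{lem:odc} via a \emph{stochastic proximal--gradient} (composite SGD) method that uses the strong convexity of $\psi$ to get a $1/T$ rate. Starting from an arbitrary $x_1 \in \xset$, for $t = 1, \dots, T$ I would set
\[
x_{t+1} = \argmin_{x \in \xset}\Brace{\inprod{g(x_t)}{x} + \psi(x) + \frac{1}{2\eta_t}\norm{x - x_t}^2},
\]
which is precisely one call to the proximal oracle $\oracle$ on input $x_t - \eta_t g(x_t)$ with parameter $\eta_t$; thus each step uses one fresh query to $g$ and one to $\oracle$, i.e.\ $T$ of each over the run, and the algorithm returns the last iterate $\bar x = x_{T+1}$. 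The step sizes are taken as $\eta_t = \frac{2}{\lam(t+1)}$. The analysis rests on two structural facts: (i) the map $\oracle(\cdot)$ run with parameter $\eta_t$ is $\frac{1}{1+\eta_t\lam}$-Lipschitz, since it is the proximal map of the $\lam$-strongly convex function obtained by adding the convex indicator of $\xset$ to $\psi$; and (ii) $x^\star$ is a fixed point of this map applied to $x^\star - \eta_t\,\overline g$, where $\overline g := \E[g(x^\star)] = \nabla f(x^\star)$, which is just the first-order optimality condition for $\argmin_{\xset}(f+\psi)$.

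Given these, I would track the potential $\Phi_t := \E\norm{x_t - x^\star}^2$. Applying the contraction (i) with the update $x_{t+1}$ and the fixed point $x^\star$, squaring, and conditioning on the history: the cross term is $-2\eta_t\inprod{x_t - x^\star}{\,\E[g(x_t)\mid\text{history}] - \overline g}$, which is $\le 0$ by monotonicity of the subdifferential of $f$; and the second-order term is at most $\eta_t^2\,\E\norm{g(x_t) - \overline g}^2 \le 4\eta_t^2 G^2$, using $\E\norm{g(x_t)}^2 \le G^2$ and $\norm{\overline g}^2 \le \E\norm{g(x^\star)}^2 \le G^2$. This gives $\Phi_{t+1} \le (1+\eta_t\lam)^{-2}(\Phi_t + 4\eta_t^2 G^2)$, which for $\eta_t = \frac{2}{\lam(t+1)}$ rearranges to $(t+3)^2\Phi_{t+1} \le (t+1)^2\Phi_t + \frac{16G^2}{\lam^2}$. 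Telescoping yields $\Phi_T \le \frac{4D^2}{(T+1)^2} + \frac{16G^2}{\lam^2(T+1)} = O(G^2/(\lam^2 T))$, the $D^2/T^2$ term from the initial distance being lower order (and when it is not, the trivial bound $\norm{x_1 - x^\star}\le D$ already suffices for such small $T$).

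The final step turns the in-expectation bound into a high-probability one: Markov's inequality applied to the nonnegative variable $\norm{\bar x - x^\star}^2$ gives, with probability $\ge\frac45$, $\norm{\bar x - x^\star}^2 \le 5\,\E\norm{\bar x - x^\star}^2$, so $\norm{\bar x - x^\star} = O(G/(\lam\sqrt T))$; carrying the explicit constants through the telescoping and the factor-$5$ Markov loss lands at the stated $\frac{30G}{\lam\sqrt T}$. The main obstacle will be bookkeeping rather than conceptual work: one must set up the \emph{constrained} composite proximal step so that the strong convexity of $\psi$ cleanly produces the $\frac{1}{1+\eta_t\lam}$ contraction while the constraint set is correctly folded into the proximal term (recall $f$ supplies only convexity, not strong convexity), handle the negligible $O(\eta_t G)$ gap between $\E g(x^\star)$ and the exact minimizer fixed point when $f$ fails to be differentiable at $x^\star$, and propagate explicit numerical constants; since $30$ is generous there is ample slack, so no sharp estimate is needed.
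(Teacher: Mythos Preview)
The paper does not prove this lemma: it is quoted verbatim from \cite{AsiCJJS21} as a black box, so there is nothing in the paper to compare your argument against. Your proximal-SGD sketch with step sizes $\eta_t=\frac{2}{\lambda(t+1)}$, the $\frac{1}{1+\eta_t\lambda}$-contraction of the prox map, and Markov's inequality at the end is a perfectly reasonable route to the $O(G/(\lambda\sqrt{T}))$ rate.

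There is, however, one genuine gap in your bookkeeping. Your recursion telescopes to $\Phi_{T+1}\le \frac{4\Phi_1}{(T+2)^2}+\frac{16G^2}{\lambda^2(T+2)}$, and you dismiss the first term by saying that ``the trivial bound $\norm{x_1-x^\star}\le D$ already suffices for such small $T$.'' This is not true in general: the trivial bound $D$ meets the target $\frac{30G}{\lambda\sqrt T}$ only when $T\le \frac{900G^2}{\lambda^2 D^2}$, whereas the $D^2/T^2$ term dominates the $G^2/(\lambda^2 T)$ term whenever $T\lesssim \frac{\lambda^2 D^2}{G^2}$. If $\lambda D/G$ is large, there is a whole intermediate range of $T$ where neither branch of your argument covers the claim. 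The fix is easy and costs one additional proximal-oracle call: because $f$ is $G$-Lipschitz (from $\norm{\E g(x)}\le G$) and $\psi$ is $\lambda$-strongly convex, the unconstrained minimizer $y^\star:=\argmin_{\xset}\psi$ satisfies $\norm{y^\star-x^\star}\le \frac{2G}{\lambda}$ (exactly the Lemma~\ref{lem:restrict_radius} argument), and a single prox call with large parameter $\eta$ (e.g.\ $\eta=\frac{D^2\lambda}{G^2}$) returns a point within $\frac{G}{\lambda}$ of $y^\star$, hence within $\frac{3G}{\lambda}$ of $x^\star$. Initializing there gives $\Phi_1\le \frac{9G^2}{\lambda^2}$, the $D^2/T^2$ term becomes $O(G^2/(\lambda^2 T^2))$, and after the factor-$5$ Markov loss you land at roughly $\sqrt{260}\,G/(\lambda\sqrt{T})<\frac{30G}{\lambda\sqrt T}$, with room to spare.
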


We combine Lemma~\ref{lem:odc} with Proposition~\ref{prop:pop_localize} to obtain the following high-probability SCO algorithm.

\begin{theorem}\label{thm:sco_nonprivate}
Consider an instance of SCO, following notation in Definition~\ref{def:sco}, let $n \in \N$, $x^\star \defeq \argmin_{x \in \xset} f(x) + \psi(x)$, and $\delta \in (0, \half)$. There is an algorithm using $n$ queries to $g$ and a proximal oracle for $\psi$ and outputs $x \in \xset$ satisfying, for a universal constant $\Csco$, with probability $\ge 1 - \delta$,
\[f(x) + \psi(x) - f(x^\star) - \psi(x^\star) \le \Csco \cdot GD \cdot \sqrt{\frac{\log \frac 1 \delta}{n}}. \]
\end{theorem}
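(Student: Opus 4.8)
The plan is to instantiate the population-level localization scheme (Algorithm~\ref{alg:sco}) on the composite objective $h \defeq f + \psi$, implementing the oracle step \eqref{eq:localize_guarantee} by boosting the strongly convex composite solver of Lemma~\ref{lem:odc} to high probability via the geometric aggregation of Fact~\ref{fact:geom_agg}. At phase $i$, the subproblem is to approximately locate $x_i^\star \defeq \argmin_{x \in \xset}\{f(x) + \psi(x) + \frac{\lam_i}{2}\norm{x - x_{i-1}}^2\}$, which is an instance of composite SCO whose ``simple'' part $\psi(x) + \frac{\lam_i}{2}\norm{x - x_{i-1}}^2$ is $\lam_i$-strongly convex and admits a proximal oracle derived from the proximal oracle for $\psi$: completing the square in the two quadratic terms reduces a prox step for $\psi + \frac{\lam_i}{2}\norm{\cdot - x_{i-1}}^2$ with step size $\eta$ to a single prox step for $\psi$ at a reweighted point with rescaled step size $\frac{1}{1/\eta + \lam_i}$. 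Hence Lemma~\ref{lem:odc}, run with strong convexity parameter $\lam_i$ and $T_i$ oracle calls, returns with probability $\ge \frac45$ a point within $R_i \defeq \frac{30G}{\lam_i\sqrt{T_i}}$ of $x_i^\star$.

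To boost, I would run $J_i$ independent copies of this subroutine with fresh oracle queries, obtaining candidates $x_i^1, \dots, x_i^{J_i}$; a Chernoff bound gives that with probability $\ge 1 - \exp(-\Omega(J_i))$ at least $0.51 J_i$ of them lie within $R_i$ of $x_i^\star$, so Fact~\ref{fact:geom_agg} returns $x_i$ with $\norm{x_i - x_i^\star} \le 3R_i$. Choosing $T_i$ so that $3R_i = \frac{90G}{\lam_i\sqrt{T_i}} \le \frac{\Delta 4^i}{\lam_i}$, i.e.\ $T_i = \Theta(\frac{G^2}{\Delta^2 16^i})$ — a geometrically decaying schedule — makes \eqref{eq:localize_guarantee} hold, so the total oracle budget is $\sum_{i \in [I]} J_i T_i$. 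Taking $J_i = \Theta(\log \frac I \delta)$ and union bounding keeps this sum at $O(\frac{G^2}{\Delta^2}\log\frac 1 \delta)$ up to lower-order terms, and equating it to $n$ fixes $\Delta = \Theta(G\sqrt{\frac{\log(1/\delta)}{n}})$. Finally I would take $I = \Theta(\log \frac G \Delta) = \Theta(\log\frac n{\log(1/\delta)})$, which is large enough that $8^{-I} \lesssim \Delta/G$, so that in the optimized bound $2D\sqrt{G_1\Delta/8^I} + D\Delta$ of Proposition~\ref{prop:pop_localize} the first term is dominated by the second; here $G_1 \le G$ since $\norm{\E g(x)} \le (\E\norm{g(x)}^2)^{1/2} \le G$ by Jensen, and one checks that $16^I \ll G^2/\Delta^2$ so the schedule $\{T_i\}$ stays $\ge 1$ throughout. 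Invoking Proposition~\ref{prop:pop_localize} then yields $h(x_I) - h(x^\star) = O(D\Delta) = O(GD\sqrt{\log(1/\delta)/n})$ with probability $\ge 1 - \delta$, which is the claim.

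The routine obstacles are bookkeeping: checking the schedule $\{T_i\}$ remains a positive integer schedule, and ensuring the final error carries only a clean $\log\frac1\delta$ factor rather than $\log\frac{\log n}{\delta}$, which — exactly as in the proof of Theorem~\ref{thm:ht_psco} — is handled by treating the regime $\frac1\delta \gtrsim \mathrm{polylog}(n)$ with a uniform $J_i$ and the regime $\frac1\delta \lesssim \mathrm{polylog}(n)$ with staggered per-phase failure budgets $\delta_i \defeq \delta 2^{-i}$ and $J_i = \Theta(\log\frac{2^i}{\delta})$. The one genuinely delicate point is converting the terminal distance bound $\norm{x_I - x_I^\star} \lesssim \Delta/(\lam 8^I)$ into a function-value bound for $h = f + \psi$ in the last step of Proposition~\ref{prop:pop_localize}: its telescoping identity only uses $h(x_i^\star) \le h(x_{i-1}^\star) + \frac{\lam_i}{2}\norm{x_{i-1}^\star - x_{i-1}}^2$ (immediate from optimality of $x_i^\star$, requiring no Lipschitzness) together with a bound on $h(x_I) - h(x_I^\star)$, and since $f$ is only $G$-Lipschitz while $\psi$ need not be, I would invoke the function-value form of the proximal-point guarantee underlying Lemma~\ref{lem:odc} in the final phase — which controls the suboptimality of $h + \frac{\lam_I}{2}\norm{\cdot - x_{I-1}}^2$ directly, and whose distance bound is a consequence — to sidestep any appeal to Lipschitzness of $\psi$.
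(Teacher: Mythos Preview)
Your proposal is essentially the paper's own proof: instantiate Proposition~\ref{prop:pop_localize} on $f+\psi$, implement each phase by running $J_i$ independent copies of Lemma~\ref{lem:odc} (with the regularizer $\frac{\lam_i}{2}\norm{\cdot-x_{i-1}}^2$ absorbed into $\psi$, whose proximal oracle follows by completing the square) and aggregating via Fact~\ref{fact:geom_agg}, then handle the two $\delta$-regimes exactly as in Theorem~\ref{thm:ht_psco}. You are in fact more careful than the paper on the terminal step, correctly flagging that $\psi$ need not be Lipschitz so that the bound $\Fpop(x_I)-\Fpop(x_I^\star)\le G_1\norm{x_I-x_I^\star}$ inside Proposition~\ref{prop:pop_localize} requires justification --- a point the paper glosses over by simply writing ``$G_1\le G$.''
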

\begin{proof}
Assume without loss of generality that $\frac 1 \delta$ is a sufficiently large constant (else we can adjust the constant factor $\Csco$), and that $n$ is sufficiently larger than $\log \frac 1 \delta$ (else the result holds because the range of the function is bounded by $GD$). We instantiate Proposition~\ref{prop:pop_localize} with $\Fpop \gets f + \psi$, $I \gets \half \log_2 n$, and in each phase $i \in [I]$ of Algorithm~\ref{alg:sco}, we let $n_i \defeq \frac{n}{2^i}$. In the remainder of the proof, we describe how to implement \eqref{eq:localize_guarantee} in the $i^{\text{th}}$ phase, where $\Fpop \gets f + \psi$, splitting into cases. 

If $\frac 1 \delta$ is bounded by $\textup{polylog}(n)$ and $n$ is sufficiently large, suppose that $n$ is a power of $4$, else we can use fewer queries and lose a constant factor in the guarantee. Then we can use a batch of $n_i$ consecutive queries, divided into $48\log(\frac 1 {\delta_i})$ portions, where $\delta_i \defeq \frac \delta {2^i}$. We then use Lemma~\ref{lem:odc} on each portion of queries, with $f \gets f$ and $\psi \gets \psi + \frac{\lam_i} 2 \norm{\cdot - x_{i - 1}}^2$; it is straightforward to see that Definition~\ref{def:prox} generalizes to give a proximal oracle for this new  $\psi$. A Chernoff bound shows that at least $\frac 3 5$ of the portions will return a point satisfying the bound in Lemma~\ref{lem:odc} except with probability $\delta_i$, so Fact~\ref{fact:geom_agg} returns us a point at distance at most $\frac{90G}{\lam \sqrt{T}}$ from $x^\star_i$, where 
\[T = \Omega\Par{\frac{n_i}{\log \frac 1 {\delta_i}}} = \Omega\Par{\frac n {2^i \Par{\log \frac 1 \delta + i}}},\]
(accounting for rounding error). Therefore, \eqref{eq:localize_guarantee} holds with
\[\Delta = \frac{\Csco}{2} \cdot G \cdot \sqrt{\frac{\log \frac 1 \delta}{n}},\]
for sufficiently large $\Csco$. Proposition~\ref{prop:pop_localize} then implies that Algorithm~\ref{alg:sco} outputs $x$ satisfying
\[f(x) + \psi(x) - f(x^\star) - \psi(x^\star) \le 2GD \cdot \sqrt{\frac{\Delta}{n^{1.5}}}+ \frac{\Csco}{2} \cdot GD \cdot \sqrt{\frac{\log \frac 1 \delta}{n}} \le \Csco \cdot GD \cdot \sqrt{\frac{\log \frac 1 \delta}{n}}, \]
where we use that $G_1 \le G$ by Jensen's inequality and our second moment bound in Definition~\ref{def:sco}. The failure probability follows from a union bound because we ensured that $\sum_{i \in [I]} \delta_i \le \delta$.

Finally, if $\frac 1 \delta$ is larger than $\textup{polylog}(n)$, then we let $I, J \in \N$ be chosen such that 
\[I \defeq \left\lfloor \log_2\Par{\frac n J}\right\rfloor,\; J \ge 48\log\Par{\frac{I}{\delta}},\]
which is achievable with $I = O(\log n)$ and $J = O(\log \frac{\log n}{\delta}) = O(\log \frac 1 \delta)$. Let $m \defeq \frac n J$, and assume without loss that $m$ is a power of $2$, which we can guarantee by discarding $\le \half$ our queries, losing a constant factor in the error bound. The remainder of the proof follows identically to the first part of this proof, where we union bound over $I$ phases, the $i^{\text{th}}$ of which uses $J$ batches of $\frac m {2^i}$ unused queries. Again we may apply Lemma~\ref{lem:odc} and Fact~\ref{fact:geom_agg} with $T = \frac{m}{2^i}$, so \eqref{eq:localize_guarantee} holds with
\[\Delta = \frac{\Csco}{2} \cdot G \cdot \sqrt{\frac{\log \frac 1 \delta}{n}},\]
except with probability $\frac{\delta}{I}$. The conclusion then follows from Proposition~\ref{prop:pop_localize}.
\end{proof}
\section{Non-contraction of truncated contractive steps}\label{sec:noncontract}

In this section, we demonstrate that a natural conjecture related to the performance of clipped private gradient algorithms in the smooth setting is false. We state this below as Conjecture~\ref{conj:truncate_contract}. To motivate it, suppose $v$ is the difference between a current pair of coupled iterates of a private gradient algorithm instantiated on neighboring datasets, and suppose the differing sample function has already been encountered. If we take a coupled gradient step in a sufficiently smooth function, Fact~\ref{fact:contract_smooth} shows that the step is a contraction. However, to preserve privacy in the heavy-tailed setting, it is natural to ask whether such a contractive step remains contractive after the gradients are clipped, i.e.\ the statement of Conjecture~\ref{conj:truncate_contract} (which gives the freedom for $C$ to be lower bounded).

\begin{conjecture}\label{conj:truncate_contract}
Let $\norm{v}_2 \le C$ for a sufficiently large constant $C$, and let $\norm{v - (g - h)} \le \norm{v}$. Let $g' = \Pi_1(g)$ and $h' = \Pi_1(h)$.\footnote{By scale-invariance of the claim, the assumption that the truncation threshold is $1$ is without loss of generality.} Then, $\norm{v - (g' - h')}\le C$.
\end{conjecture}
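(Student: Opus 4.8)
The plan is to \emph{refute} Conjecture~\ref{conj:truncate_contract}, in keeping with the stated goal of this section, by exhibiting an explicit two-dimensional counterexample (which embeds trivially into any $\R^d$). The key observation is that the hypothesis $\norm{v - (g - h)} \le \norm{v}$ is equivalent to $\inprod{v}{g - h} \ge \frac12 \norm{g - h}^2$, so it only forces $g - h$ to have a substantial \emph{positive component along} $v$; but clipping replaces $g - h$ by $g' - h'$, and when $\norm{g}, \norm{h} \ge 1$ this becomes the difference of two \emph{unit} vectors, whose direction we can arrange to be essentially orthogonal to $v$. Since the conclusion $\norm{v - (g' - h')} \le \norm{v}$ is likewise equivalent to $\inprod{v}{g'-h'} \ge \frac12\norm{g'-h'}^2$, a direction $g'-h'$ orthogonal to $v$ immediately breaks it (as long as $\norm{v}$ is close to $C$).

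Concretely, I would take $v = (C, 0)$, so $\norm{v} = C$, and pick an angle $\phi$ and scalars $a \ge b \ge 1$, setting $g = a(\cos\phi, \sin\phi)$ and $h = b(\cos\phi, -\sin\phi)$. Then $g' = \Pi_1(g) = (\cos\phi, \sin\phi)$ and $h' = \Pi_1(h) = (\cos\phi, -\sin\phi)$, so $g' - h' = (0, 2\sin\phi)$ is orthogonal to $v$, giving $\norm{v - (g'-h')}^2 = C^2 + 4\sin^2\phi > C^2$, which already violates the conclusion. It remains only to verify the hypothesis: since $g - h = ((a-b)\cos\phi,\, (a+b)\sin\phi)$, the condition $\norm{v - (g-h)} \le \norm{v}$ reduces to $2C(a-b)\cos\phi \ge (a-b)^2\cos^2\phi + (a+b)^2\sin^2\phi$. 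Choosing for instance $\phi = \pi/6$, $a = 3$, $b = 1$ turns this into $2\sqrt{3}\, C \ge 7$, valid for every $C \ge 7/(2\sqrt3)$ --- in particular for every sufficiently large constant $C$ --- while yielding $\norm{v - (g'-h')} = \sqrt{C^2 + 1} > C = \norm{v}$, contradicting the claimed $\norm{v - (g'-h')} \le C$.

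The only real work is the constraint-juggling in the last step: one needs $a, b \ge 1$ so that clipping actually engages (and $g' - h'$ becomes the normalized difference), while simultaneously satisfying the contraction inequality with $\sin\phi$ bounded away from zero, so that the gap $4\sin^2\phi$ is a genuine positive constant. A single explicit choice as above suffices; one could instead optimize over $(\phi, a, b)$ to push $\norm{g' - h'}$ toward its maximum value $2$, driving the violation toward $\norm{v - (g'-h')} \to \sqrt{C^2 + 4}$, at the cost of a larger threshold on $C$, but this is unnecessary for the refutation. I do not anticipate any genuine obstacle: the content of the section is precisely that clipping destroys the contraction property of Fact~\ref{fact:contract_smooth}, and this small example makes that quantitative.
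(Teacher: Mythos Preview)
Your refutation is correct. Both you and the paper give explicit two-dimensional counterexamples, but the constructions differ in a way worth noting.

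Your idea is to clip \emph{both} $g$ and $h$ (taking $\norm{g},\norm{h}>1$) and to choose their directions symmetric about the $v$-axis, so that after normalization $g'-h'$ is exactly orthogonal to $v$; the violation $\norm{v-(g'-h')}^2 = C^2 + 4\sin^2\phi > C^2$ then drops out of Pythagoras. This is geometrically very clean, and your verification of the hypothesis $2C(a-b)\cos\phi \ge (a-b)^2\cos^2\phi + (a+b)^2\sin^2\phi$ with $(\phi,a,b)=(\pi/6,3,1)$ is correct. The cost is that your counterexample only kicks in once $C \ge 7/(2\sqrt 3)$, which is enough to refute the conjecture as stated (``for a sufficiently large constant $C$'').

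The paper instead clips only $h$: it takes $v=(-C,0)$, $g=(1,0)$ (already unit-norm, so $g'=g$), and chooses $h$ of length $\sqrt{2C+1}$ so that $\norm{v-(g-h)}=C$ holds with equality. After clipping, the first coordinate of $v-(g'-h')$ alone already falls below $-C$. This construction is less symmetric but is parametrized by $C$ and works for \emph{every} $C\ge 0$, which is the slightly stronger statement the paper actually proves (Lemma~\ref{fact:conj_false}). Your approach buys conceptual transparency; the paper's buys the full range of $C$.
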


We strongly refute Conjecture~\ref{conj:truncate_contract}, by disproving it for any $C \ge 0$. We remark that Lemma~\ref{fact:conj_false} does not necessarily rule out this approach to designing heavy-tailed DP-SCO algorithms in the smooth regime, but demonstrates an obstacle if additional structure of gradients is not exploited.

\begin{lemma}\label{fact:conj_false}
Conjecture~\ref{conj:truncate_contract} is false for any choice of $C \ge 0$.
\end{lemma}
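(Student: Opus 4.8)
The goal is to exhibit, for every $C \ge 0$, a choice of $v, g, h \in \R^d$ with $\norm{v} \le C$ and $\norm{v - (g - h)} \le \norm{v}$, yet $\norm{v - (g' - h')} > C$ where $g' = \Pi_1(g)$, $h' = \Pi_1(h)$. The driving intuition is that clipping is not a linear operation: if $g$ and $h$ both have large norm but point in very different directions, then $g - h$ can be a small vector (so the contraction hypothesis $\norm{v - (g-h)} \le \norm{v}$ is easy to satisfy), while after clipping both to norm $1$ the difference $g' - h'$ can be an entirely different, potentially large, vector pointing in a bad direction relative to $v$. So the plan is to decouple ``$g - h$ is small and well-aligned with $v$'' from ``$g' - h'$ is well-aligned with $v$'' by sending $\norm{g}, \norm{h} \to \infty$.

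Concretely, I would work in $\R^2$ (or even just track the relevant $2$-dimensional subspace). First I would pick $v$ so that $\norm{v} = C$ (or $\norm v$ close to $C$; if $C = 0$ a degenerate construction with $v = \0$ works, since then we need $g' = h'$ but can take $g, h$ large in different directions with $g - h$ small). For $C > 0$: let $v = C e_1$. I want $g - h$ to be a \emph{small} multiple of $v$ so that $v - (g-h) = (1 - t)v$ has norm $\le \norm v$ for $t \in [0,2]$; the simplest choice is $g - h = v$ itself, giving $\norm{v - (g-h)} = 0 \le \norm v$. Now I need to realize $g - h = v = C e_1$ with $\norm g = \norm h = M$ for $M$ huge: write $g = \frac{C}{2}e_1 + w$, $h = -\frac C2 e_1 + w$ where $w$ is chosen (e.g. $w = \alpha e_2$ with $\alpha$ large) so that $\norm g = \norm h = M$; indeed $\norm g^2 = \frac{C^2}{4} + \alpha^2 = M^2$ is solvable for any $M \ge C/2$. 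As $M \to \infty$, $\alpha \to \infty$. Then $g' = g/M$ and $h' = h/M$, so $g' - h' = (g - h)/M = \frac C M e_1 = \frac{v}{M} \to \0$, which would make $\norm{v - (g' - h')} \to \norm v = C$, \emph{not} exceeding $C$ — so this exact construction is too symmetric and I need to perturb it.

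The fix is to break the symmetry so that $g'$ and $h'$ differ by a vector with a component \emph{along} $v$ of the wrong sign, or more simply a nonzero transverse component that pushes the norm above $C$. I would instead take $g = \frac{C}{2}e_1 + \beta e_2$ and $h = -\frac{C}{2}e_1 + \gamma e_2$ with $\beta \ne \gamma$ both large, chosen so that $\norm g = \norm h$ still fails — wait, they won't have equal norm. Better: keep $g - h$ equal to some vector $u$ with $\norm{v - u} \le \norm v$ but $u$ \emph{not} parallel to $v$, say $u = v + \epsilon e_2$ with $\epsilon$ tiny (still $\norm{v - u} = \epsilon \le \norm v$), while arranging $\norm g = \norm h = M \to \infty$ with $g, h$ pointing in nearly opposite directions transverse to $e_1$. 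Then $g' - h' = u/M$ only if $\norm g = \norm h$; in general $g' - h' = g/\norm g - h/\norm h$, and by choosing $\norm g = M$, $\norm h = M$ but with the transverse parts of $g$ and $h$ having opposite signs ($g \approx Me_2$, $h \approx -Me_2$, plus the small $\pm \frac C2 e_1$ corrections), we get $g' - h' \approx 2 e_2$ — a vector of norm $\approx 2$ pointing purely in the $e_2$ direction, \emph{orthogonal} to $v = Ce_1$. Hence $\norm{v - (g' - h')}^2 \approx C^2 + 4 > C^2$, refuting the conjecture. I would then check the contraction hypothesis: $g - h = (C e_1) + (\text{transverse terms that nearly cancel})$; choosing $\norm g = \norm h$ forces the transverse parts to be exactly negatives of each other only if the $e_1$-parts are equal-and-opposite, which they are ($\pm C/2$), so $g - h = C e_1 + (\text{diff of transverse parts})$. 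If $g$'s transverse part is $\alpha e_2$ and $h$'s is $-\alpha e_2$ (same $\alpha$, so norms match), then $g - h = C e_1 + 2\alpha e_2$, and $\norm{v - (g-h)} = \norm{-2\alpha e_2} = 2\alpha$, which is \emph{larger} than $\norm v = C$ for large $\alpha$ — violating the hypothesis. So I must instead make the transverse parts nearly equal for the hypothesis (so $g - h \approx v$) but nearly opposite after normalization — impossible simultaneously for $\norm g = \norm h$.

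\textbf{Main obstacle and resolution.} The real tension, and the step I expect to require the most care, is exactly this: making $g - h$ small/aligned with $v$ while making $g' - h'$ large/misaligned, \emph{consistently}. The clean way out is to \emph{not} require $\norm g = \norm h$. Take $h = \0$ (so $\norm h = 0 \le 1$, hence $h' = h = \0$ under the convention $\Pi_1(x) = x\min(1/\norm x, 1)$), and $g$ with $\norm g$ possibly small too — but then clipping does nothing. So instead: take $h$ fixed with $\norm h \le 1$, say $h = h'$, and let $g$ be large. Then the hypothesis is $\norm{v - (g - h)} \le \norm v$, i.e. $g$ lies in a ball of radius $\norm v = C$ around $v + h$; since $\norm{v + h} \le C + 1$, we can pick $g$ with $\norm g$ anywhere in $[\,|{\norm{v+h} - C}|, \norm{v+h}+C\,] \subseteq [0, 2C+1]$ — bounded, so clipping to $1$ does something but not dramatically. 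Pushing $\norm g$ large is incompatible with the hypothesis ball having bounded radius $C$. This shows the refutation cannot come from one vector alone being large; \emph{both} $g$ and $h$ must be large, with $g - h$ small, and that is precisely the symmetric regime where $g' - h' = (g-h)/M$ is \emph{also} small. I therefore expect the correct construction to exploit a \emph{second-order} effect: take $\norm g = \norm h = M$ with $g - h = u$ where $\norm u \le \norm{v}$, $u$ close to $v$; write $g = \frac u2 + w$, $h = -\frac u2 + w$ with $w \perp u$ and $\norm w^2 = M^2 - \norm u^2/4$. Then $g/\norm g - h/\norm h = \frac{g - h}{M} = u/M$, which vanishes — so genuinely this route fails, and the honest conclusion is that the refutation uses $\norm g \ne \norm h$ but with \emph{both} moderately sized and cleverly placed: e.g. $v = C e_1$, $g = (C/2 + \delta) e_1 + \sqrt{2C\delta} e_2$, $h = (-C/2 + \delta)e_1 + \sqrt{2C\delta}e_2$ for small $\delta > 0$ — then $g - h = C e_1 = v$ exactly (hypothesis holds with equality $0 \le C$), while $\norm g^2 = (C/2+\delta)^2 + 2C\delta$ and $\norm h^2 = (C/2 - \delta)^2 + 2C\delta$ differ, so $g' - h' = g/\norm g - h/\norm h \ne u/\norm\cdot$; a direct computation shows $g/\norm g - h/\norm h$ acquires an $e_1$-component strictly larger than $C/\min(\norm g,\norm h) $ times something, and in fact letting $\delta \to 0$ while tracking the expansion, $\norm{v - (g'-h')}$ exceeds $C$. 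The careful Taylor/algebra bookkeeping in this last step — verifying the inequality is strict for a concrete choice of parameters and all $C \ge 0$, including $C = 0$ (handled separately by $v = \0$, $g = e_1$, $h = -e_1 + \epsilon e_2$: then $g - h = 2e_1 - \epsilon e_2$, $\norm{v - (g-h)} = \norm{g-h} \not\le 0$ — so even $C=0$ needs thought, likely $g' = h'$ forced, meaning one must instead note $\norm v = 0 \le C$ trivially and just need $g' \ne h'$ to violate $\norm{v - (g'-h')} \le C = 0$, which is immediate with any $g \ne h$ of norm $\ge 1$ in different directions while $g - h$ chosen parallel... but $g - h$ parallel to $v = \0$ means $g = h$, contradiction — so for $C = 0$ the hypothesis $\norm{v-(g-h)} \le 0$ forces $g = h$, forces $g' = h'$, and the conjecture is \emph{true} at $C=0$!). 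Hence I'd restrict attention to $C > 0$, or to ``sufficiently large $C$'' as the conjecture states, and present the perturbative construction above with the constants made explicit; the crux is the final inequality verification, which I would do by choosing a specific $\delta$ (e.g. $\delta = $ a small constant times $C$, or taking the $\delta \to 0^+$ limit and invoking continuity/strictness) and computing $\norm{v - (g' - h')}^2 - C^2 > 0$ directly.
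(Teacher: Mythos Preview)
Your proposal has a genuine gap: the final construction you settle on does not refute the conjecture for large $C$. With $v = Ce_1$, $g = (\tfrac{C}{2}+\delta)e_1 + \sqrt{2C\delta}\,e_2$, $h = (-\tfrac{C}{2}+\delta)e_1 + \sqrt{2C\delta}\,e_2$, and $C > 2$ (so both vectors are clipped), one computes $g' \to e_1$ and $h' \to -e_1$ as $\delta \to 0$, hence $g' - h' \to 2e_1$ and $\norm{v - (g'-h')} \to C - 2 < C$. Tracking the $\delta$-corrections does not rescue this: the first coordinate of $v - (g'-h')$ is $C - 2 + O(\delta/C)$ and the second is $O(\delta^{3/2}C^{-3/2})$, so the norm stays below $C$ for all small $\delta$. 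Your intended ``careful Taylor/algebra bookkeeping'' would reveal this, not the opposite.

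The place you went astray is earlier, when you briefly considered fixing one of $g,h$ to have norm $\le 1$ (hence unclipped) and letting the other be larger. You correctly observed that the hypothesis confines the large vector to a ball of radius $C$, so its norm is at most $2C+1$, and then dismissed this as ``clipping to $1$ does something but not dramatically.'' This is precisely the regime the paper exploits. It takes $v = (-C,0)$, $g = (1,0)$ (so $g' = g$), and $h$ with $\norm{h} = \sqrt{2C+1}$, oriented so that $\norm{v - (g-h)} = C$ exactly; after clipping $h$ to $h'$, the first coordinate of $v - (g' - h')$ equals $-(C+1) + \tfrac{\sqrt{2C+1}}{C+1}$, which is strictly less than $-C$ for every $C > 0$, so the norm exceeds $C$ already from one coordinate. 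The point you missed is that you only need to exceed $C$ by \emph{some} positive amount, and a clipping-induced shift of order $1$ in the direction of $v$ is enough; the shift need not be comparable to $C$. (Your side observation that the conjecture holds vacuously at $C=0$ is correct and the paper's construction also degenerates there, but this is immaterial since the conjecture is stated for sufficiently large $C$.)
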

\begin{proof}
We give a $2$-dimensional counterexample. Let 
\begin{align*}
v = \begin{pmatrix}
    -C \\ 0
\end{pmatrix},\; g = \begin{pmatrix} 1 \\ 0\end{pmatrix},\; h = \begin{pmatrix} \frac{2C + 1}{C + 1} \\ \frac{C\sqrt{2C + 1}}{C + 1}\end{pmatrix} = \sqrt{2C + 1} \underbrace{\begin{pmatrix} \frac{\sqrt{2C + 1}}{C + 1} \\ \frac{C}{C + 1} \end{pmatrix}}_{\defeq h'}.
\end{align*}
Observe that
\[v - (g - h) = \begin{pmatrix} -(C + 1) + \frac{2C + 1}{C + 1} \\ \frac{C\sqrt{2C + 1}}{C + 1} \end{pmatrix} = \begin{pmatrix} \frac{-C^2}{C + 1} \\ \frac{C\sqrt{2C + 1}}{C + 1} \end{pmatrix} = C \begin{pmatrix} \frac{-C}{C + 1} \\ \frac{\sqrt{2C + 1}}{C + 1} \end{pmatrix}. \]
It is easy to verify $\norm{v - (g - h)} = C$ at this point. Moreover,
\[v - (g' - h') = \begin{pmatrix} -(C + 1) + \frac{\sqrt{2C + 1}}{C + 1} \\ \frac{C}{C + 1} \end{pmatrix}. \]
For $C \ge 0$, the first coordinate of this vector is already less than $-C$.
\end{proof}
\section{Non-decay of empirical squared bias}\label{sec:variancedecay}

In this section, we present an obstacle towards a natural approach to improving the logarithmic terms in our algorithm in Section~\ref{sec:lose_log}. We follow the notation of Section~\ref{ssec:subproblem}, i.e.\ for samples $\{i \equiv s_i\}_{i \in [n]} \sim \calP^n$, we define sample functions $f_i \equiv f(\cdot; s_i)$, and let
\begin{equation}\label{eq:bd_def}
b_{\cal{D}} \defeq \max_{ x \in \mc{X}} \norm{\frac{1}{n} \sum_{i \in [n]} \nabla  f_i(x)  - \frac{1}{n} \sum_{i \in [n]} \Pi_C(\nabla f_i(x)) }.
\end{equation}
A basic bottleneck with known approaches following SCO-to-ERM reductions is that they require a strongly convex ERM solver as a primitive, due to known barriers to generalization in SCO without strong convexity (see e.g.\ discussion in \cite{ShwartzShSrSr09}). This poses an issue in the heavy-tailed setting, because standard analyses of strongly convex clipped SGD (see e.g.\ our Proposition~\ref{prop:dp-erm}) appear to suffer a dependence on $b_{\calD}^2$ in the utility bound, which upon taking expectations requires bounding
\begin{equation}\label{eq:bdsquare}\E_{\calD \sim \calP^n} b_{\calD}^2 =  \E_{\calD \sim \calP^n}\Brack{ \max_{ x \in \mc{X}} \norm{\frac{1}{n} \sum_{i \in [n]} \nabla  f_i(x)  - \frac{1}{n} \sum_{i \in [n]} \Pi_C(\nabla f_i(x)) }^2}.\end{equation}
Recall from Lemma~\ref{lem:dbounds_constant_prob} that it is straightforward to bound $\E b_{\calD} \le \frac{G_k^k}{C^{k - 1}}$, due to Fact~\ref{fact:moment_bias_bound}. Bounding $\E b_{\calD}^2$ is more problematic; in \cite{LowyR23}, requiring this bound resulted in a dependence on $G_{2k}$ as opposed to $G_k$ (see the proof of Theorem 31), which we avoid (up to a polylogarithmic overhead) via our population-level localization strategy. We now present an alternative strategy to bound \eqref{eq:bdsquare}, avoiding a $G_{2k}$ dependence. Observe that, by using $(a + b + c)^2 \le 3(a^2 + b^2 + c^2)$,
\begin{equation}\label{eq:t123_def}
\begin{aligned}
\E_{\calD \sim \calP^n}b_{\calD}^2 &\le 3\underbrace{\E_{\calD \sim \calP^n}\Brack{ \max_{ x \in \mc{X}} \norm{\frac{1}{n} \sum_{i \in [n]} \nabla  f_i(x)  - \nabla \Fpop(x)}^2}}_{\defeq T_1} \\
&+3 \underbrace{\max_{ x \in \mc{X}} \norm{\nabla \Fpop(x)  - \E_{s \sim \calP}\Brack{\Pi_C(\nabla f(x; s))} }^2}_{\defeq T_2} \\
&+ 3\underbrace{\E_{\calD \sim \calP^n}\Brack{ \max_{ x \in \mc{X}} \norm{\frac{1}{n} \sum_{i \in [n]} \Pi_C(\nabla f_i(x)) - \E_{s \sim \calP}[\Pi_C(\nabla f(x; s))]}^2}}_{\defeq T_3}.
\end{aligned}
\end{equation}
We focus on $T_1$, as $T_3$ can be bounded by similar means (as truncation can only improve moment bounds), and $T_2 \le \frac{G_k^{2k}}{C^{2(k - 1)}}$ via Fact~\ref{fact:moment_bias_bound}. Hence, if we can show that $T_1 = O(\frac{G_2^2}{n})$ under the moment bound assumption in Assumption~\ref{assume:k_ht}, we can avoid the logarithmic factors lost by our population localization approach. We suggest the following conjecture as an abstraction of this bound.

\begin{conjecture}\label{conj:variancedecay}
Let $\calP$ be a distribution over $\calS$. For each $x \in \xset$, let $g(x; s) \in \R^d$ be a random vector, indexed by $s \sim \calS$, satisfying $\E_{s \sim \calP} [g(x; s)] = \0_d$ and $\E_{s \sim \calP} [\sup_{x \in \xset}\norms{g(x; s)}^2] \le 1$. Finally for $S \sim \calP^n$ and $x \in \xset$, let $g(x; S) \defeq \frac 1 n \sum_{s \in S} g(x; s)$. Then,
\[\E_{S \sim \calP^n}\Brack{\sup_{x \in \xset} g(x; S)^2} = O\Par{\frac 1 n}.\]
\end{conjecture}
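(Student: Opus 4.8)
The plan is to first attempt a direct proof of Conjecture~\ref{conj:variancedecay} along the standard route for bounding a supremum of an empirical process. Writing $\Phi(S) \defeq \sup_{x \in \xset}\norm{\frac 1 n \sum_{s \in S} g(x; s)}$, I would try to show $\E_S[\Phi(S)] = O(n^{-1/2})$ and then boost to $\E_S[\Phi(S)^2] = O(n^{-1})$ by a concentration argument (Efron--Stein, or bounded differences after a truncation, since $\norm{g(x;s)}$ is controlled only in expectation over $s$, not pointwise). Symmetrization reduces the bound on $\E_S[\Phi(S)]$ to the Rademacher complexity $\E_{S,\varepsilon}\Brack{\sup_x \norm{\frac 1 n \sum_i \varepsilon_i g(x; s_i)}}$ of the class $\{g(\cdot; s) : s \in \calS\}$, mapped through the norm; if that class had bounded metric entropy, a chaining bound would finish the job.

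The step I expect to be fatal is exactly this last one: the hypotheses place \emph{no} constraint on the complexity of $\{g(\cdot; s)\}$ — only the second moment bound $\E_s[\sup_x \norm{g(x;s)}^2] \le 1$, with no smoothness, no covering number control, and no restriction on $\dim(\xset)$. For a finite index set of size $M$ one already has $\E_S[\Phi(S)^2] = \Theta(\tfrac{\log M}{n})$ in the worst case, and for an infinite-complexity class the quantity can fail to decay at all. So rather than proving the conjecture, I would disprove it.

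For the disproof it suffices to exhibit a counterexample in dimension $d = 1$ with $\xset = [0,1]$. Take $\calP$ uniform on $[0,1) \times \{1, \dots, k\}$ with $k = k(n)$ a sufficiently large polynomial in $n$ (say $k = n^3$), and for $s = (u,j)$ set $g(x;s) \defeq w_j(x-u)$, where $x-u$ is taken mod $1$ and $w_j$ is the square wave whose $+1$-intervals have length $2^{-j}$ (so $w_j \in \{\pm 1\}$, is $1$-periodic, and $\int_0^1 w_j = 0$). Then $\E_s[g(x;s)] = 0$ and $\E_s[\sup_x g(x;s)^2] = 1$, so the hypotheses hold. For $S = \{(u_i, j_i)\}_{i \in [n]}$, a union bound on gaps of order statistics shows that with probability $\ge \tfrac 12$ the frequencies are distinct and pairwise separated by at least $2$; relabel so $j_1 < \dots < j_n$. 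Conditioned on this event, build a nested sequence $I_1 \supseteq I_2 \supseteq \dots \supseteq I_n$ greedily (lowest frequency first), where $I_\ell$ is a $+1$-interval of $w_{j_\ell}(\cdot - u_\ell)$ contained in $I_{\ell-1}$; since $|I_{\ell-1}| = 2^{-j_{\ell-1}} \ge 2^{2-j_\ell} = 2^{-j_\ell} + 2^{1-j_\ell}$ whenever $j_\ell \ge j_{\ell-1}+2$, such a subinterval always exists, so $I_n \ne \emptyset$. Any $x^\star \in I_n$ has $w_{j_i}(x^\star - u_i) = +1$ for all $i$, hence $g(x^\star; S) = 1$, so $\E_S[\sup_x g(x;S)^2] \ge \tfrac 12 = \Omega(1)$, contradicting the claimed $O(1/n)$ bound.

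The main technical content of the counterexample is this greedy nesting step together with the birthday-type estimate guaranteeing well-separated frequencies — everything else is bookkeeping. I would also note that since the construction lives in a fixed dimension, no bound of the form $C(d)/n$ can rescue the conjecture either: the obstruction is intrinsically the (unconstrained) complexity of the function class $\{g(\cdot; s)\}$, which is precisely why our population-level localization framework in Section~\ref{sec:lose_log} has to pay a polylogarithmic factor instead of relying on $\E_{\calD} b_{\calD}^2$ directly.
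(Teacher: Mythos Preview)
Your disproof works and your diagnosis --- no complexity control on $\{g(\cdot;s)\}$ --- is the right one, but the paper's refutation is far simpler: take $\calS=[0,1]$ with $\calP$ uniform, and let $\xset$ index the functions $[0,1]\to[0,1]$ that vanish outside a finite set. Each such $g(x;\cdot)$ is zero $\calP$-a.e., so the mean-zero and second-moment hypotheses hold trivially; yet for any finite sample $S=\{s_1,\dots,s_n\}$, choosing $x$ to be the function equal to $1$ exactly on $S$ gives $g(x;S)=1$ deterministically, so $\E_S[\sup_x g(x;S)^2]=1$ for every $n$. This is stronger than your argument in one respect: your $\calP$ depends on $n$ (through $k=n^3$), so you refute only the uniform-constant reading of $O(1/n)$, whereas the paper's single fixed $\calP$ exhibits no decay whatsoever. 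On the other hand, your $\xset=[0,1]$ is a genuine interval and each $g(\cdot;s)$ an honest bounded function of $x$, while the paper has to invoke a cardinality bijection to realize its function-indexed $\xset$ inside $\R^d$. (Minor arithmetic slip: $2^{2-j_\ell}=4\cdot 2^{-j_\ell}$, not $2^{-j_\ell}+2^{1-j_\ell}=3\cdot 2^{-j_\ell}$; the nesting still goes through since the latter bound is what you actually need.)
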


Note that the bound in Conjecture~\ref{conj:variancedecay} exactly corresponds to $T_1$ in \eqref{eq:t123_def}, after rescaling all sample gradients by $\frac 1 {G_2}$, and centering them by subtracting $\nabla \Fpop(x)$. Hence, if Conjecture~\ref{conj:variancedecay} is true, it would yield the following desirable bound in \eqref{eq:t123_def}:
\[\E_{\calD \sim \calP^n}b_{\calD}^2 = O\Par{\frac{G_2^2}{n} + \frac{G_k^k}{(k - 1)C^{k - 1}}}.\]

Moreover, it is simple to prove a bound of $O(1)$ on the right-hand side of Conjecture~\ref{conj:variancedecay}, and as $n \to \infty$ it is reasonable to suppose $g(x; S) \to \0_d$ for all $x \in \xset$.
Nonetheless, we refute Conjecture~\ref{conj:variancedecay} in full generality with a simple $1$-dimensional example.

\begin{lemma}\label{lem:variancenondecay}
Conjecture~\ref{conj:variancedecay} is false.
\end{lemma}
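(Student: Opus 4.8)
The plan is to refute Conjecture~\ref{conj:variancedecay} by exhibiting a concrete instance of its setup with $d=1$ for which $\E_{S\sim\calP^n}[\sup_{x\in\xset}g(x;S)^2]$ is bounded below by an absolute positive constant for every $n$, which is of course not $O(1/n)$. The instance is essentially the classical example showing that the uniform law of large numbers fails for the class of indicators of finite sets.

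Concretely, I would take $\calS = [0,1]$ with $\calP$ the uniform (Lebesgue) distribution, and take $\xset$ to be the collection of all finite subsets of $[0,1]$. For $x\in\xset$ (a finite set, which I also write as $x$) and $s\in[0,1]$, define the scalar $g(x;s)\defeq\ind_{s\in x}$.

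The verification proceeds in two short steps. First, check the hypotheses: every finite $x\subseteq[0,1]$ has Lebesgue measure zero, so $\E_{s\sim\calP}[g(x;s)] = \Pr_{s\sim\calP}[s\in x] = 0$ for every $x\in\xset$; and for every $s$ we have $\sup_{x\in\xset}\norm{g(x;s)}^2 \le 1$ (indeed it equals $1$, witnessed by $x = \{s\}$), so $\E_{s\sim\calP}[\sup_{x\in\xset}\norm{g(x;s)}^2]\le 1$. Second, lower-bound the left-hand side of the conjectured inequality: for any realized sample $S = \{s_1,\dots,s_n\}$, the set $x^\star\defeq\{s_1,\dots,s_n\}$ lies in $\xset$, and $g(x^\star;s_i) = \ind_{s_i\in x^\star} = 1$ for every $i\in[n]$, so $g(x^\star;S) = \frac1n\sum_{i\in[n]}1 = 1$. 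Since $g(x;S)\in[0,1]$ for every $x$, this shows $\sup_{x\in\xset}g(x;S)^2 = 1$ for every realization of $S$, hence $\E_{S\sim\calP^n}[\sup_{x\in\xset}g(x;S)^2] = 1$ for all $n$, contradicting the claimed $O(1/n)$ rate.

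I do not expect a genuine obstacle here. The only points that warrant a sentence of care are (a) measurability, which is immediate since $g(x;\cdot)$ is an indicator of a Borel set and the suprema appearing are identically equal to $1$; and (b) the (inessential) concern that one might insist $\xset\subseteq\R^d$: if so, one can fix any bijection $\phi$ from $[0,1]$ to the finite subsets of $[0,1]$ and instead define $g(x;s)\defeq\ind_{s\in\phi(x)}$ for $x\in[0,1]$, under which all of the above verifications go through verbatim. Conceptually, the refutation isolates why SCO-to-ERM reductions in the heavy-tailed setting cannot rely on a black-box squared-bias bound: the supremum over a sufficiently rich index set can be realized by a \emph{data-dependent} point (here $x=S$), so centering and per-sample $L^2$ boundedness of $g(\cdot;s)$ do not by themselves force $\sup_x g(x;S)$ to concentrate --- unlike in bounded-complexity classes such as the thresholds $\{x\mapsto\ind_{s<x}\}_{x\in[0,1]}$, where the Dvoretzky--Kiefer--Wolfowitz inequality gives decay at the rate $n^{-1/2}$.
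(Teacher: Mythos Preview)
Your proposal is correct and takes essentially the same approach as the paper: both use $\calS=[0,1]$ with the uniform distribution and let $\xset$ index functions that vanish outside a finite set, so that each $g(x;\cdot)$ has mean zero yet the data-dependent choice $x=\{s_1,\dots,s_n\}$ forces $\sup_{x}g(x;S)^2=1$ almost surely. The paper allows the functions to take arbitrary values in $[0,1]$ on their finite support while you use indicators, but this is an inessential difference; your handling of the ``$\xset\subseteq\R^d$'' concern via a bijection also mirrors the paper's footnote.
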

\begin{proof}
Let $\calS = [0, 1]$ and let $\calP$ be the uniform distribution over $\calS$. Let $\xset$ index a set of random $g(x; \cdot): [0, 1] \to [0, 1]$ which are nonzero at finitely many points $s \in \calS$.\footnote{We note there is a bijection between $\xset$ and any convex subset $\xset'$ of $\R^d$ containing a ball with nonzero radius. To see this, it is well-known that there is a bijection from $[0, 1]$ to $\R_{\ge 0}$, and we can simply construct a bijection between $\R_{\ge 0}$ and $\xset$ by mapping the interval $[i - 1, i]$ to $[0, 1]^{2i}$ (where the first $i$ coordinates specify the nonzero points, and the next $i$ coordinates specify their values) for all $i \in \N$. Finally, it is well-known there is a bijection between $[0, 1]$ and $\R^d$, and we can construct a bijection between $\xset'$ and $\R^d$ by considering each $1$-dimensional projection separately.} Then $\E_{s \sim \calP} g(x; s) = 0$ for all $x \in \xset$, and $g(x; s)^2 \le 1$ for all $x \in \xset, s \in \calS$. However, for any finite set $S \in [0, 1]^n$, we have
\[\sup_{x \in \xset} g(x; S)^2 = 1.\]
\end{proof}

While Lemma~\ref{lem:variancenondecay} does not rule out the approach suggested in \eqref{eq:t123_def} (or other approaches) to improve the analysis of strongly convex ERM solvers in heavy-tailed settings, it presents an obstacle to applying the natural decomposition strategy in \eqref{eq:t123_def}. To overcome Lemma~\ref{lem:variancenondecay}, one must either use more structure about the index set $\xset$ or the iterates encountered by the algorithm, or consider a different decomposition strategy for bounding the squared empirical bias.

\section{Proof of Lemma~\ref{lem:bound_drift}}
\label{sec:apdx-smooth}

In this section, we prove Lemma~\ref{lem:bound_drift}. We first require the
following standard fact (see e.g.\ \cite{schneider2014convex}).

\begin{fact}
\label{fact:truncation_contraction}
Let $\xset \subseteq \R^d$ be a convex set.
Then for any $x,y \in \R^d$, we have
\begin{align*}
    \|\Pi_\xset(x)-\Pi_\xset(y)\|\le \|x-y\|.
\end{align*}
\end{fact}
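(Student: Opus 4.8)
\section*{Proof of Fact~\ref{fact:truncation_contraction}}

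The plan is to invoke the standard variational (obtuse-angle) characterization of Euclidean projections onto convex sets, and then combine two instances of it with the Cauchy--Schwarz inequality. Concretely, for a closed convex set $\xset \subseteq \R^d$ and any $w \in \R^d$, the point $p \defeq \Pi_\xset(w)$ is the unique point of $\xset$ satisfying the first-order optimality condition
\[
\inprod{w - p}{z - p} \le 0 \quad \text{for all } z \in \xset,
\]
which follows because $p$ minimizes the convex function $z \mapsto \half\norm{z - w}^2$ over $\xset$, and a minimizer of a differentiable convex function over a convex set is exactly a point where the negative gradient makes a nonpositive inner product with every feasible direction. (We may assume $\xset$ is closed; in the application $\xset$ is compact so the projection exists and is unique.)

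Next I would apply this characterization twice. Write $p \defeq \Pi_\xset(x)$ and $q \defeq \Pi_\xset(y)$. Since $q \in \xset$, taking $w = x$, $z = q$ gives $\inprod{x - p}{q - p} \le 0$; since $p \in \xset$, taking $w = y$, $z = p$ gives $\inprod{y - q}{p - q} \le 0$. Adding these two inequalities and rearranging yields
\[
\inprod{(x - y) - (p - q)}{p - q} \ge 0,
\quad\text{equivalently}\quad
\inprod{x - y}{p - q} \ge \norm{p - q}^2.
\]
Finally, applying Cauchy--Schwarz to the left-hand side gives $\norm{p - q}^2 \le \norm{x - y}\,\norm{p - q}$, and dividing through by $\norm{p - q}$ (the inequality is trivial if $p = q$) gives $\norm{\Pi_\xset(x) - \Pi_\xset(y)} = \norm{p - q} \le \norm{x - y}$, which is the claimed nonexpansiveness.

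There is no real obstacle here: the only points requiring a word of care are justifying the variational inequality (which is the elementary optimality condition for minimizing $\half\norm{\cdot - w}^2$ over a convex set, valid since this objective is smooth and strongly convex so the minimizer exists and is unique) and handling the degenerate case $p = q$ separately before dividing. Since this is a textbook fact, I would in practice just cite a standard reference such as \cite{schneider2014convex} as the excerpt already does, and include the three-line argument above for completeness.
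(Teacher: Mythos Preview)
Your proof is correct and is precisely the standard argument for nonexpansiveness of the Euclidean projection onto a closed convex set. The paper itself does not give a proof at all: it simply states the fact and cites \cite{schneider2014convex}. So your write-up is strictly more detailed than the paper's treatment, and the three-line variational-inequality plus Cauchy--Schwarz argument you give is exactly what one would find behind that citation.
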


We now set up some notation. Let $\{\psi_j: \xset \to \xset\}_{j\in[T]}$ and $\{\phi_j: \xset \to \xset \}_{j\in[T]}$ be two sequences of operations.
We say that an operation pair $(\psi,\phi)$ is contractive if for any two points $x,y \in \xset$,
\begin{align*}
    \|\psi(x)-\phi(y)\|\le \|x-y\|.
\end{align*}
We say an operation pair $(\psi,\phi)$ is $(C,\zeta)$-contractive if for any $x,y$ where $\|x-y\|\le C$, we have
\begin{align*}
    \|\psi(x)-\phi(y)\|\le \|x-y\|+\zeta.
\end{align*}
Let $\psi^j(x)=\psi_j\circ\psi_{j-1}\circ\ldots\circ\psi_1(x)$, and define $\phi^j$ similarly, for all $j \in [T]$. 

We prove Lemma~\ref{lem:bound_drift} as a consequence of the following more general result.
\begin{lemma}
\label{lm:drift_operation}
Let $x_0=x'_0 \in \xset$, and consider two sequences of operations $\{\psi_j: \xset \to \xset\}_{j\in[T]}$ and $\{\psi_j': \xset \to \xset\}_{j\in[T]}$ satisfying the following conditions, for $c \defeq \lfloor \frac C \zeta \rfloor$.
\begin{enumerate}
    \item  For at least $T-c-1$ indices $j\in[T]$, $(\psi_j,\phi_j)$ is contractive.
    \item  At most one operation pair, $(\psi_k,\psi_k)$, is $(\infty,C)$-contractive.
    \item For at most $c$ indices $j\in[T]$, $(\psi_j,\phi_j)$ is $(2C,\zeta)$-contractive.
\end{enumerate}
Then for all $j\in[T]$, we have that $\|\psi^j(x_0)-\phi^j(y_0)\|\le 2C$.
\end{lemma}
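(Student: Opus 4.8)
The plan is to track the quantity $d_j \defeq \|\psi^j(x_0) - \phi^j(y_0)\|$ as $j$ increases from $0$ to $T$, and show by induction that $d_j \le 2C$ for all $j$. The base case is $d_0 = 0$ since $x_0 = y_0$. For the inductive step at iteration $j$, I would split into cases according to which of the three types the operation pair $(\psi_j,\phi_j)$ belongs to. If $(\psi_j,\phi_j)$ is contractive (type 1), then $d_j \le d_{j-1}$, so the bound is trivially preserved. If $(\psi_j,\phi_j)$ is the single $(\infty, C)$-contractive step (type 2), then $d_j \le d_{j-1} + C$, which is a jump that could in principle break the bound; this is the step that needs the most care. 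If $(\psi_j,\phi_j)$ is one of the (at most $c$) $(2C,\zeta)$-contractive steps (type 3), then \emph{provided} $d_{j-1} \le 2C$ (which the inductive hypothesis supplies), we get $d_j \le d_{j-1} + \zeta$.

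The key bookkeeping observation is that the $(\infty,C)$-contractive step contributes at most $C$ total, and each of the $\le c$ type-3 steps contributes at most $\zeta$, so the cumulative "growth" is at most $C + c\zeta \le C + \lfloor C/\zeta\rfloor \zeta \le 2C$, while the type-1 steps never increase $d_j$. Since type-1 steps are non-expansive, once $d_{j-1} \le 2C$ the value can only grow through the (finitely many) type-2 and type-3 events, and the total growth budget is exactly $2C$. The main subtlety is the ordering: a type-3 step requires its precondition $d_{j-1}\le 2C$, which is why we run the induction — we never invoke $(2C,\zeta)$-contractivity at a point where $d_{j-1}$ might exceed $2C$, because the inductive hypothesis guarantees it doesn't. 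I would formalize this by defining, for each $j$, the "residual budget" $B_j \defeq C\cdot\ind[\text{type-2 step occurs at index} > j] + \zeta\cdot|\{j' > j : (\psi_{j'},\phi_{j'})\text{ is type 3}\}|$, and proving the invariant $d_j + B_j \le 2C$ by induction on $j$: at a type-1 step $d_j \le d_{j-1}$ and $B_j \le B_{j-1}$; at a type-2 step $d_j \le d_{j-1}+C$ and $B_{j-1} = B_j + C$; at a type-3 step $d_j \le d_{j-1} + \zeta$ (valid since $d_{j-1} \le 2C$ by the invariant) and $B_{j-1} = B_j + \zeta$. In all cases $d_j + B_j \le d_{j-1} + B_{j-1} \le 2C$, and since $B_j \ge 0$ we conclude $d_j \le 2C$.

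Finally, I would note that Lemma~\ref{lem:bound_drift} follows by instantiating this with $\psi_j, \phi_j$ the clipped projected-gradient maps $x \mapsto \Pi_\xset(x - \eta_i g_{i,j})$ on the two datasets at iteration $j$ of phase $i$: Fact~\ref{fact:contract_smooth} together with Fact~\ref{fact:truncation_contraction} shows the unclipped steps are contractive (type 1), the single differing-sample iteration $j_0$ gives the $(\infty,C)$-contractive step via the $\le 2C\eta_i$ clip-induced jump (here we need $C \ge 8\omega_i\log(\frac{30n_i}{\delta})$ so the noisy thresholds don't cause spurious clipping, and we absorb the noise into the $\Pi_{C,\hat C}$ operation), and the iterations where $\rmcount$ increments (of which there are $< \hat c_i \le 2c$, matching the hypothesis $192\eta_i\beta c \le 1$ so that each such step is $(2C, \zeta)$-contractive with $\zeta \approx 6C\eta_i\beta \cdot (\text{something})$ and $\lfloor C/\zeta\rfloor \ge c$) give the type-3 steps. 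I expect the main obstacle to be checking these parameter inequalities carefully — in particular verifying that the number of clipping events stays below $c$ and that the per-clip drift $\zeta$ is small enough that $c \le \lfloor C/\zeta\rfloor$ — rather than the abstract induction, which is the clean part.
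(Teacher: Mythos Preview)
Your proposal is correct and follows essentially the same approach as the paper's proof: an induction on $j$ that bounds the drift by the accumulated contributions of the non-contractive steps, using the inductive hypothesis to guarantee the $2C$ precondition each time a $(2C,\zeta)$-contractive step is invoked. The only cosmetic difference is that the paper tracks the ``used'' budget via counters $a_j,b_j$ and proves $\Delta_j \le a_j\zeta + b_j C$, whereas you track the ``remaining'' budget $B_j$ and prove $d_j + B_j \le 2C$; these are dual bookkeeping schemes for the same argument.
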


\begin{proof}
Define $\Delta_j:=\|\psi^j(x_0)-\phi^j(x'_0)\|$ for all $j \in [T]$.
Let $a_j \le c$ be the total number of $(2C,\zeta)$-contractive operation pairs $(\psi_i, \phi_i)$ where $i \le j$, and let $b_j$ be the $0$-$1$ indicator variable for $k \le j$.
We use induction to show that $\Delta_j\le a_j \zeta +b_j  C$.
When $j=1$, the claim holds.
Now if the claim holds for $j - 1$, then $\Delta_{j - 1} \le a_{j - 1} \zeta + b_{j - 1}  C \le 2C$. Hence, by definition,
\begin{align*}
    \Delta_{j}\le \Delta_{j - 1}+ (a_{j}-a_{j - 1})\zeta + (b_{j}-b_{j - 1})C = a_{j}\zeta+b_{j} C,
\end{align*}
which completes our induction. This also implies $\Delta_T \le 2C$ as claimed.
\end{proof}

\begin{proof}[Proof of Lemma~\ref{lem:bound_drift}]
Throughout the following proof, note that $\hat{c}_i \le 2c$ deterministically (due to our use of $\BLap(\frac 3 \eps, c)$ noise), and under the stated parameter bounds,
\[\hat{C} \in \Brack{\frac{7C}{8}, \frac{9C}{8}}\text{ and } |\nu_{i,j}| \le \frac C 4 \text{ for all } j \in [n_i].\]
Let $\{g_{i,j} = \Pi_{C}(\nabla f(x_{i, j}; s_{i, j}))\}_{j \in [n_i]}$ and $\{g_{i,j}' = \Pi_{C}(\nabla f(x'_{i, j}; s'_{i, j}))\}$ be the two truncated gradient sequences in the $i^{\text{th}}$ phase corresponding to the two datasets, and let  $\{x_{i,j}\}_{j \in [n_i]}$ and $\{x'_{i,j}\}_{j \in [n_i]}$ be the corresponding iterate sequences.
We set the operation sequences $\psi_j(x)\defeq \Pi_{\xset}(x-\eta_i g_{i,j})$ and $\phi_j(x)\defeq \Pi_{\xset}(x-\eta_ig_{i,j}')$. We bound the contractivity of these operation pairs and apply Lemma~\ref{lm:drift_operation}.

First, note that because $\rmcount_t$, $\rmcount'_t < \hc_i \le 2c$, the operation pair $(\psi_j, \phi_j)$ is an identical untruncated gradient mapping for at least $t - 2c - 1$ indices $j \in [t]$. Because we assume each sample function $f(\cdot; s)$ is $\beta$-smooth, it follows that for these indices $j \in [t]$, the operation pair $(\psi_j, \phi_j)$ is contractive, by applying Fact~\ref{fact:contract_smooth}, Fact~\ref{fact:truncation_contraction}, and $\eta_i \beta \le 1$.

Next, recall the assumption that the datasets $\calD$, $\calD'$ differ in the $j_0^{\text{th}}$ sample only. Because $\norm{g_{i, j_0}} \le \frac{9C}{8} + \frac C 4 \le \frac {11 C} 8$ by assumption, and similarly $\norms{g'_{i, j_0}} \le \frac {11 C} 8$, it follows that the operation pair $(\psi_{j_0}, \phi_{j_0})$ is $(\infty, 3C\eta_i)$-contractive by applying the triangle inequality and Fact~\ref{fact:truncation_contraction}.

For all remaining indices $j \in [t]$, $\rmcount_t$ and $\rmcount'_t$ both incremented (under the assumption that $Y_{i, j} = Y'_{i, j}$ for these indices). We claim that $(\psi_j, \phi_j)$ is $(6\eta_i C, 12\eta_i^2 C\beta)$-contractive for these iterations. To see this, we bound
\begin{align*}
\norm{\psi_j(x_{i,j}) - \phi_j(x'_{i, j})} &\le \norm{(x_{i, j} - \eta_ig_{i, j}) - (x'_{i, j} - \eta_i g'_{i, j})} \\
&\le \norm{(x_{i, j} - \eta_i \nabla f(x_{i, j}; s_{i, j})) - (x'_{i, j} - \eta_i \nabla f(x'_{i, j}; s_{i, j}))} \\
&+ \eta_i\norm{\nabla f(x_{i, j}; s_{i, j}) - \nabla f(x'_{i, j}; s_{i, j})} + \eta_i \norm{g_{i, j} - g'_{i, j}} \\
&\le \norm{x_{i, j} - x'_{i, j}} + 12\eta_i^2 C\beta.
\end{align*}
The first line used Fact~\ref{fact:truncation_contraction}, the second used the triangle inequality, and the last used Fact~\ref{fact:contract_smooth}, Fact~\ref{fact:truncation_contraction}, and the fact that $\norms{\nabla f(x_{i, j}; s_{i, j}) - \nabla f(x'_{i, j}; s_{i, j})} \le 6\eta_i C\beta$ by smoothness, when $\norms{x_{i, j} - x'_{i, j}} \le 6C\eta_i$. 

Finally, it suffices to apply Lemma~\ref{lm:drift_operation} with $C \gets 3C\eta_i$, $\zeta \gets 12\eta_i^2 C\beta$, and $c \gets 2c$, which we can check meets the conditions of Lemma~\ref{lm:drift_operation} under the stated parameter bounds.
\end{proof}

\end{appendix}
\end{document}